\documentclass[
  aps,
  prx,
  twocolumn,
  superscriptaddress
]{revtex4-2}

\usepackage{microtype}
\usepackage{blindtext}
\usepackage{amsmath}
\usepackage{amsfonts}
\usepackage{amsthm}
\usepackage{physics}
\usepackage{comment}
\usepackage{makecell}
\usepackage{mathtools}
\usepackage{bm}
\usepackage{CJKutf8}

\usepackage[
    colorlinks=true,
    hypertexnames=false
]{hyperref}

\hypersetup{
    allcolors=blue
}

\usepackage{url}
\usepackage[all]{hypcap}

\usepackage[capitalise,compress,nameinlink]{cleveref}
\AddToHook{cmd/appendix/before}{
  \crefalias{section}{appendix}
}
\usepackage{enumitem}   
\usepackage{tikz} 
\usetikzlibrary{arrows, shapes.gates.logic.US, calc}
\usetikzlibrary{shapes}
\usetikzlibrary{plotmarks}
\usetikzlibrary{quantikz}
\usetikzlibrary{patterns}

\DeclareMathOperator{\poly}{poly}

\usepackage{xcolor}
\usepackage{bbm}

\newtheorem{theorem}{Theorem}

\newtheorem{fact}{Fact}
\newtheorem{lemma}{Lemma}
\newtheorem{definition}{Definition}

\crefname{protocol}{Protocol}{Protocols}
\Crefname{protocol}{Protocol}{Protocols}

\newlength{\protowidth}

\usepackage{algpseudocode}
\usepackage{algorithm}

\usepackage{graphicx}

\newcommand{\OC}{\mathcal{O}}

\let\originalleft\left
\let\originalright\right
\renewcommand{\left}{\mathopen{}\mathclose\bgroup\originalleft}
\renewcommand{\right}{\aftergroup\egroup\originalright}

\providecommand{\textcite}[1]{\citet{#1}}

\begin{document}
\title{Ordering-Aware Theory of Trotter Error}
\author{Zhenning~Liu~(\begin{CJK*}{UTF8}{gbsn}刘镇宁\end{CJK*})}
\email{zhenning@umd.edu}
\affiliation{Joint Center for Quantum Information and Computer Science,
NIST/University of Maryland, College Park, MD 20742, USA}
\affiliation{Department of Computer Science,
University of Maryland, College Park, MD 20742, USA}

\author{Zhi-Yuan~Wei~(\begin{CJK*}{UTF8}{gbsn}魏志远\end{CJK*})}
\email{zywei@umd.edu}
\affiliation{Joint Center for Quantum Information and Computer Science,
NIST/University of Maryland, College Park, MD 20742, USA}
\affiliation{Joint Quantum Institute,
NIST/University of Maryland, College Park, MD 20742, USA}

\author{Michael~Foss-Feig}
\affiliation{Quantinuum, 303 S Technology Ct, Broomfield, CO 80021, USA}

\author{Alexey~V.~Gorshkov}
\affiliation{Joint Center for Quantum Information and Computer Science,
NIST/University of Maryland, College Park, MD 20742, USA}
\affiliation{Joint Quantum Institute,
NIST/University of Maryland, College Park, MD 20742, USA}

\author{Andrew~M.~Childs}
\affiliation{Joint Center for Quantum Information and Computer Science,
NIST/University of Maryland, College Park, MD 20742, USA}
\affiliation{Department of Computer Science,
University of Maryland, College Park, MD 20742, USA}
\affiliation{Institute for Advanced Computer Studies,
University of Maryland, College Park, MD 20742, USA}

\begin{abstract}
Product formulas are a basic tool for simulating quantum many-body dynamics, but a fundamental degree of freedom in their application---the ordering of Hamiltonian terms---has remained largely unexplored. Here we systematically study such ordering effects, quantifying their impact on simulation accuracy and exploiting them to design more accurate product-formula simulations. We develop an efficiently computable, ordering-sensitive bound on the finite-step-size
Trotter error for geometrically local Hamiltonians for arbitrarily high-order product formulas. 
For translation-invariant one-dimensional chains, our bounds are close to exact second-, fourth-, and sixth-order Trotter errors in small-system tests, while on large-scale systems
they improve on previous higher-order analytical bounds by orders of magnitude.
We further use the local structure of the error bounds and estimates to optimize ordering efficiently, reducing the long-chain problem to a minimum-mean-weight-cycle problem on a weighted de Bruijn graph. For random translation-invariant chains, sequential (staircase-like) 
orderings consistently outperform standard brickwall circuits; when circuit depth is restricted, the optimizer instead favors wave-like parallel-sequential orderings, which preserve much of the accuracy advantage over brickwall circuits. These results establish ordering as a useful degree of freedom for both understanding and improving product-formula simulation.
\end{abstract}
\maketitle

\section{Introduction}
\label{sec:introduction}

\subsection{Background and motivation}
Simulating the real-time dynamics of quantum many-body systems is a central task in both quantum computation and computational physics
\cite{georgescu2014quantum,RevModPhys.83.863}.
Product formulas approximate an evolution
generated by $H=\sum_{j=1}^n H_j$ using a sequence of evolutions under its
individual terms $H_j$ \cite{trotter1959product,suzuki1990fractal}, an approach also referred to as Trotterization.
The simplicity, locality, and low implementation overhead of product formulas make them one of the most practically relevant methods for digital quantum simulation \cite{Lloyd1996,berry2007efficient,childs2019nearly,childs2021theory}.
The same decompositions are also fundamental to classical simulation of many-body dynamics.
In particular, they underlie time-evolving block decimation (TEBD) and related tensor-network algorithms, in which local evolution operators are successively applied to a matrix product state~\cite{Vidal2004,paeckel2019time}.
Product formulas are thus a common algorithmic primitive shared by quantum and classical simulation.

The error in Trotterization has been studied extensively in terms of the step size, the formula order, and the locality and commutator structure of the Hamiltonian
\cite{trotter1959product,suzuki1990fractal,berry2007efficient,Lloyd1996,childs2019nearly,childs2021theory}.
There is, however, another elementary choice that is required to specify a product-formula simulation: the ordering of the terms of the Hamiltonian, which determines the order in which local evolutions are applied 
[see~\cref{fig_illu}(a)]. 
Different orderings lead to different formulas,
and since the terms generally do not
commute, their Trotter errors need not be the same.
For example, for a one-dimensional chain, the even--odd ordering gives a shallow \emph{brickwall} circuit, whereas ordering the terms along the chain gives a \emph{sequential}, or staircase-like, circuit.
More generally, both geometries fit within a unified family of parallel-sequential circuits~\cite{wei2025state}, interpolating between maximal parallelism and a spatially sequential sweep.
In most existing studies and applications of product formulas, this order is fixed to be brickwall, while the actual effect of ordering remains mostly unexplored.

\begin{figure*}[t!]
  \centering
  \includegraphics[width=0.98\textwidth]{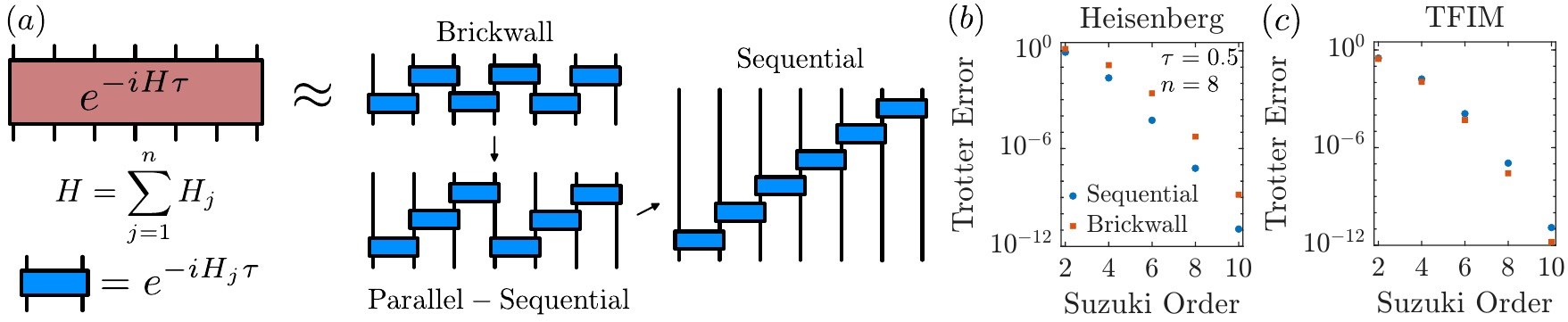}
  \caption{\textbf{Term ordering and Trotter error.} (a) Illustration of different term orderings within a first-order product formula. The same local evolutions \(e^{-iH_j\tau}\) can be arranged into different circuit geometries. Shown are representative examples, including brickwall and sequential circuits, together with an intermediate circuit from the parallel-sequential family~\cite{wei2025state}. (b,c) Trotter errors for the Heisenberg and the transverse-field Ising model,
  respectively, with \(n=8\) and \(\tau=0.5\). The relative accuracy of brickwall and sequential orderings depends on both the Hamiltonian and the Suzuki order.}
  \label{fig_illu}
\end{figure*}

Brickwall and sequential circuits differ structurally in how their gate orderings allow correlations to spread. A shallow nearest-neighbor brickwall circuit has a strict finite-depth light cone, whereas a sequential circuit can propagate correlations across the entire chain in a single sweep. This difference in correlation propagation is reflected, for example, in state preparation: sequential circuits can generate tensor-network states, including topologically ordered states~\cite{Schon2005,Schon2007,zypp,Banuls2008,chen2024sequential,yj_circ}, whose preparation with brickwall circuits requires at least linear depth~\cite{Bravyi2006}. This contrast provides intuition for investigating whether gate ordering might similarly affect the accuracy of product-formula simulation.

For Hamiltonian simulation, \cref{fig_illu}(b,c) gives a direct numerical illustration of the importance of this choice.
For the $8$-term ($9$-qubit) 1D Heisenberg model with step size $0.5$ simulated by a Suzuki formula, the sequential circuit has a smaller Trotter error than the brickwall circuit, and its advantage grows with the formula order, reaching about two orders of magnitude at order 10. For the transverse-field Ising model (TFIM), the trend is reversed: brickwall is more accurate, and its advantage also grows with the formula order. These examples show that ordering can have considerable Hamiltonian-dependent and order-dependent effects on the Trotter error. Also, it is unclear 
whether there might be a common optimal ordering for most Hamiltonians.

These observations raise two related questions that we address in this work:
\begin{description}[style=multiline, leftmargin=2.2cm]
\item[Question 1]
\emph{How can we efficiently estimate or bound the Trotter error of a given local Hamiltonian and product formula as accurately as possible, taking ordering into account?}
\item[Question 2]
\emph{How can we efficiently find the optimal ordering?
}
\end{description}

Previous work has studied certain limited types of ordering optimization:  Ref.~\cite{childs2019faster} conducted leading-order error analysis and investigated the potential benefits of randomized orderings in low-order product formulas;
Ref.~\cite{childs2019nearly} 
established ordering robustness of the asymptotic first-order error scaling for local Hamiltonians;
and Ref.~\cite{tranter2019ordering} studied term ordering for electronic-structure Hamiltonians using heuristic strategies, providing empirical evidence that ordering matters in certain settings.
Ref.~\cite{lane2025ordering} also considered ``ordering", but in a different sense: they studied the performance of alternative recursive decompositions of the Hamiltonian, which affects the formula structure and gate count.
However, we are not aware of systematic studies of ordering effects in Trotterization or the optimization of term ordering for a fixed higher-order product formula.

Existing product-formula analyses focus on the asymptotic Trotter error scaling, so they generally overestimate the higher-order Trotter error by orders of magnitude and pay little attention to ordering effects. For example, the commutator-scaling theory of Ref.~\cite{childs2021theory} gives
bounds of the schematic form
\begin{equation}
\begin{aligned}
 &\norm{e^{-i H\tau}-S^{(p)}(\tau)}\\
 &=\OC\left(\tau^{p+1}\sum_{j_1,\ldots,j_{p+1}}
 \norm{[H_{j_{p+1}},\ldots,[H_{j_2},H_{j_1}]\dots]}\right),
\end{aligned}
\label{eq:intro_standard_bound}
\end{equation}
where $S^{(p)}(\tau)$ is a $p$th-order product formula and $\tau\geq0$ denotes the step size.
The sum captures locality and commutation, but is unchanged by permuting the Hamiltonian terms. Moreover, bounding individual nested commutators separately can discard substantial cancellation. Also, \cref{eq:intro_standard_bound} only specifies the error scaling, while a quantitative numerical bound requires its prefactor.

More involved Trotter error analysis for given Hamiltonians and product formulas has a long history in the operator-splitting literature~\cite{huyghebaert1990product,jahnke2000error,thalhammer2008high,descombes2010exact,auzinger2013defect}.
In particular, exact local-error representations provide a starting point for bounds that do not rely on truncating a Taylor or Baker--Campbell--Hausdorff (BCH) series.
Ref.~\cite{childs2019nearly} used such a representation to give a bound for geometrically local systems.
Small-prefactor commutator bounds postpone some triangle inequalities, giving some of the tightest known bounds:
Proposition 16 of Ref.~\cite{childs2021theory} gives a slightly ordering-aware second-order bound; Appendix M of Ref.~\cite{childs2021theory} gives a tighter fourth-order bound; and Theorem 1 of \textcite{schubert2023trotter} applies to arbitrarily high orders and also gives different bounds for different orderings.
Nevertheless, a substantial gap between these bounds and numerical performance remains, especially at higher order. 
This gap is evident in the higher-order resource estimates of Refs.~\cite{childs2018toward,childs2019nearly,childs2021theory}; for example, the fourth-order bounds in Ref.~\cite{childs2021theory} predict about $5$--$7$ times as many Trotter steps as required numerically for a ten-qubit Heisenberg chain, corresponding to several orders of magnitude of overestimation in single-step Trotter error.
In a direct comparison of Trotter errors for a four-site Fermi--Hubbard model, Ref.~\cite{schubert2023trotter}
also found that their bounds overestimate the error by about one order of magnitude at second order and three orders of magnitude at fourth order.
Recent computational approaches improve on purely analytical estimates: Ref.~\cite{blunt2025monte} obtained tight commutator-norm bounds for second-order product formulas using projector Monte Carlo, while Ref.~\cite{maxwell2026practical} developed practical higher-order error estimates from compressed BCH expressions and importance sampling.
However, the latter estimates describe asymptotic error rather than a full finite-step operator-norm upper bound. In summary, obtaining practically tight, efficiently evaluable finite-step bounds for higher-order product formulas has been a longstanding open problem.

\subsection{Summary of results}
\subsubsection{Ordering-aware Trotter error bound}
Our first contribution is an ordering-aware
Trotter error bound for arbitrarily high-order product formulas, applicable to geometrically local Hamiltonians. For simplicity, we focus on 1D nearest-neighbor systems, but in principle our results can be generalized to higher constant dimensions, as we discuss in \cref{sec:discuss}.
Numerical tests show that our bound improves over the best previous bounds by several orders of magnitude for all examples tested for product formulas of order $p>2$ and only overestimates the actual Trotter error by roughly $40\%-50\%$, suggesting it is nearly optimal. 
The key idea is to organize error contributions by their spatial support rather than their Taylor order. An error term can extend over a large region only through a sequence of neighboring interactions or local gates, which suppresses contributions with large support.
It is therefore natural to compute the local contributions numerically, keeping all orders in time and their cancellations, and analytically bound the small nonlocal tails. 

The starting point of the ordering-aware
theory of Trotter error is a bound on the leading-order error.
For a $p$th-order formula applied to a nearest-neighbor Hamiltonian, this error consists of nested commutators of $p+1$ connected Hamiltonian terms.
We can therefore collect the leading-order contributions into local error blocks, each involving at most $p+1$ consecutive terms.
Within each block, we perform exact computation of the error contributions as matrices before taking a norm, keeping cancellations that would be lost by bounding nested commutators separately.
Summing these block norms gives an ordering-aware
leading-order bound.
This construction makes the spatial structure of the error explicit and suggests a way to go beyond leading order.

To obtain a bound on the full finite-time error, we use a similar idea of local evaluation. The strategy is to add the Hamiltonian terms one at a time and compare, at each addition, the changes in the exact and product-formula evolutions induced by the newly added term.
We refer to such a change as an \emph{increment}. Each increment is evaluated on a finite spatial window at the actual step size, so the calculation is efficient and includes all orders in time and the cancellations within that window.
Only the influence of the terms outside the window is bounded analytically, through local contributions at the boundary. 
The approximation is therefore in the spatial extent of the calculation, not in the Taylor order of the local error.
We further tighten the bound in two complementary ways: by grouping neighboring increments before taking a norm and by enlarging the exact-evaluation window. The former captures cancellation that would otherwise be lost when increments are bounded separately, while the latter moves more of the dynamics from the nonlocal correction into the exact local calculation. Varying the amount of grouping and the window size yields a family of efficiently computable bounds that trade classical computational cost for tighter error estimates.

We validate the effectiveness of our new Trotter error bounds by extensive numerical tests. For $500$ random translation-invariant $10$-qubit Hamiltonians, with normalized local terms and step size $0.1$, our fourth-order bound only overestimates the error by roughly $40\%$ on average for both sequential and brickwall circuits and correctly predicts which circuit performs better with $495/500$ success rate.
In comparison, for sequential circuits, Theorem 1 of Ref.~\cite{schubert2023trotter} and Appendix M of Ref.~\cite{childs2021theory} overestimate the error by three orders of magnitude and are unable to distinguish the two permutations.
At sixth order, we consider several larger systems of up to $256$ qubits and demonstrate that our Trotter error bounds are
more than six orders of magnitude tighter than previous works, leading to nearly optimal Trotter error bounds.

\subsubsection{Ordering optimization}
Our second contribution is an efficient graph-theoretic algorithm for optimizing the ordering on a translation-invariant chain. 
We also impose circuit-depth restrictions, which allow us to optimize the accuracy while maintaining a prescribed amount of parallelism.
We numerically test our algorithms and find that sequential or parallel-sequential circuits are generally optimal for random translation-invariant Hamiltonians.

For the ordering problem, an accurate bound alone does not make a direct search over permutations feasible, because a Hamiltonian with $n$ terms has $n!$ permutations.
On a nearest-neighbor chain, however, only the relative ordering of neighboring terms is needed to specify a product formula up to exchanges of commuting gates.
We show that $n!$ permutations are reduced to 
$2^{n-1}$ equivalence classes, indexed by a binary string $c=(c_1,\ldots,c_{n-1})$.
We then use the local structure of the leading-order error estimate introduced above to further reduce the search cost to $\poly(n)$: the leading-order bound for translation-invariant Hamiltonians is a sum of functions $f_{H,p}$ depending on $p$-bit substrings of $c$, 
\begin{equation}
 \text{leading-order error} \leq
 \tau^{p+1}\sum_j f_{H,p}(c_j,\ldots,c_{j+p-1}),
 \label{eq:intro_local_bound}
\end{equation}
and each $f_{H,p}$ is obtained from a calculation of cost $2^{\OC(p)}$.
Now, the bound corresponding to a long ordering can be built from a fixed set of local patterns.
We find a weighted \emph{de Bruijn graph} representation for the leading-order: local substrings are edges, and longer strings are walks.
We then prove that finding the permutation minimizing the leading-order error bound for asymptotically large systems reduces to finding a minimum-mean-cycle in the graph, which is efficiently solvable by Ref.~\cite{karp1978characterization} also in time $2^{\OC(p)}$, independent of $n$.
This representation is versatile: we can impose a depth restriction by excluding patterns with long sequential runs.

Our numerical tests show that this method selects a sequential ordering for almost all Hamiltonians, 
and that the selected circuits have errors close to the optimum over all permutations on the small chains that we can exhaustively simulate.
Under a depth restriction, the method predominantly selects wave-like circuits from the parallel-sequential family discussed above. These circuits preserve short sequential sweeps while allowing different parts of the chain to evolve in parallel.

The graph-based approach also extends to optimization of the full finite-time Trotter error bound. Each edge then represents an ordering of the terms in a local window and at its boundary. The weight represents the corresponding local error contribution plus the nonlocal correction. The graph for the full bound contains more vertices than the leading-order case. Nevertheless, in either case, the graph size is set by the range of the local calculation rather than by the total chain length.

The remainder of the paper is organized as follows. We introduce the necessary preliminaries in \cref{sec:preliminaries}, derive the local leading-order estimate in \cref{sec:leading}, and give the full Trotter error bound in \cref{sec:geometry}. We discuss a graph-theoretic representation of error estimates and an efficient algorithm to find the optimal ordering in \cref{sec:optimization}. In \cref{sec:numerics}, we present comprehensive numerical results to test our leading-order error estimates, geometry-aware error bounds, and graph theoretic permutation optimization.
Finally, in \cref{sec:nondegenerate}, we provide a rigorous proof for the advantage of sequential ordering in the nondegenerate part of the leading-order error.

\section{Preliminaries}
\label{sec:preliminaries}
\subsection{Permutations and local ordering patterns}
Let $S_n$ denote the symmetric group of $n$ items. For every $\pi \in S_n$, we denote
$\pi(j) \coloneqq \pi(1,2,\dots ,n)_j$.
\begin{definition}[Consecutive-label ordering pattern]
\label{def:cons-label}
    For each $\pi \in S_{n}$, its \emph{consecutive-label ordering pattern} (or simply \emph{ordering pattern}) $c(\pi)\in \{0,1\}^{n-1}$ satisfies
\begin{equation}
    c(\pi)_k \coloneqq 
    \begin{cases}
1 & \text{if $k$ appears before  $k+1$  in  $\pi(1,2,\dots,n)$,}  \\
0 & \text{otherwise}.
\end{cases}
\end{equation}
\end{definition}
For instance, when $n=5$, if $\pi(1,2,3,4,5)=(5,1,2,4,3)$, then $c(\pi) = 1100$, because $1$ appears before $2$, $2$ appears before $3$, $3$ appears after $4$, and $4$ appears after $5$. See \cref{fig:permutations}(a) for an illustration of this example.

\begin{definition}[Relative order of $k,k+1,\dots,k+l$]
\label{def:relativeorder}
    Given $k,l\in\mathbb{Z}$ satisfying $1\leq k\leq k+l \leq n $, we let $\sigma_{\pi,k,l}$ denote the permutation of $k,k+1,\dots,k+l$ within $\pi\in S_n$. In other words, $j\in\{k,k+1,\dots,k+l\}$ appears before $j'\in \{k,k+1,\dots,k+l\}$ in $\pi(1,2,\dots,n)$ if and only if $j$ appears before $j'$ in $\sigma_{\pi,k,l}(k,k+1,\dots,k+l)$.
\end{definition}

\begin{lemma}[Consecutive-label ordering of $k,k+1,\dots,k+l$]
\label{lem:relative_order}
The consecutive-label ordering pattern of the relative ordering of $k,k+1,\dots,k+l$ within $\pi$ is the substring of $c(\pi)$ from $k$ to $k+l-1$, i.e.,
    \begin{equation}
        c(\sigma_{\pi,k,l}) = \left(c(\pi)_k, c(\pi)_{k+1}, \dots, c(\pi)_{k+l-1} \right).
    \end{equation}
\end{lemma}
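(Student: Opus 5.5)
The plan is to unwind both definitions and compare the two strings entry by entry. First I need to fix the indexing convention. The permutation $\sigma_{\pi,k,l}$ acts on the label set $\{k,k+1,\dots,k+l\}$, so its consecutive-label ordering pattern should be read with labels shifted. Concretely, apply \cref{def:cons-label} to $\sigma_{\pi,k,l}$ viewed as an element of $S_{l+1}$ via the order-preserving relabeling $k+i-1\mapsto i$. This gives a string of length $l$, whose $i$th entry ($1\le i\le l$) records whether $k+i-1$ appears before $k+i$ in $\sigma_{\pi,k,l}(k,\dots,k+l)$. The relabeling is increasing, so it preserves which labels are consecutive, and the pattern is unaffected by the shift.

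With this convention, the key step is a direct comparison. Fix $i\in\{1,\dots,l\}$ and set $j=k+i-1$, so that $k\le j<j+1\le k+l$.
\begin{itemize}
\item By \cref{def:cons-label}, $c(\sigma_{\pi,k,l})_i=1$ if and only if $j$ appears before $j+1$ in $\sigma_{\pi,k,l}(k,\dots,k+l)$.
\item Both $j$ and $j+1$ lie in $\{k,\dots,k+l\}$. So by \cref{def:relativeorder}, $j$ appears before $j+1$ in $\sigma_{\pi,k,l}(k,\dots,k+l)$ if and only if it does so in $\pi(1,\dots,n)$.
\item By \cref{def:cons-label} again, the latter holds if and only if $c(\pi)_j=1$.
\end{itemize}
Since both strings are binary, equality of the ``$=1$'' conditions gives $c(\sigma_{\pi,k,l})_i=c(\pi)_{k+i-1}$ for all $i$. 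This is exactly the claimed identity with the substring $(c(\pi)_k,\dots,c(\pi)_{k+l-1})$, which has the matching length $l$.

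There is no genuine obstacle here. The only point needing care is the bookkeeping of the relabeling and the index shift. It must be made explicit that the pattern of $\sigma_{\pi,k,l}$ is indexed so that its first entry compares $k$ with $k+1$. Otherwise the identity would appear off by $k-1$.

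The degenerate case $l=0$ should also be noted. There both sides are the empty string, so the statement holds trivially.
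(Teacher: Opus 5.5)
Your proof is correct. It is the definition-chasing argument the paper leaves implicit: the paper gives no written proof, only the illustration in \cref{fig:permutations}(b). Your explicit handling of the index shift, via the order-preserving relabeling $k+i-1\mapsto i$, is the one point the paper glosses over.
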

See \cref{fig:permutations}(b) for an illustration of how to read out relative order from the consecutive-label order.

\begin{definition}[$m$-translation-invariant permutations]
\label{def:tipermutation}
$\pi\in S_n$ is $m$-translation-invariant if $c(\pi)_j = c(\pi)_{j+m}$ for all $j$ such that both indices lie in $\{1,\ldots,n-1\}$.
\end{definition}
See \cref{fig:permutations}(c) for an example of a $3$-TI permutation.

\begin{figure}[t!]
    \centering
    \includegraphics[width=\linewidth]{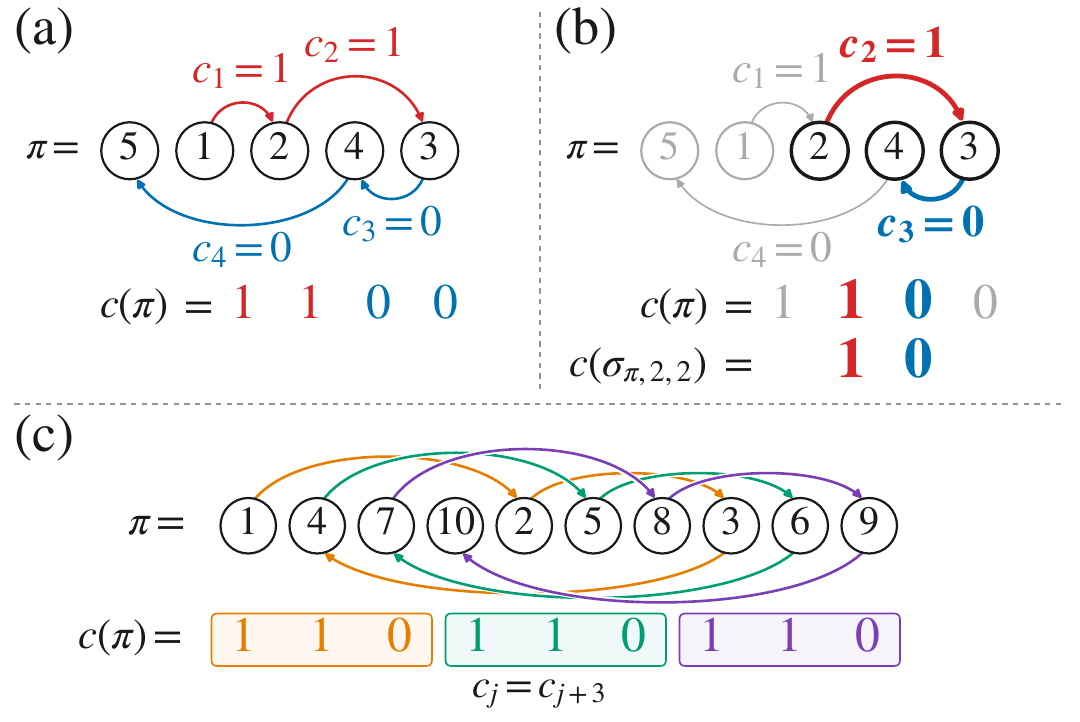}
    \caption{Illustration of the consecutive-label ordering pattern and its properties. Plot (a) shows how to compute $c(\pi)$, according to \cref{def:cons-label}; plot (b) shows how to read out the relative ordering between $2,3,4$ from $c(\pi)$, according to \cref{def:relativeorder} and \cref{lem:relative_order}; plot (c) shows a typical translation-invariant permutation with $m=3$, according to \cref{def:tipermutation}.}
    \label{fig:permutations}
\end{figure}

We also define the indicator function $\mathbf{1}_S$ for a set S as
\begin{equation}
\mathbf{1}_{S}
=
\begin{cases}
1, & x\in S,\\
0, & x\notin S.
\end{cases}    
\end{equation}

\subsection{Problem setup}
In this work, we first focus on 1D translation-invariant nearest-neighbor Hamiltonians with open boundary conditions. We study the worst-case performance, quantified using the spectral norm $\norm{\cdot}$.
We use $n$ to denote the number of local terms, so the chain has $n+1$ qubits.
\begin{definition}
    A Hamiltonian $H=\sum_j H_j$ is translation-invariant if there exists a 2-qubit Hermitian operator $h$ such that $H_j = I^{\otimes j-1}\otimes h \otimes I^{\otimes n-j}$.
\end{definition}
Note that $[H_j,H_{j+k}]=0$ for all $k\geq 2$ with $j+k\leq n$. 
For further simplicity, we also define the notations 
\begin{equation}
    H_{[j,k]}\coloneqq \sum_{i=j}^k H_i.
\end{equation}

We distinguish the \emph{term support} of a nested-commutator expression, meaning the set of labels of the $H_j$ appearing in it, from its \emph{qubit support}, on which the resulting operator acts nontrivially. The terms $H_j,\ldots,H_k$ act within qubits $j,\ldots,k+1$.

Throughout this paper, we use the forward product convention:
\begin{equation}
    \prod_{j=1}^k H_j \coloneqq H_1 H_2 \dots H_k.
\end{equation}
A general product formula with real coefficients containing $M$ local evolutions has the form
\begin{equation}
 S^{(p)}(\tau)=\prod_{\ell=1}^{M}e^{-i\delta_\ell H_{j_\ell}},
 \qquad \delta_\ell\coloneqq\alpha_\ell\tau.
 \label{eq:gate_times}
\end{equation}
Here $\tau$ is the physical step size, $\alpha_\ell$ is a fixed real coefficient, and $\delta_\ell$ is the signed duration of an individual gate. In particular, formulas can have negative $\delta_\ell$ (indeed, this is necessary for formulas of order greater than $2$ \cite{suzuki1991general}).
Note that $j_\ell$ and $\delta_\ell$ are not arbitrary assignments. Instead, they are a well-defined assignments for any certain $p$th-order product formula construction, such as the Suzuki-Trotter formula~\cite{suzuki1990fractal}.

\begin{definition}[Suzuki-Trotter formulas]
\label{def:suzuki}
    For $\tau>0$, 
    the second-order 
    Suzuki-Trotter product formula is
\begin{equation}
\label{eq:2nd_suzuki_0}
    S^{(2)}(\tau) = \prod_{j=1}^n e^{-iH_{j} \tau/2} \prod_{j=n}^1 e^{-iH_{j} \tau/2}.
\end{equation}
For $p\geq 4$ where $p$ is even, the $p$th-order Suzuki-Trotter product formula is defined recursively as
\begin{equation}
\label{eq:recursive_suzuki_0}
\begin{aligned}
    S^{(p)} (\tau) &= S^{(p-2)}(\mu_p\tau)   S^{(p-2)}(\mu_p\tau) \\
   & S^{(p-2)}  ((1-4\mu_p)\tau) S^{(p-2)}(\mu_p\tau) S^{(p-2)}(\mu_p\tau) 
\end{aligned}
\end{equation}
where
$\mu_p = (4 - 4^{1/(p-1)})^{-1}$.
\end{definition}

One can obtain the $\pi$-permuted Suzuki formula by permuting the gates in a second-order formula and to generating high-order versions recursively.
\begin{definition}[$\pi$-permuted Suzuki-Trotter formulas]
\label{def:permuted_pf}
For $\tau>0$ and $\pi \in S_n$, the permuted version of the $2$nd-order Suzuki formula $S^{(p)}$ is
\begin{equation}
\label{eq:2nd_suzuki}
    S^{(2)}(\tau) = \prod_{j=1}^n e^{-iH_{\pi(j)} \tau/2} \prod_{j=n}^1 e^{-iH_{\pi(j)} \tau/2}.
\end{equation}
For $p\geq 4$ where $p$ is even, the $p$th-order Suzuki-Trotter product formula is defined recursively as
\begin{equation}
\label{eq:recursive_suzuki}
\begin{aligned}
    S^{(p)} (\tau) &= S^{(p-2)}(\mu_p\tau)   S^{(p-2)}(\mu_p\tau) \\
   & S^{(p-2)}  ((1-4\mu_p)\tau) S^{(p-2)}(\mu_p\tau) S^{(p-2)}(\mu_p\tau) 
\end{aligned}
\end{equation}
where
$\mu_p = (4 - 4^{1/(p-1)})^{-1}$.
\end{definition}

Note that the permuted version can be defined generally for any product formulas $S^{(p)}(\tau)$ as
\begin{equation}
 S_\pi^{(p)}(\tau)=\prod_{\ell=1}^{M}e^{-i\delta_\ell H_{\pi(j_\ell)}}.
 \label{eq:permuted_PF_general}
\end{equation}
We define the additive Trotter error as
\begin{equation}
\label{eq:additive_error}
    \mathcal{A}_{p,\pi} \coloneqq e^{-i H \tau} - S_\pi^{(p)}(\tau).
\end{equation}
Since different permutations $\pi$ generally give different product formulas $S_\pi^{(p)}$, they also yield different additive Trotter errors.

The ordering-aware Trotter error bound developed in this work is applicable to any product formula. However, we only investigate the ordering optimization algorithm for Suzuki product formulas. In the remainder of this section, we introduce technical tools and results only applicable to Suzuki formulas.

Since non-neighboring $H_j$ terms commute in nearest-neighbor chains, many permutations are equivalent in terms of their corresponding Suzuki formulas.
For instance, if $|\pi(j)-\pi(j+1)|>1$, swapping $\pi(j)$ and $\pi(j+1)$ yields a $\pi'$ satisfying $S_{\pi'}^{(p)}=S_\pi^{(p)}$ for all Suzuki-Trotter formulas. This leads to \cref{lem:consec_label}.
\begin{lemma}
\label{lem:consec_label}
    If $c(\pi) = c(\pi')$, then $\prod_{j=1}^{n} \exp(iH_{\pi(j)}\delta_{\pi(j)}) = \prod_{j=1}^{n} \exp(iH_{\pi'(j)} \delta_{\pi(j)})$, 
    and $S_\pi^{(p)}(\tau) = S_{\pi'}^{(p)}(\tau)$ for Suzuki-Trotter formula $S_\pi^{(p)}(\tau)$. 
\end{lemma}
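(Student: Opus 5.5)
The plan is to show that $c(\pi)=c(\pi')$ forces the two orderings to be connected by a chain of adjacent transpositions of labels that differ by at least $2$. Every such transposition swaps two commuting exponentials, so it leaves the product unchanged. The statement is read with the time attached to each label, so each factor is $\exp(iH_{k}\delta_k)$ and the durations travel with their labels; for Suzuki formulas all durations in one sweep are equal, so this is automatic there.

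\textbf{Step 1: $c$ determines the ordering up to commuting swaps.} View $\pi(1,\dots,n)$ as a linear extension of a partial order on $\{1,\dots,n\}$. Its only constraints are $k\prec k+1$ when $c_k=1$ and $k+1\prec k$ when $c_k=0$. This relation is the orientation of a path graph, so it is acyclic, and every $\pi$ with the given $c(\pi)$ is a linear extension of it. Conversely, any linear extension has exactly this pattern. I will show that any two linear extensions $\pi,\pi'$ are connected by adjacent swaps of incomparable elements, arguing by induction on $n$, or on the number of inversions between $\pi$ and $\pi'$.

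Let $x=\pi'(1)$. Then $x$ is minimal in the partial order. Move $x$ leftward in $\pi$ by adjacent swaps. Each element $y$ that $x$ passes sits before $x$ in $\pi$, but $x$ is minimal, so $y\not\succ x$ cannot fail and $y\prec x$ is impossible; the only labels comparable to $x$ are $x\pm1$, and those must come after $x$ in any extension. Hence $|y-x|\ge 2$ for every $y$ passed. After this, both orderings start with $x$, and the induction applies to the remaining labels. Their pattern is given by \cref{lem:relative_order}-style restriction: deleting $x$ leaves a disjoint union of at most two path segments, and the argument above does not actually need connectivity.

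\textbf{Step 2: each swap preserves the product.} Adjacent factors $e^{iH_a\delta_a}e^{iH_b\delta_b}$ with $|a-b|\ge2$ commute, because $[H_a,H_b]=0$ for non-neighboring terms. So every swap in the chain from Step 1 leaves the ordered product unchanged, which gives the first claim.

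\textbf{Step 3: lift to Suzuki formulas.} $S^{(2)}_\pi(\tau)$ is a forward product of $\tau/2$ exponentials in the order $\pi$ followed by the reversed product. Step 1 applies to the reversed ordering too, since it has the complementary pattern and two permutations with equal patterns have equal reversed patterns. So both halves agree and $S^{(2)}_\pi=S^{(2)}_{\pi'}$ for every $\tau$. The recursion in \cref{eq:recursive_suzuki} builds $S^{(p)}$ only from $S^{(p-2)}$ evaluated at rescaled times, so induction on $p$ gives $S^{(p)}_\pi=S^{(p)}_{\pi'}$.

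The main obstacle is Step 1: making the reachability argument clean, and in particular checking that the label moved to the front never passes one of its neighbors. The minimality argument above handles this, so the rest of the proof is routine.
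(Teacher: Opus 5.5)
Your argument is correct and takes the same route as the paper. The paper only remarks that swapping adjacent gates whose labels differ by more than one leaves $S^{(p)}_\pi$ unchanged, and asserts that this yields the lemma; your Step 1 supplies the reachability argument it leaves implicit. Your reading of the durations as attached to labels is also the right one: with $\delta_{\pi(j)}$ on both sides, as printed, the first identity fails in general unless the durations coincide.

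One small correction: in the transitive closure, labels other than $x\pm1$ can be comparable to $x$ (e.g.\ $x\prec x+2$ when $c_x=c_{x+1}=1$). This does not affect the proof, since you only need that $x\pm1$ come after $x$ in $\pi$, which follows at once from $c(\pi)=c(\pi')$ and $x=\pi'(1)$.
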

This lemma reduces $n!$ permuted Suzuki formulas to at most $2^{n-1}$ equivalence classes because $c$ is a $(n-1)$-bit string; additional commutation relations can make different patterns equivalent.
Henceforth, we use the consecutive-label ordering pattern $c(\pi)$ rather than $\pi$ to label the permuted Suzuki product formulas. We let $S_c^{(p)}(\tau)$ denote the $p$th-order Suzuki-Trotter formula corresponding to $c\in \{0,1\}^{n-1}$.
\begin{definition}
For Suzuki formulas, $S_c^{(p)}(\tau) = S_\pi^{(p)}(\tau)$ for all $\pi\in S_n$ satisfying $c(\pi)=c$.
\end{definition}

For the same reason, we use $\mathcal{A}_{p,c}$ instead of $\mathcal{A}_{p,\pi}$ to denote the corresponding Trotter error of a Suzuki formula.

\section{Ordering-aware leading-order error bound}
\label{sec:leading}
In this section, we develop an efficient strategy to estimate the leading-order Trotter error 
as a starting point towards full Trotter error bounds for arbitrary higher-order product formulas [cf.~\cref{sec:geometry}]. For simplicity and consistency with \cref{sec:optimization} which discusses ordering optimization, we focus on Suzuki-Trotter product formulas in this section.

We first show that, for $p$th-order Suzuki formulas, the leading-order contribution to $\mathcal{A}_{p,c}$, denoted by $\tilde{\mathcal{A}}_{p,c}$, is the sum of nested commutators of $p+1$ $H_j$ terms.
\begin{lemma}
\label{lem:everythingisNC}
    $\tilde{\mathcal{A}}_{p,c}$ is a linear combination of nested commutators of $p+1$ Hamiltonian terms.
\end{lemma}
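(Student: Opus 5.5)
We need to show that the order-$\tau^{p+1}$ part of $\mathcal{A}_{p,c}=e^{-iH\tau}-S_c^{(p)}(\tau)$ is a linear combination of nested commutators $[H_{j_{p+1}},[\dots,[H_{j_2},H_{j_1}]\dots]]$. The approach is to write both exponentials through the Baker--Campbell--Hausdorff (BCH) formula. The key structural input is that the Suzuki formula has order $p$, which forces every term of degree at most $p$ to cancel.

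First, apply the BCH formula repeatedly to the finite product $S_c^{(p)}(\tau)=\prod_{\ell=1}^M e^{-i\delta_\ell H_{j_\ell}}$. As a formal power series in $\tau$, this gives
\begin{equation}
S_c^{(p)}(\tau)=\exp\Bigl(-i\tau\sum_{\ell}\alpha_\ell H_{j_\ell}+\sum_{q\ge 2}\tau^q Z_q\Bigr),
\end{equation}
where each $Z_q$ is a Lie polynomial, homogeneous of degree $q$, in the $H_j$. By Dynkin's form of BCH, each $Z_q$ is a linear combination of iterated commutators of length $q$. Using the Jacobi identity, or equivalently the standard fact that the free Lie algebra is spanned by right-normed brackets, each $Z_q$ can be rewritten as a linear combination of right-nested commutators $[H_{j_q},[\dots,[H_{j_2},H_{j_1}]\dots]]$.

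For Suzuki formulas the gate coefficients satisfy $\sum_{\ell: j_\ell=j}\alpha_\ell=1$ for every $j$, so the degree-one term is exactly $-i\tau H$. Thus $S_c^{(p)}(\tau)=e^{-i\tau H+E(\tau)}$ with $E(\tau)=\sum_{q\ge2}\tau^qZ_q$.

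Next, use the order condition. The formula $S_c^{(p)}$ is a $p$th-order formula, and this holds for every permutation, since the recursive symmetric Suzuki construction works for any ordering of the second-order base. Hence $e^{-i\tau H}-S_c^{(p)}(\tau)=\OC(\tau^{p+1})$, which gives $Z_q=0$ for $2\le q\le p$. If instead one only knows the order condition on the product itself, I would argue by induction on $q$: let $q_0$ be the smallest index with $Z_{q_0}\neq0$. Comparing Taylor coefficients of $e^{-i\tau H+E}$ and $e^{-i\tau H}$ at order $\tau^{q_0}$ shows the difference there is exactly $Z_{q_0}$. So $q_0\ge p+1$.

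Then expand $e^{-i\tau H+E}-e^{-i\tau H}$ to first order in $E$ using Duhamel's formula. The lowest-order surviving term is $\tau^{p+1}Z_{p+1}$, because every other contribution carries at least one extra power of $\tau$ from $H$ or from $E$. Therefore
\begin{equation}
\tilde{\mathcal{A}}_{p,c}=-\tau^{p+1}Z_{p+1},
\end{equation}
which is a linear combination of nested commutators of $p+1$ Hamiltonian terms.

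The main obstacle is bookkeeping rather than concept. One must verify that the lowest nonvanishing BCH coefficient equals the lowest-order error, which is the induction above. One must also justify that arbitrary Lie monomials reduce to right-nested commutators. For the latter I would cite Dynkin's formula, or the fact that the Dynkin--Specht--Wever map $[x_1,\dots,x_q]\mapsto$ right-nested bracket is surjective onto degree-$q$ Lie elements.

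If one wishes to avoid the formal-series setting, an alternative is the variation-of-parameters representation used later in the paper. Differentiating $e^{i\tau H}S_c^{(p)}(\tau)$ and expanding conjugations $e^{i\delta H_j}(\cdot)e^{-i\delta H_j}$ as series of $\mathrm{ad}_{H_j}$ produces nested commutators directly, and the order condition again kills all degrees at most $p$.
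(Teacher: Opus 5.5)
Your proposal is correct and follows the same route as the paper. Like the paper, you write $S_c^{(p)}(\tau)$ as a single exponential via BCH, use the $p$th-order condition to force the degree-$2,\dots,p$ Lie-polynomial terms in the exponent to vanish, and read off the leading error as the degree-$(p+1)$ term. Your minimal-index induction, Duhamel expansion, and right-normed-bracket remark just make explicit the Taylor-expansion step that the paper states in one line.
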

\begin{proof}
    The Baker–Campbell–Hausdorff formula says that for two anti-Hermitian operators $A,B$, there exists an anti-Hermitian operator $C$ such that $e^{A}e^{B} = e^{C}$ with 
    \begin{equation}
        C=A+B+\frac{1}{2}[A,B]+\sum_{k=3}^\infty L_k(A,B), 
    \end{equation}
    where $L_k(A,B)$ is the $k$th-order homogeneous Lie polynomial, containing $k$th-order nested commutators.
    Recall from \cref{eq:gate_times} that $S_c^{(p)}$ is a product of gates $e^{-iH_{j_\ell}\delta_\ell}$, where $\delta_\ell=\alpha_\ell\tau$ is the signed gate duration. 
    Using the BCH formula, we can write $S_c^{(p)} =e^{-iJ}$
    with $J = \tau H + \tau^2 O_2 + \tau^3 O_3 + \cdots$, where $O_k$ is a linear combination of nested commutators of $k$ terms.
    Since $S_c^{(p)}$ is a $p$th-order product formula, $e^{-iJ} - e^{-iH\tau}$ does not contain any term of degree at most $p$ in $\tau$. Taylor expanding $e^{-iJ}$ shows that the leading-order error in $J$ must be a linear combination of nested commutators of $p+1$ terms. Therefore $J = \tau H + \tau^{p+1} O_{p+1} + \OC(\tau^{p+2})$, so
    \begin{align}
        \mathcal{A}_{p,c} &=e^{-iH\tau}-e^{-iJ} \nonumber\\
        &=  e^{-iH\tau} - \sum_{j=0}^\infty \frac{\left[-i (\tau H +\tau^{p+1} O_{p+1} + \OC(\tau^{p+2}))\right]^j}{j!} \nonumber\\
        &= e^{-iH\tau} - \sum_{j=0}^\infty \frac{(-i \tau H)^j}{j!} +i\tau^{p+1} O_{p+1} + \OC(\tau^{p+2})\nonumber\\
        &= i\tau^{p+1}O_{p+1} + \OC(\tau^{p+2}).
    \end{align}
\end{proof}

Let us say that $H_j, H_k$ are \emph{connected} if $|j-k|=1$, and that a subset of terms is connected if for any two terms in the subset, there exists a sequence of terms in the subset connecting them.
For a 1D nearest-neighbor Hamiltonian, we can first divide all possible $(p+1)$-term nested commutators into the following three types and analyze their properties.
\begin{itemize}
    \item \emph{Disconnected}: $p+1$ terms are not connected. All such nested commutators vanish, because when evaluating the nested commutator from the inside to the outside, there must exist at least one step where the LHS and the RHS of the commutator have non-overlapping supports. Example: $[H_2,[H_7,[H_5,[H_3, H_4]]]]$ vanishes when $H_7$ is added.
    \item  \emph{Degenerate}: $p+1$ terms are connected, and at least one term appears more than once in the nested commutator. Example: $[H_2,[H_1,[H_3, [H_2,H_1]]]]$. 
    \item \emph{Nondegenerate}: $p+1$ terms are connected, and every term appears only once in the nested commutator. Example: $[H_5,[H_4,[H_1, [H_3, H_2]]]]$
\end{itemize}

Note that for general many-body Hamiltonians, simple commutator counting suggests that nondegenerate terms dominate the leading-order expansion, and previous work on randomized orderings~\cite{childs2019faster} exploited the cancellation of nondegenerate contributions under randomization.
However, in geometrically local systems,
the numbers of both degenerate and nondegenerate commutators are strongly restricted. Therefore, in this work, we take both types into consideration to construct an ordering-aware leading-order Trotter error estimate.

Using \cref{lem:everythingisNC} and the observation that all disconnected nested commutators vanish, we can express the leading-order Trotter error as a linear combination of connected nested commutators: we first list all possible term supports, and then take the sum of all nested commutators on each term support.
Define the set of local nested commutators of $p+1$ terms supported from $H_j$ to $H_{j+k}$ for $k\leq p$ as
\begin{equation}
\begin{aligned}
    \mathrm{NC}_{p+1} (j,j+k) &\coloneqq \{ [H_{j_1},\dots,[H_{j_{p}},H_{j_{p+1}}]]:\\
    &\qquad \min(j_1,\dots,j_{p+1})=j,\\
    &\qquad \max(j_1,\dots,j_{p+1})=j+k,\\
    &\qquad [H_{j_1},\dots,[H_{j_{p}},H_{j_{p+1}}]]\\
    &\qquad \qquad \qquad \text{is connected}\}. 
\end{aligned}
\end{equation}
The leading-order additive Trotter error has the form
\begin{equation}
    \tilde{\mathcal{A}}_{p,c} = \tau^{p+1} \sum_{j=1}^n  \sum_{k=1}^{\min(p,n-j)} \sum_{L\in \mathrm{NC}_{p+1} (j,j+k)} \gamma_{L,c} L, 
\end{equation}
where $\gamma_{L,c} \in \mathbb{C}$.
Note that $j$ in the above summation denotes the smallest label in the term support of the nested commutator. 
We can then define the summand corresponding to a specific $j$ as the \emph{error block} starting at $j$, denoted by $\tilde{\mathcal{A}}_{p,c,j}$, so that $\tilde{\mathcal{A}}_{p,c} = \tau^{p+1}\sum_j \tilde{\mathcal{A}}_{p,c,j} $ with 

\begin{equation}
    \tilde{\mathcal{A}}_{p,c,j} \coloneqq   \sum_{k=1}^{\min(p,n-j)} \sum_{L\in \mathrm{NC}_{p+1} (j,j+k)} \gamma_{L,c} L.
\end{equation}
Note that all blocks labeled by $j\leq n-p$ contain exactly $p+1$ terms. In the limit that $n\gg 2^p$, the remaining blocks contribute negligibly to $\tilde{\mathcal{A}}_{p,c}$. Therefore, we focus on $j\leq n-p$ in this discussion.

For a given order $p$, every error block $\tilde{\mathcal{A}}_{p,c,j}$ is determined by the ordering pattern $c$ and the starting point $j$. However, the error blocks of two different ordering patterns $c^{(1)}$ and $c^{(2)}$ might be the same, as shown in the following lemma.
\begin{lemma}
\label{lemma:substring_c}
The error blocks starting at $j$ for ordering patterns $c^{(1)},c^{(2)}$ are the same,
i.e.,
    $
        \tilde{\mathcal{A}}_{p,c^{(1)},j} = \tilde{\mathcal{A}}_{p,c^{(2)},j}
    $,
    if
    \begin{equation}
        c^{(1)}_j, c^{(1)}_{j+1}, \dots, c^{(1)}_{j+p-1} 
      = c^{(2)}_j, c^{(2)}_{j+1}, \dots, c^{(2)}_{j+p-1}.
    \end{equation}
\end{lemma}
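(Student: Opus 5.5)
The plan is to show that the error block $\tilde{\mathcal{A}}_{p,c,j}$ equals the corresponding leading-order error of a Suzuki formula for a truncated Hamiltonian. The truncation keeps only the terms $H_j,\dots,H_{j+p}$, and the relative order of those terms is determined by $c_j,\dots,c_{j+p-1}$ via \cref{lem:relative_order}.

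First, restrict to $j\le n-p$ and fix $c$. Every nested commutator $L$ in the error block $\tilde{\mathcal{A}}_{p,c,j}$ has term support inside $\{j,\dots,j+p\}$. So its coefficient $\gamma_{L,c}$ is determined by the coefficient of the corresponding word in the BCH expansion of $S_c^{(p)}$. I would set up a formal, free-Lie-algebra version of the leading-order error. Treat $X_1,\dots,X_n$ as formal noncommuting variables subject only to $[X_i,X_k]=0$ for $|i-k|\ge 2$. Then the coefficients $\gamma_{L,c}$ can be taken to be those of the formal BCH computation, which fixes them canonically rather than depending on a choice of linear-combination representation.

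The key step is an evaluation (projection) argument. Consider the homomorphism that sends $X_i\mapsto X_i$ for $i\in\{j,\dots,j+p\}$ and $X_i\mapsto 0$ otherwise. Applying it to $S_c^{(p)}(\tau)$, each gate $e^{-i\delta_\ell X_{\pi(j_\ell)}}$ with label outside the window becomes the identity. The result is the Suzuki formula on the $p+1$ window terms, with the ordering given by the relative order $\sigma_{\pi,j,p}$. The same homomorphism sends $\tau H$ to $\tau H_{[j,j+p]}$.

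Applying the homomorphism to the leading-order error therefore produces three kinds of contribution:
\begin{itemize}
\item every nested commutator with term support inside the window survives unchanged;
\item every other commutator is killed;
\item the commutators in blocks $j'>j$ whose support lies in the window are the ones with the same support inside.
\end{itemize}
Hence the part of the projected error with $\min$ label equal to $j$ coincides with $\tilde{\mathcal{A}}_{p,c,j}$. The projected error depends on $c$ only through $c(\sigma_{\pi,j,p})$, which by \cref{lem:relative_order} is exactly $(c_j,\dots,c_{j+p-1})$. Extracting the terms with minimal label $j$ is a well-defined operation on formal Lie polynomials, graded by the multiset of labels. So if two patterns agree on that substring, their blocks coincide.

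The main obstacle is well-definedness. The coefficients $\gamma_{L,c}$ are not unique, because nested commutators satisfy Jacobi and locality relations. I will therefore define the block as a formal Lie element graded by label content, namely the sum of all contributions whose label set has minimum $j$. This is intrinsic, because the formal algebra is graded by the multiset of labels used, and it reduces to the physical operator after evaluating at $X_i=-iH_i$. With that convention, the projection argument above is rigorous. Compatibility with \cref{lem:consec_label} lets me use any representative $\pi$ in each class.
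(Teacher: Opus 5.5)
Your proposal is correct and follows the same idea as the paper's proof: the block at $j$ involves only $H_j,\dots,H_{j+p}$, and their relative order in the formula is fixed by $(c_j,\dots,c_{j+p-1})$ via \cref{lem:relative_order}. Your projection homomorphism (sending $X_i\mapsto 0$ outside the window) and the multidegree grading of the partially commutative free Lie algebra make rigorous two points the paper leaves implicit: that the coefficients of window-supported commutators depend only on the restricted Suzuki formula, and that the block does not depend on the choice of the non-unique coefficients $\gamma_{L,c}$.
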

\begin{proof}
    The error block starting at $j$ involves only Hamiltonian terms from $H_j$ to $H_{j+p}$. Recall that $c_l$ indicates the relative order between $H_l$ and $H_{l+1}$. If $l \notin \{j,j+1,\dots,j+p-1\}$, then the value of $l$ does not affect the relative order of terms from $H_j$ to $H_{j+p}$. Therefore, the error block starting at $j$ depends only on the substring of the ordering pattern from $j$ to $j+p-1$, according to \cref{lem:relative_order}.
\end{proof}
Next, we focus on translation-invariant Hamiltonians. In this case, we have the following observation.
\begin{lemma}
\label{lemma:ti}
    If $H$ is translation invariant, then $\tilde{\mathcal{A}}_{p,c,j}$ and $\tilde{\mathcal{A}}_{p,c,k}$ are the same matrix acting on different qubits if $\left( c_j, c_{j+1}, \dots, c_{j+p-1} \right) = \left( c_k, c_{k+1}, \dots, c_{k+p-1} \right)$.
\end{lemma}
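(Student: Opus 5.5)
The plan is to show that the coefficients $\gamma_{L,c}$ entering the block $\tilde{\mathcal{A}}_{p,c,j}$ depend only on the local pattern and on the relative labels $L$ shifted by $j$. Translation invariance then turns every nested commutator into the same matrix placed on shifted qubits.

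First, the shift map. For a nested commutator $L=[H_{j_1},\dots,[H_{j_p},H_{j_{p+1}}]]\in\mathrm{NC}_{p+1}(j,j+k)$, let $L^{\to}$ be the commutator with every label increased by $k'-j$. Translation invariance gives $H_{i+(k'-j)} = T\,H_i\,T^{\dagger}$, understood as embedding the same local operator $h$ on qubits shifted by $k'-j$; formally, one realizes this via the identification of the operator on qubits $j,\dots,j+p+1$ with the operator on qubits $k',\dots,k'+p+1$. Since commutators are preserved under this relabeling of tensor factors, $L^{\to}$ is the same matrix as $L$ acting on the shifted qubits. The map $L\mapsto L^{\to}$ is also a bijection $\mathrm{NC}_{p+1}(j,j+k)\to\mathrm{NC}_{p+1}(k',k'+k)$, because it preserves connectivity, the min/max shift, and the range of $k$. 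Here we need $j,k'\le n-p$, so that $\min(p,n-j)=p$ on both sides; this is the regime the text focuses on.

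Second, the coefficients. We must show $\gamma_{L,c}=\gamma_{L^{\to},c}$ when the length-$p$ substrings of $c$ at $j$ and $k'$ agree. The cleanest route avoids bookkeeping on the BCH series. By \cref{lemma:substring_c}, the block at $j$ can be computed from any pattern sharing that substring. Consider the sub-formula obtained by setting all $H_i$ with $i\notin[j,j+p]$ to zero, i.e.\ the Suzuki formula for $H_{[j,j+p]}$ with relative order $\sigma_{\pi,j,p}$. By \cref{lem:relative_order}, this relative order has pattern $(c_j,\dots,c_{j+p-1})$. Every commutator in the block involves only these terms, and the BCH coefficients $\gamma_L$ are universal polynomials in the gate durations and the gate sequence. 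Setting other terms to zero therefore does not change the coefficients of commutators supported in $[j,j+p]$, because dropping gates $e^{0}=I$ leaves the remaining gate sequence and durations intact. So $\tilde{\mathcal{A}}_{p,c,j}$ is the full leading-order error of this $(p+1)$-term formula minus the pieces with min label $>j$; those pieces are exactly the terms not containing $H_j$. The same holds at $k'$. The two $(p+1)$-term formulas have identical patterns, so by \cref{lem:consec_label} they are the same formula up to relabeling $i\mapsto i+(k'-j)$. Hence their formal BCH expansions, as Lie polynomials in symbols, coincide under the shift, which gives $\gamma_{L,c}=\gamma_{L^{\to},c}$.

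The main obstacle is the coefficient step. One must justify that the $\gamma_{L,c}$ are well defined as formal coefficients, since nested commutators are linearly dependent and one is extracted from a Lie-polynomial expansion. I would fix this by defining $\gamma$ through the free Lie algebra expansion of the BCH series in symbols $X_i=-iH_i$. In that setting "zeroing terms" and "relabeling" are algebra homomorphisms, and the shift argument is exact. Combining the two steps, $\tilde{\mathcal{A}}_{p,c,k'}=\sum\gamma_{L,c}L^{\to}$ is the same matrix as $\tilde{\mathcal{A}}_{p,c,j}$ on shifted qubits.
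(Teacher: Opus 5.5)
Your route is the paper's own one-line argument, written out in detail: the block at $j$ depends only on $(c_j,\dots,c_{j+p-1})$ by \cref{lemma:substring_c}, and $H$ is translation invariant. The shift step is fine: on the open chain, conjugating by a qubit permutation that sends qubit $j+r$ to $k'+r$ maps $H_{j+r}$ to $H_{k'+r}$ for $0\le r\le p$.

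\textbf{The step that fails.} You define $\gamma$ in the free Lie algebra, but then invoke \cref{lem:consec_label}. That lemma rests on $[H_a,H_b]=0$ for $|a-b|\ge 2$, so it is false for free symbols. Equal substrings fix only the pattern of a window, not its relative order. For example, take $p=2$ and $\pi=(1,3,2,6,4,5)$, so $c=10110$. The windows at $j=1$ and $k'=4$ both have pattern $10$, but their relative orders are $(1,3,2)$ and $(6,4,5)$. Absorbing the duration $\tau/2$ into the symbols, the shifted first window is $e^{X_4}e^{X_6}e^{2X_5}e^{X_6}e^{X_4}$ and the second window is $e^{X_6}e^{X_4}e^{2X_5}e^{X_4}e^{X_6}$. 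Applying $\log(e^{X}e^{Y}e^{X})=2X+Y-\tfrac16[X,[X,Y]]-\tfrac16[Y,[X,Y]]+\cdots$ twice, their cubic parts differ by $-[X_4,[X_4,X_6]]-[X_6,[X_4,X_6]]-2[X_5,[X_4,X_6]]$. The last term has connected content $\{4,5,6\}$ and minimum label $4$, so it lies inside the formal block. Hence your claim that the formal expansions ``coincide under the shift'' is false as written.

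A related point: the free Lie algebra does not make the individual $\gamma_L$ unique either, because right-normed brackets still obey antisymmetry and Jacobi. What is canonical is the multidegree decomposition. A block is a sum of multidegree components, and that is all you need.

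\textbf{Repair.} The lemma survives, because the discrepancy lies in the Lie ideal generated by $[X_a,X_b]$ with $|a-b|\ge 2$, which evaluates to zero on the actual $H_i$. Your argument has to say this explicitly. The cleanest fix is to work in the quotient of the free Lie algebra by that ideal (the free partially commutative Lie algebra of the path). This quotient has the properties you need:
\begin{itemize}
\item Its relations are multihomogeneous, so it stays multigraded.
\item Zeroing $X_i$ for $i\notin[j,j+p]$ and shifting labels still descend to homomorphisms.
\item The gate-swapping argument behind \cref{lem:consec_label} is valid there, so the two window expansions genuinely agree under the shift.
\item Evaluation at $X_i=-iH_i$ factors through it.
\end{itemize}
Alternatively, stay at the operator level. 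Replace $H_i$ by $\lambda_iH_i$; by \cref{lem:consec_label} and translation invariance, the window formulas are then unitarily equivalent as operators for every $\lambda$. Read the block off as the coefficients of the $\lambda$-monomials with minimum active label $j$ in the $\tau^{p+1}$ term. With either fix, your proof becomes a rigorous version of the paper's.
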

\begin{proof}
    This follows simply from the fact that every error block depends on the substring of the ordering pattern, and that $H$ is translation invariant.
\end{proof}
Combining \cref{lemma:substring_c,lemma:ti}, we obtain the following theorem.
\begin{theorem}
\label{thm:fHp}
    If $H$ is translation-invariant, there exists a function $f_{H,p}(c') \colon \{0,1\}^p \to \mathbb{R}$ that takes as input the relative order of $H_j,\dots, H_{j+p}$, i.e., $c'=(c_j,c_{j+1},\dots,c_{j+p-1})$ for any $j\leq n-p$, and gives the spectral norm of the error block starting at $j$, i.e.,
    \begin{equation}
        f_{H,p}(c') = \| \tilde{\mathcal{A}}_{p,c,j}  \|.
    \end{equation}
\end{theorem}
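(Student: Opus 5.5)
The plan is to define $f_{H,p}$ directly and check that it is well defined. For each $c'\in\{0,1\}^p$, fix a reference chain and a reference position. A natural choice is $n_0 = p$ terms on $p+1$ qubits, starting at $j_0=1$. Choose any ordering pattern $c^{\mathrm{ref}}$ whose substring at $j_0$ equals $c'$, and set $f_{H,p}(c') \coloneqq \|\tilde{\mathcal{A}}_{p,c^{\mathrm{ref}},j_0}\|$. The claim then reduces to showing that, for every $n$, every $c\in\{0,1\}^{n-1}$ and every $j\le n-p$ with $(c_j,\dots,c_{j+p-1})=c'$, we have $\|\tilde{\mathcal{A}}_{p,c,j}\| = f_{H,p}(c')$.

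The argument proceeds in three steps.

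\textbf{Step 1: independence of the bits outside the window.} Fix $n$ and $j$. By \cref{lemma:substring_c}, any two patterns $c^{(1)},c^{(2)}$ that agree on positions $j,\dots,j+p-1$ give the same error block at $j$. So the block depends on $c$ only through $c'$.

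\textbf{Step 2: independence of the position $j$.} Within a single chain, \cref{lemma:ti} shows that blocks at $j$ and $k$ with equal substrings are the same matrix acting on shifted qubits. Shifting conjugates by a permutation of tensor factors, which is a unitary, so the spectral norm is unchanged.

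\textbf{Step 3: independence of the chain length $n$.} This is the step I expect to be the main obstacle, because \cref{lemma:ti} compares blocks within one chain, whereas $f_{H,p}$ must not depend on $n$. My approach is to argue that the coefficients $\gamma_{L,c}$ for $L\in\mathrm{NC}_{p+1}(j,j+k)$ are determined by the restricted formula alone. The leading-order term $O_{p+1}$ in the BCH expansion of \cref{lem:everythingisNC} is a universal Lie polynomial in the gate exponents $-i\delta_\ell H_{j_\ell}$. Its component along a nested commutator with a given term support is obtained by setting every $H_i$ outside that support to zero, which deletes those gates. Deleting them leaves the Suzuki formula for the subchain $H_j,\dots,H_{j+p}$, ordered by the relative order $\sigma_{\pi,j,p}$. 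By \cref{lem:relative_order}, that relative order has pattern $c'$. Hence, up to relabeling and tensoring with identities, $\tilde{\mathcal{A}}_{p,c,j}$ is the block at position $1$ of the $p$-term chain with pattern $c'$.

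One care point here: the block collects only commutators whose minimum label is $j$. This is compatible with the restriction, since the supports lie in $\{j,\dots,j+p\}$ in both settings. For $j\le n-p$ the window fits entirely inside the chain, so no boundary truncation occurs.

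Finally, tensoring with identity factors preserves the spectral norm, since $\|A\otimes I\|=\|A\|$. Combining the three steps gives $\|\tilde{\mathcal{A}}_{p,c,j}\| = f_{H,p}(c')$, which completes the argument.
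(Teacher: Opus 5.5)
Your Steps 1 and 2 are exactly the paper's proof, which for the given $H$ (hence fixed $n$) simply combines \cref{lemma:substring_c} and \cref{lemma:ti}. Step 3 goes beyond the theorem as stated. It supplies what the paper later uses without proof: a single $f_{H,p}$, computed on the window $H_{[1,p+1]}$ via \cref{eq:fhp}, serves for every chain built from the same local term $h$. In fact Step 3 proves the statement by itself, because it compares each pair $(c,j)$ directly with the reference window. This also covers pairs that Steps 1--2, as you state them, cannot link. The reason is that \cref{lemma:ti} only relates positions within one pattern, and overlapping windows can be incompatible: for $p=2$, $n=4$, having $c'=01$ at both $j=1$ and $j=2$ would force $c_2=1$ and $c_2=0$.

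Two repairs are needed.

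\emph{Reference chain size.} The reference chain must have $n_0=p+1$ terms on $p+2$ qubits, with $c^{\mathrm{ref}}=c'$. With $n_0=p$ the pattern has only $p-1$ bits and $j_0=1>n_0-p$. The same slip recurs as ``$p$-term chain'' in Step 3.

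\emph{Phrasing of Step 3.} Run the argument in the free Lie algebra on noncommuting variables $X_1,\dots,X_n$, not through the coefficients $\gamma_{L,c}$. Those coefficients are not unique, since nested commutators are related by antisymmetry and the Jacobi identity. The multidegree components of $O_{p+1}$ are unique, and the block is their sum over supports with minimum label $j$.

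Now set $T=\{j,\dots,j+p\}$ and substitute $X_i\mapsto 0$ for $i\notin T$. This substitution deletes those gates and is compatible with BCH. It fixes every component whose support is contained in $T$, not only the one with support equal to $T$, and kills the rest. That suffices, because components with minimum label $j$ and support not contained in $T$ are disconnected and vanish once the $H_i$ are substituted.
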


Suppose $f_{H,p}$ is given a priori. We can then apply the triangle inequality to construct an upper bound on the leading-order error that has localized sensitivity to the ordering pattern $c$:
\begin{equation}
\label{eq:newbound}
\begin{aligned}
    \| \tilde{\mathcal{A}}_{p,c} \| &\leq \tau^{p+1} \sum_{j=1}^n \| \tilde{\mathcal{A}}_{p,c,j} \| \\
    & = \tau^{p+1} \left[\sum_{j=1}^{n-p} f_{H,p}(c_j,\dots,c_{j+p-1})+b_{H,p}(c)\right].
\end{aligned}
\end{equation}
Here $b_{H,p}(c)\coloneqq\sum_{j=n-p+1}^{n}\|\tilde{\mathcal{A}}_{p,c,j}\|$ contains the shortened end windows at the right boundary of the chain. (If $n\leq p$, all blocks are treated as end windows.) For simplicity, we limit our analysis to $n\gg 2^p$, so that $b_{H,p}$ is negligible.

To instantiate this bound for a given $H$, we need to find the function $f_{H,p}$. 
Note that $c' \in \{0,1\}^p$ and every full error block involves at most $p+1$ consecutive terms, acting within $p+2$ qubits. We let $S^{(p)}_{c'}(\tau)$ be the $p$th-order product formula of $H_{[1,p+1]}$ corresponding to permutation pattern $c'$.
A straightforward way of computing $f_{H,p}(c')$ is to simply compute the leading-order Taylor expansion of $S^{(p)}_{c'}(\tau) - \exp(-iH_{[1,p+1]} \tau)$ and remove all terms not containing $H_1$. This strategy uses $2^{\OC(p)}$ classical computational effort.

We also propose a fully numerical method to compute $f_{H,p}$, which also takes $2^{\OC(p)}$ computation. We use the following observation: the leading-order terms of $S^{(p)}_{c'}(\tau) - \exp(-iH_{[1,p+1]} \tau)$ without $H_{1}$ are the leading-order terms of $S'^{(p)}_{c'}(\tau) - \exp(-iH_{[2,p+1]} \tau)$, where $S'^{(p)}_{c'}(\tau)$ is the same product formula as $S^{(p)}_{c'}(\tau)$ with every gate $e^{-iH_1\delta_\ell}$ generated by $H_1$ replaced by identity. Therefore, $f_{H,p}(c')$ can be computed via
\begin{equation}
\label{eq:fhp}
\begin{aligned}
    f_{H,p}(c')=\lim_{\tau\rightarrow0^+}\frac{1}{\tau^{p+1}}
      \Big\|&\left(S^{(p)}_{c'}(\tau)-e^{-iH_{[1,p+1]}\tau}\right)\\
    \qquad-&\left(S'^{(p)}_{c'}(\tau)-e^{-iH_{[2,p+1]}\tau}\right)\Big\|.
\end{aligned}
\end{equation}    

This requires classically simulating the perfect evolution and product formula of a $(p+2)$-qubit system. These matrices have $\sim 2^{2p+4}$ entries, and the total cost of the computation is $2^{\OC(p)}$, which is polynomial for $p=\OC(\log n)$. Note that $p$ is typically treated as a constant. More generally, to keep the gate complexity polynomial in $n$, one should take $p=\OC(\log n)$, since implementing a $p$th-order Suzuki product formula requires $\poly(n)\cdot 2^{\mathcal \OC(p)}$ gates.

This numerical strategy can give an even more powerful Trotter error estimation if we remove the $\tau\rightarrow 0$ limit, i.e., replace $f_{H,p}$ by
\begin{equation}
\begin{aligned}
 f^{\mathrm{num}}_{H,p}(\tau;c')
 =\Big\|&\left(S^{(p)}_{c'}(\tau)-e^{-iH_{[1,p+1]}\tau}\right)\\
 -&\left(S'^{(p)}_{c'}(\tau)-e^{-iH_{[2,p+1]}\tau}\right)\Big\|,
\end{aligned}
\label{eq:finite_local_predictor}
\end{equation}
and define the \emph{local Trotter error estimate} as
\begin{equation}
    \sum_{j=1}^{n-p} f^{\mathrm{num}}_{H,p}(\tau;c_j,\dots,c_{j+p-1}) + b^\mathrm{local}_{H,p}(\tau,c),
\end{equation}
where $b^\mathrm{local}_{H,p}(c)$
is negligible in the large-$n$ limit.
The quantity $f^{\mathrm{num}}_{H,p}(\tau;c')$ measures how the local Trotter error matrix changes when $H_1$ is added to the other terms in the window. It includes all Taylor orders of this change at the chosen step size, with their cancellations, rather than only the first nonzero coefficient. Its leading term is $\tau^{p+1}f_{H,p}(c')$. Because the time dependence is already included, no additional factor of $\tau^{p+1}$ is needed. This local estimator still does not include contributions involving terms outside the window and is not by itself a bound on the full finite-time error, but is still more informative than the leading-order bound. We control the nonlocal contributions in the next section.

There is also a simple trick to further tighten \cref{eq:newbound}. Grouping neighboring error blocks before taking a norm, we have 
\begin{align}
\label{eq:leading_pair}
 \left\|\sum_{j=1}^n\tilde{\mathcal A}_{p,c,j}\right\|
 & \leq \sum_{r=1}^{n/2}
 \left\|\tilde{\mathcal A}_{p,c,2r-1}
             +\tilde{\mathcal A}_{p,c,2r}\right\|\\
             & \leq \sum_{j=1}^n \| \tilde{\mathcal{A}}_{p,c,j}\|.\nonumber
\end{align}

Note that the sum of a pair of blocks can be computed from $p+2$ consecutive terms once each $\tilde{\mathcal{A}}_{p,c,}$ is computed. 
We use a similar grouping strategy in the full Trotter bound introduced in the next section.

In \cref{sec:numerics}, we numerically showcase the effectiveness of this ordering-aware bound. In the examples shown there, our leading-order error estimate from \cref{eq:newbound} more reliably predicts whether sequential or brickwall ordering performs better 
and estimates the magnitude of the difference for second-, fourth-, and sixth-order Suzuki formulas more accurately than the existing small-prefactor bounds~\cite{childs2021theory,schubert2023trotter}.

Given the strong empirical agreement between our bound on the leading-order error and the actual product-formula error,
minimizing \cref{eq:newbound} or \cref{eq:leading_pair} appears to be a useful proxy for minimizing the actual error.

Before considering how to choose an optimal ordering, in the next section we extend the above leading-order analysis to construct a full Trotter error bound (i.e.\ beyond the leading-order bound of this section) for geometrically local systems, giving much tighter bounds on the full error.

\section{Ordering-aware Trotter error bound}
\label{sec:geometry}

\begin{figure*}[ht]
    \centering
    \includegraphics[width=\linewidth]{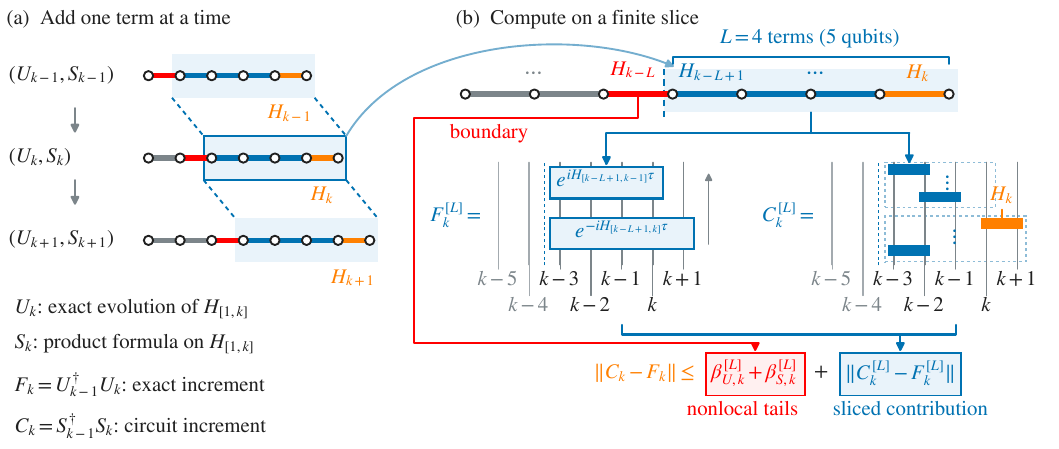}
    \caption{Construction of the ordering-aware Trotter error bound by adding one Hamiltonian term at a time. (a) Circles represent qubits and bonds represent nearest-neighbor Hamiltonian terms. The operators $U_k=e^{-iH_{[1,k]}\tau}$ and $S_k$ are the exact and product-formula evolutions of the first $k$ terms. The orange bond is the newly added term. Blue shading marks the last $L$ terms (the slice) used to approximate the increment.
    The red bond, corresponding to $H_{k-L}$ in the middle row, connects this slice to the rest of the chain. The dashed blue lines show the slice shifting by one term. The exact and circuit increments, $F_k=U_{k-1}^{\dagger}U_k$ and $C_k=S_{k-1}^{\dagger}S_k$, give $\|U_k-S_k\|\leq\|U_{k-1}-S_{k-1}\|+\|C_k-F_k\|$. (b) Each increment is approximated using the slice $H_{[k-L+1,k]}$, illustrated for $L=4$. The left diagram shows $F_k^{[L]}$ as two exact evolutions; the right shows $C_k^{[L]}$, obtained by deleting gates outside the slice from $S_{k-1}^{\dagger}S_k$, without changing the order or durations of the remaining gates. Operators are applied from bottom to top. The two leftmost wires are outside the slice and carry no gates; the orange gate acts on the support of $H_k$. The gate pattern is schematic, with ellipses indicating omitted gates. The blue contribution $\|C_k^{[L]}-F_k^{[L]}\|$ is computed numerically on the $L+1$ qubits of the slice. The red terms $\beta_{U,k}^{[L]}$ and $\beta_{S,k}^{[L]}$ bound $\|F_k-F_k^{[L]}\|$ and $\|C_k-C_k^{[L]}\|$, respectively. They are evaluated from boundary commutators on the slice together with $H_{k-L}$, involving at most $L+2$ qubits, as described in \cref{sec:Fbound,sec:Cbound}. Combining these contributions and iterating the inequality in (a) gives a bound on the full Trotter error. 
    }
    \label{fig:schematic_fullbound}
\end{figure*}
The guiding strategy of our full Trotter error bound is inspired by the leading-order error analysis: we decompose the Trotter error into a sum of local contributions, compute the local contributions via exact classical simulation of $\OC(p)$-qubit systems, and analytically bound the nonlocal tail that is not efficiently computable. 
We provide a visualization of this process in \cref{fig:schematic_fullbound}.

For notational consistency, we continue to denote the product formula and
its additive error by \(S_c^{(p)}\) and \(\mathcal{A}_{p,c}\). However, the
analysis in this section does not use any special property of Suzuki formulas
or of the ordering pattern \(c\). It applies to arbitrary product formulas
by replacing \(S_c^{(p)}\) and \(\mathcal{A}_{p,c}\) with \(S^{(p)}_\pi\) and
\(\mathcal{A}^{(p)}_\pi\). We continue using the \(c\)-labeled notation to be consistent with
\cref{sec:optimization} which later specializes to permuted Suzuki formulas, for which the consecutive-label ordering pattern is the natural optimization
variable.

\subsection{Local decomposition of Trotter error}
In a 1D chain, we consider the Trotter error $\mathcal{A}_{p,c}$ (see \cref{eq:additive_error}) by considering the incremental error when a new term is added to the Hamiltonian. In other words, we first consider the Trotter error of simulating $e^{-i H_1\tau}$ only (in this case, there is no Trotter error), and then the errors of simulating $e^{-i H_{[1,2]}\tau}$, $e^{-i H_{[1,3]}\tau}$, etc., until the error of simulating $e^{-i H_{[1,n-1]\tau}}$, and finally $e^{-i H \tau}$.

Let $k\in\{1,2,\dots,n\}$ be the index of a Hamiltonian term.
The exact evolution of the first $k$ terms is
\begin{equation}
    U_k(\tau) \coloneqq
    e^{-i H_{[1,k]} \tau}.
\end{equation}

Similarly to \cref{eq:fhp}, we define $S^{(p)}_{c,k}(\tau)$ by deleting from $S_c^{(p)}(\tau)$ every gate $e^{-iH_{j_\ell}\delta_\ell}$ whose term label satisfies $j_\ell>k$. In other words, following \cref{eq:gate_times}, if
$S_c^{(p)}(\tau)=\prod_{\ell=1}^{M}e^{-i\delta_\ell H_{j_\ell}}$, then correspondingly
\begin{equation}
    S^{(p)}_{c,k}(\tau) \coloneqq \prod_{\ell=1}^M e^{-i \mathbf{1}_{\{j_l\leq k\}}  \delta_l H_{j_l}}.
\end{equation}
To simplify the notation, we suppress the time, order, and permutation arguments, writing $S_k(\tau)=S^{(p)}_{c,k}(\tau)$. 
We define the accumulated additive error on simulating the first $k$ terms as
\begin{equation}
    \mathcal A_{c,k}^{(p)}(\tau) \coloneqq U_k(\tau) -S_k(\tau).
\end{equation}
To bound the norm of $\mathcal A_{c,k}^{(p)}(\tau)$, we define the \emph{evolution increment} $F_k(\tau)$ and \emph{product formula increment} $C_k(\tau)$:
\begin{equation}
\begin{aligned}
F_k(\tau)&\coloneqq U_{k-1}(\tau)^\dagger U_k(\tau),\\
C_k(\tau)&\coloneqq S_{k-1}(\tau)^\dagger S_{k}(\tau).
\end{aligned}
\end{equation}
We can now rewrite $U_k(\tau)$ and $S_{k}(\tau)$ in terms of these increments:
\begin{equation}
\begin{aligned}
 U_k(\tau)&=U_{k-1}(\tau)F_k(\tau),\\
 S_{k}(\tau)& =S_{k-1}(\tau)C_k(\tau).
\end{aligned}
\end{equation}

To further simply the notation, we also suppress the time argument, writing $S_k(\tau)=S_k$, $ U_k(\tau)=U_k$, $F_k(\tau)=F_k$, $ C_k(\tau)=C_k$, and $\mathcal A_{c,k}^{(p)}(\tau) = \mathcal A_{c,k}^{(p)}$. Now,
\begin{equation}
\begin{aligned}
\norm{\mathcal{A}_{c,k}^{(p)}}
&=\norm{U_{k-1}F_k-S_{k-1}C_k}\\
&\leq\norm{U_{k-1} - S_{k-1}}+\norm{C_k-F_k}\\
&=\norm{\mathcal{A}_{c,k-1}^{(p)}} +\norm{C_k-F_k}.
\end{aligned}
\label{eq:increment_recursion}
\end{equation}

Therefore, we call $C_k-F_k$ the \emph{error increment}.
Using \cref{eq:increment_recursion}, we can develop a Trotter error bound by simply summing bounds on the error increments:
\begin{equation}
\label{eq:thebound}
    \norm{\mathcal{A}_{c}^{(p)}} = \norm{\mathcal{A}_{c,n}^{(p)}} \leq \norm{\mathcal{A}_{c,L}^{(p)}} + \sum_{k=L+1}^n \norm{C_k - F_k}.
\end{equation}
Now, we analyze every error increment with $k\geq L+1$. When $k$ is large, both $C_k$ and $F_k$ are nonlocal unitaries supported on up to $k+1$ qubits. Therefore, brute-force computation of $\norm{C_k-F_k}$ is infeasible. However, intuitively, since $H$ is geometrically local, one can approximate $C_k-F_k$ by its localized contribution plus a nonlocal tail.

We introduce a \emph{slicing} strategy to separete local and nonlocal parts: we take a slice length $L$ 
and consider the contribution to $C_k-F_k$ supported on the last $L$ Hamiltonian terms (i.e., from $H_{k-L+1}$ to $H_k$). We aim to compute this contribution numerically, so $L=\OC(p)$ is a reasonable choice (in fact, we can simply choose $L=p+1$ because a leading-order contribution involves at most $p+1$ consecutive terms).
As shown in \cref{fig:schematic_fullbound}, we define the sliced product formula increment, $C_k^{[L]}$, by deleting from $C_k$ every gate $e^{-iH_{j_\ell}\delta_\ell}$ whose term label satisfies $j_\ell\leq k-L$, i.e., outside the window from $k-L+1$ to $k$. In other words, if
$C_k=\prod_{\ell=1}^{M}e^{-i\delta_\ell H_{j_\ell}}$ (it must have this form because $C_k$ is the product of two product formulas), then correspondingly
\begin{equation}
    C_k^{[L]} \coloneqq \prod_{\ell=1}^M e^{-i \mathbf{1}_{\{k-L<j_l\leq k\}}  \delta_l H_{j_l}}.
\end{equation}
Similarly, we define the sliced evolution increment
\begin{equation}
 F_k^{[L]}\coloneqq
 e^{iH_{[k-L+1,k-1]}\tau}e^{-iH_{[k-L+1,k]}\tau}.
\end{equation}
Note that if $k-L\leq 0$, there is no need to use the slicing startegy here, because $\|C_k-F_k\|$ can be directly computed. We illustrate the process of adding Hamiltonian terms into consideration in \cref{fig:schematic_fullbound}(a), and demonstrate how to compute sliced increments in \cref{fig:schematic_fullbound}(b).

Now, using triangle inequality, the error increment can be upper bounded by the sliced error increment plus the error in approximating $C_k,F_k$ using their sliced versions:
\begin{equation}
\norm{C_k-F_k}
\leq\norm{C_k-C_k^{[L]}}+\norm{C_k^{[L]}-F_k^{[L]}} + \norm{F_k-F_k^{[L]}}.
\end{equation}
Note that $C_k^{[L]}-F_k^{[L]}$ is the sliced error increment involving only $L$ Hamiltonian terms. With $L=p+1$, this matrix and its spectral norm can be computed with $2^{\OC(p)}$ complexity. Therefore, what remains to bound is the error in the slice approximation. We aim to find efficiently computable $\beta_{C,k}^{[L]}, \beta_{F,k}^{[L]}$ such that
\begin{equation}
\begin{aligned}
\beta_{C,k}^{[L]}&\geq \norm{C_k-C_k^{[L]}},\\
\beta_{F,k}^{[L]}&\geq \norm{F_k-F_k^{[L]}}.
\end{aligned}
\end{equation}
Thus,
\begin{equation}
\norm{C_k-F_k}\leq\beta_{C,k}^{[L]}+\beta_{F,k}^{[L]}
+\norm{C_k^{[L]}-F_k^{[L]}}.
\end{equation}
Next, we study how to find $\beta_{F,k}^{[L]}$ in \cref{sec:Fbound} and $\beta_{C,k}^{L}$ in \cref{sec:Cbound}.

\subsection{Bounding \texorpdfstring{$\norm{F_k-F_k^{[L]}}$}{the evolution increment error}}
\label{sec:Fbound}

We draw on the locality ideas underlying Lieb--Robinson bounds~\cite{lieb1972finite} and finite-region approximations to time evolution~\cite{barthel2012quasilocality} to propose the upper bound strategy.
In particular, we express $F_k - F^{[L]}_k$ as a time integral involving the commutator between the boundary term $H_{k-L}$ and time-evolved $H_k$ (i.e., $e^{isH_{[k-L+1,k-1]}}H_k e^{-isH_{[k-L+1,k-1]}}$). Rather than directly estimate this commutator through the general Lieb-Robinson bound, we exploit the finite window size $L$ and show that the commutator can be efficiently evaluated by a small-scale numerical simulation. 

As shown in the following lemma, the evolution increment is generated by the new term evolved under the preceding terms.

\begin{lemma}
\begin{equation}
\begin{aligned}
F_k(\tau)
&=\mathcal{T}\exp\left[
-i \int_0^\tau
e^{isH_{[1,k-1]}} H_k e^{-isH_{[1,k-1]}}\,ds
\right].
\end{aligned}
\end{equation}
\end{lemma}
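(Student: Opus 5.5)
The plan is the standard interaction-picture argument. Fix $k$ and write $A\coloneqq H_{[1,k-1]}$ and $B\coloneqq H_k$, so that $H_{[1,k]}=A+B$. Define the operator-valued function
\begin{equation}
 G(t)\coloneqq e^{itA}e^{-it(A+B)},\qquad t\in[0,\tau],
\end{equation}
so that $G(0)=I$ and $G(\tau)=U_{k-1}(\tau)^\dagger U_k(\tau)=F_k(\tau)$. The goal is to show that $G$ solves a linear ODE whose generator is the Heisenberg-evolved $B$. The claimed formula then follows by uniqueness of solutions, since the time-ordered exponential is by definition the solution of that ODE.

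The key computation differentiates $G$ with the product rule. This is legitimate because all operators here are finite-dimensional matrices, so $t\mapsto e^{itA}$ and $t\mapsto e^{-it(A+B)}$ are entire and their derivatives are $iAe^{itA}$ and $-i(A+B)e^{-it(A+B)}$. This gives
\begin{equation}
 G'(t)=e^{itA}\bigl(iA-i(A+B)\bigr)e^{-it(A+B)}=-i\,e^{itA}Be^{-itA}\,e^{itA}e^{-it(A+B)}.
\end{equation}
The last step inserts $I=e^{-itA}e^{itA}$. The result is
\begin{equation}
 G'(t)=-i\,B_I(t)\,G(t),\qquad B_I(t)\coloneqq e^{itA}H_ke^{-itA}.
\end{equation}
Here $B_I(t)$ is Hermitian and continuous in $t$.

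It remains to identify $G(\tau)$ with $\mathcal{T}\exp\bigl[-i\int_0^\tau B_I(s)\,ds\bigr]$. The time-ordered exponential is defined as the propagator $V(\tau)$ of $V'(t)=-iB_I(t)V(t)$ with $V(0)=I$. Equivalently, it is the Dyson series $\sum_m (-i)^m\int_{\tau\ge s_m\ge\cdots\ge s_1\ge0}B_I(s_m)\cdots B_I(s_1)$, with later times to the left. Since $B_I$ is continuous and bounded, Picard–Lindelöf (or absolute convergence of the Dyson series) gives uniqueness, hence $G(t)=V(t)$ for all $t$. Evaluating at $t=\tau$ yields the statement.

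No step is a real obstacle. The only care needed is bookkeeping of the ordering convention: the generator must multiply $G$ from the left, so that the latest times stand leftmost, consistent with the standard $\mathcal{T}$. The sign conventions for $e^{\pm isH_{[1,k-1]}}$ must also match the statement. Both are fixed by the computation above.
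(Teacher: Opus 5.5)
Your proof is correct and follows the same route as the paper: you differentiate $F_k$ (your $G$) in $t$ to get $G'(t)=-i\,e^{itH_{[1,k-1]}}H_k e^{-itH_{[1,k-1]}}\,G(t)$, and then identify the solution with the time-ordered exponential. You also state explicitly the initial condition $G(0)=I$ and the uniqueness argument, which the paper leaves implicit.
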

\begin{proof}
A direct calculation gives
\begin{equation}
\begin{aligned}
\frac{dF_k(\tau)}{d\tau}
&=\frac{d e^{iH_{[1,k-1]}\tau}}{d\tau} U_k(\tau)
\\
&\quad
+U_{k-1}(\tau)^\dagger
\frac{d e^{-i(H_{[1,k-1]}+H_k)\tau}}{d\tau}
\\
&=-i e^{iH_{[1,k-1]}\tau}
H_k e^{-iH_{[1,k-1]}\tau} F_k(\tau).
\end{aligned}
\end{equation}
\end{proof}
Similarly, we can show that
\begin{equation}
\begin{aligned}
&\quad \, F_k^{[L]}(\tau)
\\
&=\mathcal{T}\exp\left[
-i \int_0^\tau
e^{isH_{[k-L+1,k-1]}}
H_k
e^{-isH_{[k-L+1,k-1]}}\,ds
\right].
\end{aligned}
\end{equation}
Therefore,
\begin{equation}
\begin{aligned}
&\norm{F_k(\tau)-F_k^{[L]}(\tau)}
\\
&\leq \int_0^\tau ds \, 
\Big\|
 e^{isH_{[1,k-1]}} H_k e^{-isH_{[1,k-1]}}\\
& \quad \quad \quad \quad \quad -e^{isH_{[k-L+1,k-1]}} H_k e^{-isH_{[k-L+1,k-1]}}
\Big\|.
\end{aligned}
\end{equation}
The first term in the integrand is nonlocal, and the second is local.
Our goal is to bound their difference. Set
\begin{equation}
\begin{aligned}
X_{[1,k-1]}(s)
&\coloneqq e^{isH_{[1,k-1]}} H_k e^{-isH_{[1,k-1]}},
\\
X_{[k-L+1,k-1]}(s)
&\coloneqq e^{isH_{[k-L+1,k-1]}}
H_k e^{-isH_{[k-L+1,k-1]}},
\\
Y_k(s)
&\coloneqq X_{[1,k-1]}(s)-X_{[k-L+1,k-1]}(s).
\end{aligned}
\end{equation}
Then
\begin{equation}
\begin{aligned}
\frac{dY_k}{ds}
&=i[H_{[1,k-1]},X_{[1,k-1]}]
\\
&\quad{}
-i[H_{[k-L+1,k-1]},X_{[k-L+1,k-1]}]
\\
&=i[H_{[1,k-1]},X_{[k-L+1,k-1]}+Y_k]
\\
&\quad{}
-i[H_{[k-L+1,k-1]},X_{[k-L+1,k-1]}]
\\
&=i[H_{[1,k-1]},Y_k]
\\
&\quad{}
+i[H_{[1,k-L]},X_{[k-L+1,k-1]}].
\end{aligned}
\label{eq:dyds}
\end{equation}
\cref{eq:dyds} is an inhomogeneous Heisenberg-type equation. Since $Y_k(0)=0$, Duhamel's formula gives
\begin{equation}
\begin{aligned}
Y_k(s)
&=i\int_0^s e^{i(s-u)H_{[1,k-1]}}
\\
&\quad{}\times
[H_{[1,k-L]},X_{[k-L+1,k-1]}(u)]
\\
&\quad{}\times
e^{-i(s-u)H_{[1,k-1]}}\,du.
\end{aligned}
\end{equation}
Now, observe that the only term in $H_{[1,k-L]}$ that does not commute with $X_{[k-L+1,k-1]}(u)$ is $H_{k-L}$. Consequently,
\begin{equation}
\begin{aligned}
&\quad \, \norm{Y_k(s)}
\\
&\leq \int_0^s
\norm{[H_{[1,k-L]},X_{[k-L+1,k-1]}(u)]}\,du
\\
&=\int_0^s
\left\|
\left[
H_{k-L},
e^{iuH_{[k-L+1,k-1]}}
H_k e^{-iuH_{[k-L+1,k-1]}}
\right]
\right\|\,du.
\end{aligned}
\end{equation}
It follows that
\begin{equation}
\begin{aligned}
& \quad \,  \norm{F_k(\tau)-F_k^{[L]}(\tau)}
\\
&\leq \int_0^\tau ds \, (\tau-s)
\\
& \quad \,  \left\| \left[
H_{k-L},
e^{isH_{[k-L+1,k-1]}}
H_k e^{-isH_{[k-L+1,k-1]}}
\right] \right\|.
\end{aligned}
\label{eq:betaF}
\end{equation}
Therefore, we can take $\beta_{F,k}^{[L]}$ as the right-hand side of the above equation.
The integrand of the right-hand side involves $L+1$ consecutive Hamiltonian terms, acting within $L+2$ qubits, and can thus be efficiently computed. When $k\leq L$, the boundary term $H_{k-L}$ does not exist, and $\norm*{F_k(\tau)-F_k^{[L]}(\tau)}=0$. Note that $\beta_{F,k}^{[L]}$ is the same for all product formulas, so it is the same for all orderings.
For translation-invariant Hamiltonians, $\beta_{F,k}^{[L]}$ is the same for all $k>L$. For non-translation-invariant Hamiltonians, $\beta_{F,k}^{[L]}$ must be computed for all $k$, contributing a factor of $n$ to the computational overhead.

\subsection{Bounding \texorpdfstring{$\norm{C_k-C_k^{[L]}}$}{the product formula increment error}}
\label{sec:Cbound}
We use a similar strategy to that in \cref{sec:Fbound} to find $\beta_{C,k}^{[L]}$ by computing commutators between sliced product formula increments and evolutions at the boundary of the slice. However, product formulas are different from continuous Hamiltonian evolutions, so we compare the complete and sliced increments gate by gate.

Recall that $C_k=S_{k-1}^\dagger S_k$. For a fixed $k>L$, let $M_k$ be the number of two-qubit evolutions in $S_k$, and write
\begin{equation}
\label{eq:Sk_as_gates}
\begin{aligned}
S_k&=G_1G_2\cdots G_{M_k},\\
G_\ell&=e^{-i\alpha_\ell\tau H_{j_\ell}},
\end{aligned}
\end{equation}
where $j_\ell\in\{1,\ldots,k\}$ and $\alpha_\ell$ is a real product formula coefficient, so that $\alpha_\ell\tau$ is the signed duration of the $\ell$th gate. The gates are indexed from left to right in this product. Since $S_{k-1}$ is obtained by deleting all $H_k$ evolutions from $S_k$, we can write
\begin{equation}
\label{eq:Sk-1_as_gates}
\begin{aligned}
S_{k-1}&=G'_1G'_2\cdots G'_{M_k},\\
G'_\ell&=
\begin{cases}
G_\ell, & j_\ell<k,\\
I, & j_\ell=k.
\end{cases}
\end{aligned}
\end{equation}

We now define the increment obtained from the first $\ell$ gates as
\begin{equation}
\label{eq:Ck_prefix}
W_\ell\coloneqq
(G'_1\cdots G'_\ell)^\dagger
(G_1\cdots G_\ell),
\qquad 0\leq\ell\leq M_k,
\end{equation}
where an empty product is the identity. Thus $W_0=I$ and $W_{M_k}=C_k$. If $j_\ell=k$, then $G_\ell$ appears only in $S_k$ and multiplies $W_{\ell-1}$ on the right. Otherwise, $G_\ell$ appears in both product formulas and conjugates $W_{\ell-1}$. Therefore,
\begin{equation}
\label{eq:Ck_prefix_recursion}
W_\ell=
\begin{cases}
W_{\ell-1}G_\ell, & j_\ell=k,\\
G_\ell^\dagger W_{\ell-1}G_\ell, & j_\ell<k.
\end{cases}
\end{equation}

Next, we derive a similar recursion for the sliced increment. Let $W_\ell^{[L]}$ be obtained from the expression in \cref{eq:Ck_prefix} by deleting all gates generated by $H_j$ with $j\leq k-L$. In particular, $W_0^{[L]}=I$ and $W_{M_k}^{[L]}=C_k^{[L]}$. The sliced recursion is
\begin{equation}
\label{eq:CkL_prefix_recursion}
W_\ell^{[L]}=
\begin{cases}
W_{\ell-1}^{[L]}G_\ell,
& j_\ell=k,\\
G_\ell^\dagger W_{\ell-1}^{[L]}G_\ell,
& k-L<j_\ell<k,\\
W_{\ell-1}^{[L]},
& j_\ell\leq k-L.
\end{cases}
\end{equation}

To bound $\norm*{C_k-C_k^{[L]}}$, we compare $W_\ell$ and $W_\ell^{[L]}$ from $\ell=1$ to $\ell=M_k$. If $j_\ell>k-L$, the two recursions apply the same right multiplication or unitary conjugation, so
\begin{equation}
\norm{W_\ell-W_\ell^{[L]}}
=
\norm{W_{\ell-1}-W_{\ell-1}^{[L]}}.
\end{equation}
Otherwise, $G_\ell$ is deleted from the sliced expression, giving
\begin{equation}
\begin{aligned}
W_\ell-W_\ell^{[L]}
={}&G_\ell^\dagger
(W_{\ell-1}-W_{\ell-1}^{[L]})G_\ell\\
&+G_\ell^\dagger W_{\ell-1}^{[L]}G_\ell
-W_{\ell-1}^{[L]}.
\end{aligned}
\end{equation}
Using the triangle inequality and unitary invariance of the spectral norm, we obtain
\begin{equation}
\label{eq:Ck_prefix_error}
\begin{aligned}
\norm{W_\ell-W_\ell^{[L]}}
\leq{}&
\norm{W_{\ell-1}-W_{\ell-1}^{[L]}}\\
&+\norm{\left[W_{\ell-1}^{[L]},G_\ell\right]}.
\end{aligned}
\end{equation}

Among all gates generated by $\{H_1,\ldots,H_{k-L}\}$, the commutator in \cref{eq:Ck_prefix_error} can be nonzero only for a gate generated by the boundary term $H_{k-L}$. Indeed, $W_{\ell-1}^{[L]}$ acts only on qubits $k-L+1,\ldots,k+1$, so gates generated by terms farther to the left commute with it. Thus each correction is computed in the slice together with the boundary term. The total number of qubits involved is $L+2$, so this commutator is efficiently computable.

Iterating over $\ell=1,\ldots,M_k$ and using $W_0=W_0^{[L]}=I$ gives
\begin{equation}
\label{eq:betaC}
\begin{aligned}
\norm{C_k-C_k^{[L]}}
&\leq
\sum_{\substack{1\leq\ell\leq M_k\\j_\ell=k-L}}
\norm{\left[W_{\ell-1}^{[L]},G_\ell\right]}.
\end{aligned}
\end{equation}
This right-hand side gives $\beta_{C,k}^{[L]}$, with $\beta_{C,k}^{[L]}=0$ when $k\leq L$. The contributions from all preceding $H_k$ evolutions are combined in $W_{\ell-1}^{[L]}$ before each commutator norm is taken, allowing cancellation that would be lost by bounding their contributions separately. Moreover, gates generated by terms farther to the left leave $W_\ell^{[L]}$ unchanged and contribute nothing to the bound, so the calculation only requires the gates generated by $H_{k-L},\ldots,H_k$.

Substituting the local calculation and the two nonlocal tails (\cref{eq:betaF,eq:betaC}) into \cref{eq:increment_recursion}, we obtain the efficiently computable full Trotter error bound
\begin{equation}
\begin{aligned}
\norm{\mathcal A_{c,n}^{(p)}} \leq \norm{\mathcal A_{c,L}^{(p)}} +\sum_{k=L+1}^n\big(&\|C_k^{[L]}-F_k^{[L]}\|\\
&+\beta_{F,k}^{[L]}+\beta_{C,k}^{[L]}\big).
\end{aligned}
\label{eq:q1}
\end{equation}

\subsection{Grouping neighboring increments: the \texorpdfstring{$q$}{q}-bound}
\label{sec:q2}
Similarly to the leading-order error estimate, the triangle inequality in \cref{eq:increment_recursion,eq:q1} loses cancellation between neighboring error increments. We can take this cancellation into consideration by adding $q\geq 2$ terms at a time when computing the local error increment.
For instance, we can define the $q=2$ evolution increment and $q=2$ product formula increment as
\begin{equation}
\label{eq:Fk2_Ck2}
 \begin{split}
 F_{k,2}&=U_{k-2}^\dagger U_k=F_{k-1}F_k,\\
 C_{k,2}&=S_{k-2}^\dagger S_k=C_{k-1}C_k.
 \end{split}
\end{equation}
For more general $q\geq 2$,
\begin{equation}
     \begin{split}
 F_{k,q}&=U_{k-q}^\dagger U_k=\prod_{j=k-q+1}^{k} F_j,\\
 C_{k,q}&=S_{k-q}^\dagger S_k=\prod_{j=k-q+1}^{k} C_j.
 \end{split}
\end{equation}
Then we can write down the $q$-counterpart of \cref{eq:increment_recursion}:
\begin{equation}
    \norm{\mathcal{A}_{c,k}^{(p)}} \leq \norm{\mathcal{A}_{c,k-q}^{(p)}} + \norm{C_{k,q}-F_{k,q}}.
\end{equation}
We can now define the sliced $q$-evolution increment for slice size $L\geq q$:
\begin{equation}
 F_{k,q}^{[L]}\coloneqq
 e^{iH_{[k-L+1,k-q]}\tau}e^{-iH_{[k-L+1,k]}\tau}.
\end{equation}
Similarly, the sliced product formula increment $C_{k,q}^{[L]}$ is obtained by excluding gates outside $\{H_{k-L+1},\ldots,H_k\}$ from $S_{k-q}^\dagger S_k$. Using the triangle inequality,
\begin{equation}
\begin{aligned}
    &\quad \,\, \norm{C_{k,q}-F_{k,q}}\\
    &\leq \norm{C_{k,q} -C_{k,q}^{[L]}} + \norm{C_{k,q}^{[L]}-F_{k,q}^{[L]}} + \norm{F_{k,q}-F_{k,q}^{[L]}}.
\end{aligned}
\end{equation}

We can still compute the sliced $q$-error increment $\|C_{k,q}^{[L]}-F_{k,q}^{[L]}\|$ via an $(L+1)$-qubit simulation. However, in this case, we should choose $L=p+q$ rather than $p+1$ to ensure that all error contributions supported on $p+1$ consecutive terms including $H_{k-q+1}$ are included in the computation.

The upper bound of $\norm*{F_{k,q}-F^{[L]}_{k,q}}$, denoted by $\beta^{[L]}_{F,k,q}$, is the right-hand side of \cref{eq:betaF} with $H_k$ replaced by
$H_{[k-q+1,k]}$ and $H_{[k-L+1,k-1]}$ replaced by $H_{[k-L+1,k-q]}$, i.e.,
\begin{equation}
\begin{aligned}
& \quad \,  \norm{F_{k,q}-F^{[L]}_{k,q}}
\\
&\leq \int_0^\tau ds \, (\tau-s)  \Big\| \big[
H_{k-L},\\
& \quad \quad \quad e^{isH_{[k-L+1,k-q]}}
H_{[k-q+1,k]} e^{-isH_{[k-L+1,k-q]}}
\big] \Big\|.
\end{aligned}
\label{eq:betaFq}
\end{equation}
The upper bound of $\norm*{C_{k,q}-C^{[L]}_{k,q}}$, denoted by $\beta^{[L]}_{C,k,q}$, can be obtained by the following generalization. Using the gate decomposition $S_k=G_1\cdots G_{M_k}$ in \cref{eq:Sk_as_gates}, we now obtain $S_{k-q}$ by deleting all gates generated by $H_{k-q+1},\ldots,H_k$. Thus we redefine $G'_\ell$ as
\begin{equation}
\begin{aligned}
G'_\ell&=
\begin{cases}
G_\ell, & j_\ell\leq k-q,\\
I, & j_\ell>k-q,
\end{cases}\\
S_{k-q}&=G'_1G'_2\cdots G'_{M_k}.
\end{aligned}
\end{equation}
Following \cref{eq:Ck_prefix}, define
\begin{equation}
\label{eq:Ckq_prefix}
W_{\ell,q}\coloneqq
(G'_1\cdots G'_\ell)^\dagger
(G_1\cdots G_\ell),
\qquad 0\leq\ell\leq M_k.
\end{equation}
Then $W_{0,q}=I$ and $W_{M_k,q}=C_{k,q}$. Gates generated by $H_{k-q+1},\ldots,H_k$ multiply the prefix increment on the right, while gates generated by $H_1,\ldots,H_{k-q}$ conjugate it.

Let $W_{\ell,q}^{[L]}$ be the sliced version of $W_{\ell,q}$, obtained by deleting all gates generated by $H_j$ with $j\leq k-L$. Its recursion is
\begin{equation}
\label{eq:CkqL_prefix_recursion}
W_{\ell,q}^{[L]}=
\begin{cases}
W_{\ell-1,q}^{[L]}G_\ell,
& k-q<j_\ell\leq k,\\
G_\ell^\dagger W_{\ell-1,q}^{[L]}G_\ell,
& k-L<j_\ell\leq k-q,\\
W_{\ell-1,q}^{[L]},
& j_\ell\leq k-L,
\end{cases}
\end{equation}
with $W_{0,q}^{[L]}=I$ and $W_{M_k,q}^{[L]}=C_{k,q}^{[L]}$. The same gate-by-gate comparison as in \cref{eq:Ck_prefix_error} shows that only gates generated by the boundary term $H_{k-L}$ contribute to the bound. Therefore, \cref{eq:betaC} generalizes to
\begin{equation}
\begin{aligned}
\norm{C_{k,q}-C_{k,q}^{[L]}}
&\leq
\sum_{\substack{1\leq\ell\leq M_k\\j_\ell=k-L}}
\norm{\left[W_{\ell-1,q}^{[L]},G_\ell\right]}.
\end{aligned}
\label{eq:betaCq}
\end{equation}
This right-hand side gives $\beta_{C,k,q}^{[L]}$. Each commutator acts on at most $L+2$ qubits and is efficiently computable. The contributions from all preceding evolutions generated by $H_{k-q+1},\ldots,H_k$ are combined in $W_{\ell-1,q}^{[L]}$ before each commutator norm is taken. As before, we set $\beta_{F,k,q}^{[L]}=\beta_{C,k,q}^{[L]}=0$ when $k\leq L$.

Finally, we choose the unique \(k_0\in\{L-q+1,\ldots,L\}\) such that $q$ divides $n-k_0$, and our $q$-Trotter error bound is
\begin{equation}
\begin{aligned}
 \norm{\mathcal A_{c,n}^{(p)}}&\leq
 \norm{\mathcal A_{c,k_0}^{(p)}}\\
 &\quad+\sum_{\substack{k=k_0+q,k_0+2q,\dots,n}}
 \big(\|C_{k,q}^{[L]}-F_{k,q}^{[L]}\|\\
 &\hspace{7em} +\beta_{F,k,q}^{[L]}+\beta_{C,k,q}^{[L]}\big).
\end{aligned}
\label{eq:q2}
\end{equation}

To see why this $q=2$ bound is potentially tighter than the $q=1$ bound of \cref{eq:q1}, we notice that, according to \cref{eq:Fk2_Ck2}, the sliced error increment in the $q=2$ case satisfies
\begin{equation}
\begin{aligned}
&\quad \,\, \norm{C^{[L]}_{k,2}-F^{[L]}_{k,2}} \\
&=\norm{C^{[L-1]}_{k-1}C^{[L]}_{k}
-F^{[L-1]}_{k-1}F^{[L]}_{k}}\\
 &=\norm{(C^{[L-1]}_{k-1}-F^{[L-1]}_{k-1})C^{[L]}_{k}
+F^{[L-1]}_{k-1}(C^{[L]}_{k}-F^{[L]}_{k})}\\
&\leq \norm{C^{[L-1]}_{k-1}-F^{[L-1]}_{k-1}} +\norm{C^{[L]}_{k}-F^{[L]}_{k}},
\end{aligned}
\label{eq:transported_pair}
\end{equation}
following the same logic as \cref{eq:leading_pair}. Here the first increment has slice size $L-1$ so that both factors have the same left boundary $k-L+1$, and all operators are embedded in the common $(L+1)$-qubit space. (The same logic also applies to larger $q$, which can potentially give tighter bounds at the cost of more expensive computations.)
However, although all our numerical tests show that the $q=2$ bound is tighter than the $q=1$ bound, this is not rigorously guaranteed because the bounds for different $q$ values have different slice sizes and nonlocal tails. In practice, one can simply take the smaller of the two complete valid bounds.

Note that using general $q$ with $L=p+q$ yields $2^{\OC(p+q)}$ classical computation cost. The resulting bound may allow for more cancellation in the exact calculation of the sliced Trotter error, but it comes with a computational cost that grows exponentially in $q$.

We present numerical validations of our $q=2$ bound (\cref{eq:q2}) in \cref{sec:numerics}. The $q=2$ bound is substantially tighter than existing bounds for all translation-invariant 1D examples tested. It only overestimates the true Trotter error by about 37--51\% in those tests, achieving orders of magnitude improvement over Refs.~\cite{childs2021theory,schubert2023trotter}. Furthermore, thanks to the exact (but efficient) simulation of sliced product formulas, this bound is ordering-aware: in particular, for almost all sampled translation-invariant Hamiltonians, it correctly predicts whether brickwall or sequential circuits yield smaller Trotter error. Therefore, in the next section we develop an efficient algorithmic framework to find the optimal gate ordering.

\subsection{Enlarging the slice: the \texorpdfstring{$(q,v)$}{(q,r)}-bound}
\label{sec:qr_bound}

The choice $L=p+q$ in \cref{sec:q2} can be further generalized to
\begin{equation}
    L=p+q+v,\qquad v=0,1,2,\ldots.
\end{equation}
We call \cref{eq:q2} with this slice size the \emph{$(q,v)$-Trotter error bound}. The sliced increments and the nonlocal tail bounds in \cref{eq:betaFq,eq:betaCq} are evaluated with this larger $L$, without changing their definitions. The case $v=0$ recovers the standard $q$-bound.

While increasing $q$ groups more neighboring increments, increasing $v$ extends the slice farther to the left without changing the newly added terms. The influence of the additional terms is then included in the joint calculation of $\norm*{C_{k,q}^{[L]}-F_{k,q}^{[L]}}$ rather than bounded separately in the two nonlocal tails. This allows for more cancellation and can improve the bound when the nonlocal tails are significant.

The local calculations involve at most $p+q+v+2$ qubits, giving a classical computation cost of $2^{\OC(p+q+v)}$ per window. Increasing $v$ does not guarantee a tighter complete bound, so in practice one can take the smallest bound among the values of $v$ evaluated.

The two parameters, $q$ and $v$, characterize a large family of Trotter error bounds. Since they are all valid upper bounds, one can always try multiple of them and use the numerically smallest one and the tightest bound available. We leave the optimal choice of $(q,v)$ under given classical computational resources as an open problem.

For simplicity, most of the numerical tests in \cref{sec:num_bounds} use the $q=2,v=0$ bound, which we henceforth refer to simply as the $q=2$ bound. We also test the $q=2,v=1$ bound for sixth-order product formulas simulating the Heisenberg model, and find that it further improves the estimated number of Trotter steps to reach a given target precision.

\section{Graph-theoretic algorithms for ordering optimization}
\label{sec:optimization}
In this section, we turn from Trotter error bounds to ordering optimization. In particular, we propose an efficient algorithm to find the consecutive-label ordering pattern $c$ (or equivalently, the permutation up to exchanges of commuting gates) that minimizes our Trotter error bounds. 

\subsection{De Bruijn graphs for the leading-order bound}
For simplicity, we focus on minimizing the leading-order error estimate (\cref{eq:newbound}) 
first and postpone the construction based on the full Trotter error bounds (\cref{eq:q1,eq:q2}) to \cref{sec:finite_pattern_cost}.

We first formalize the problem as follows. Given a system of $n$ terms and formula order $p\geq 2$ with $n\gg 2^p$, we aim to find $c\in\{0,1\}^{n-1}$ such that $\sum_{j=1}^{n-p} f_{H,p}(c_j,\dots,c_{j+p-1})$ (the bulk contribution to the leading-order bound in~\cref{eq:newbound}) is minimized.
We define the key graph-theoretic tool, the \emph{de Bruijn graph}, corresponding to the local leading-order error contribution $f_{H,p}(c')$ as follows.
\begin{definition}
    The de Bruijn graph corresponding to the leading-order Trotter error analysis of a $p$th-order product formula is a weighted graph whose vertices are $(p-1)$-bit strings, i.e.,
    \begin{equation}
     V = \{0,1\}^{p-1},   
    \end{equation}
    with directed edges
    \begin{equation}
    \begin{aligned}
    E =  &\{((v_1,v_2,\dots,v_{p-1}),(v_2,v_3,\dots,v_{p-1},x)):\\
    &\, \, x\in\{0,1\}, v_j\in\{0,1\}, 1\leq j\leq p-1 \}.
    \end{aligned}
    \end{equation}
    Each edge $s\rightarrow t$ 
    has weight (see \cref{thm:fHp})
    \begin{equation}
        w(s,t) = f_{H,p}(s_1,s_2,\dots,s_{p-1},t_{p-1}).
    \end{equation}
\end{definition}

\begin{figure}
    \centering
    \includegraphics[width=\linewidth]{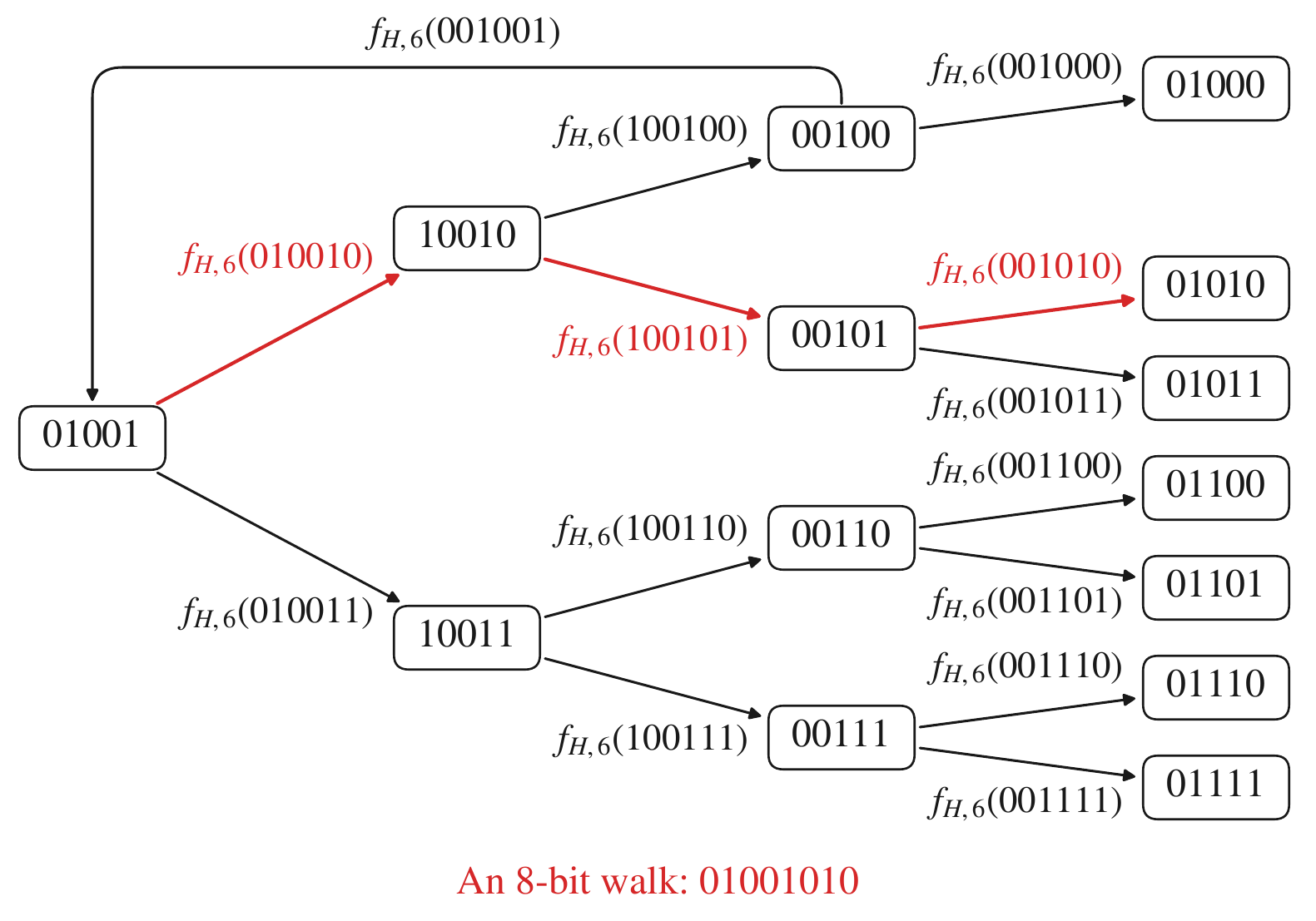}
    \caption{Part of the de Bruijn graph corresponding to $p=6$. Every vertex is a $(p-1)$-bit string, every edge corresponds to a $p$-bit string $c'$ with weight $f_{H,6}(c')$, and every longer bit string corresponds to a unique walk in the graph. The example presented in the graph is a walk of $8$ bits (i.e., a walk corresponding to an $8$-bit ordering pattern) with corresponding edges shown in red.}
    \label{fig:debruijn}
\end{figure}

\begin{figure}
    \centering
    \includegraphics[width=\linewidth]{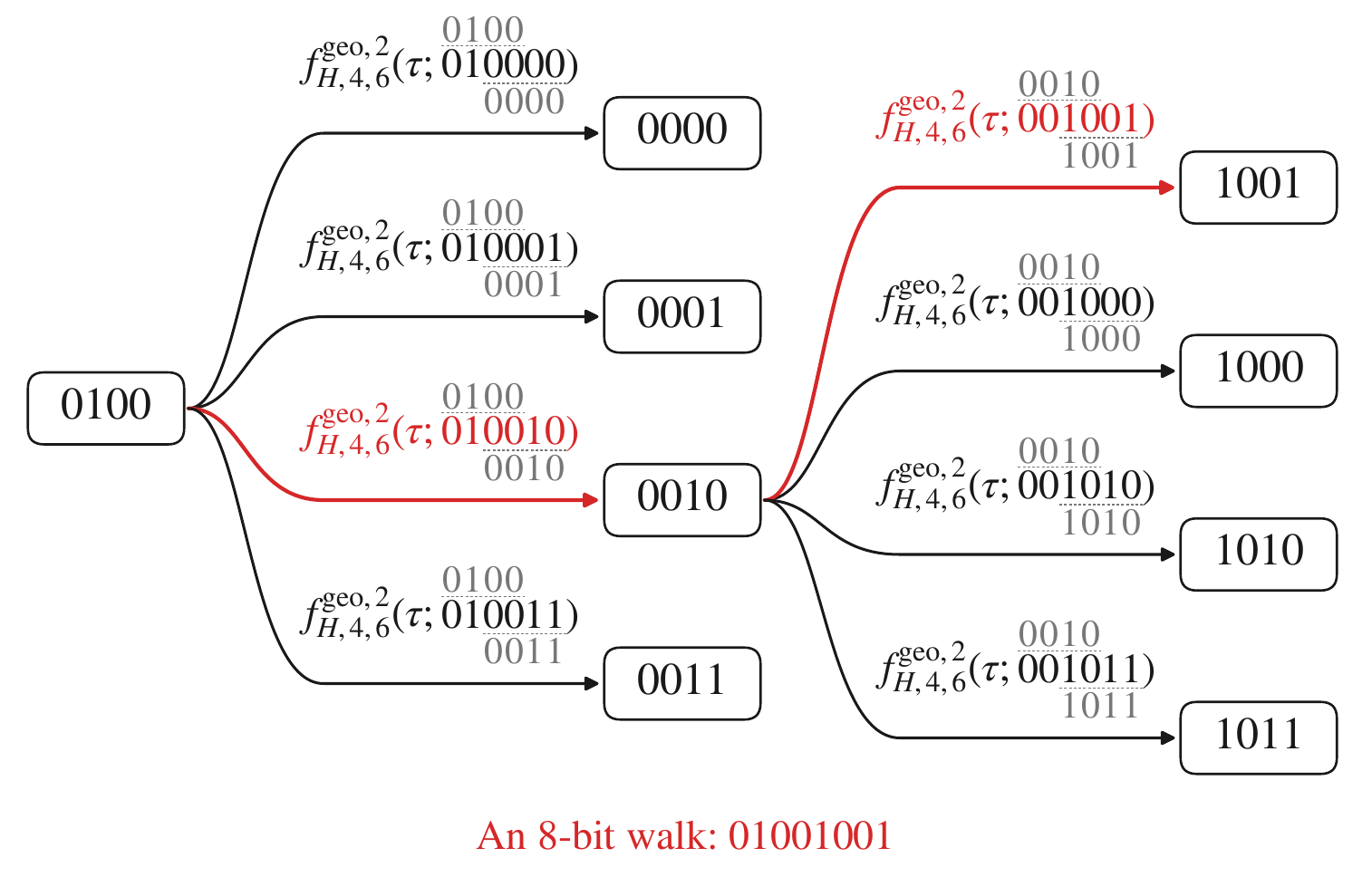}
    \caption{Part of the de Bruijn graph corresponding to $p=4$ based on the $q=2$ Trotter-error bound. Every vertex is a $p$-bit string, every edge is a $(p+q)$-bit string $c'$ with weight determined by $f_{H,p,L}^{\mathrm{geo},q}(\tau;c')$, and every longer bit string corresponds to a unique walk in the graph. The example presented in the graph is an $8$-bit walk
    with corresponding edges shown in red.}
    \label{fig:debruijnq}
\end{figure}

See \cref{fig:debruijn} for an illustration of a part of the de Bruijn graph for $p=6$.
Observe that the total number of length-$p$ substrings in $c$ is $n-p$, and every $c$ corresponds to a unique directed walk from $(c_1,\dots,c_{p-1})$ to $(c_{n-p+1},\dots,c_{n-1})$, containing $n-p+1$ vertices and $n-p$ edges. Conversely, every valid directed walk in the de Bruijn graph $(V,E)$ gives a valid ordering pattern $c$, corresponding to a class of equivalent term orderings. Therefore, the sum $\sum_{j=1}^{n-p}|f_{H,p}(c_j,\dots,c_{j+p-1})|$ is precisely the sum of edge weights along this walk, and the optimization problem becomes finding the minimum-weight walk with $n-p$ edges in $(V,E)$.

\subsection{De Bruijn graphs for the \texorpdfstring{$q$}{q}-Trotter error bounds}
\label{sec:finite_pattern_cost}
The same construction also applies to the full Trotter error bounds with general $q\geq 1$ and $v\geq 0$ for translation-invariant Hamiltonians. We now choose the graph weights to be the finite-window contributions in \cref{eq:q1,eq:q2}, rather than the leading-order contributions. For the permuted product formula at fixed $p,\tau$ and the $(q,v)$-Trotter error bound, using the setting that $L=p+q+v$, we can define a new weight function that takes $L$ bits as input:
\begin{equation}
\begin{aligned}
 &f^{\mathrm{geo},q}_{H,p,L}(\tau;c_{k-L},\ldots,c_{k-1})\\
 &\quad\coloneqq\norm{C_{k,q}^{[L]}-F_{k,q}^{[L]}}
            +\beta_{F,k,q}^{[L]}+\beta_{C,k,q}^{[L]}.
\end{aligned}
\label{eq:geo_local_cost}
\end{equation}
This weight still depends on $L$-bit substrings of $c$ for the following reasons. The numerical calculation of the sliced error increment, $\norm*{C_{k,q}^{[L]} - F_{k,q}^{[L]}}$, involves $L$ terms $H_{k-L+1},\ldots,H_k$, whose ordering is specified by $L-1$ bits. However, recall from \cref{eq:betaCq} that the computation of $\beta_{C,k,q}^{[L]}$ involves nontrivial commutators with $e^{i \delta_{j,l} H_{k-L}}$ gates in the product formula. Therefore, the relative order with $H_{k-L}$ should also be taken into account. This implies that we need to consider the relative order of $\{H_{k-L},\dots,H_{k}\}$ for every $k$, which involves $L+1$ terms in total and can be encoded in $L$ bits.

A key observation is that, for $q>1$, the weights in \cref{eq:q2} are added after every $q$ new terms.
This means that every vertex should be a string of $L-q = p+v$ bits, and each edge should correspond to a string of $L$ bits (denoted by $c'$), connecting the vertex of $(c'_1,\dots,c'_{p+v})$ to the vertex of $(c'_{q+1},\dots,c'_L)$. Note that if $q>p+v$, there can be multiple edges connecting two vertices.

Now, we can define the modified de Bruijn graph, the \emph{$q$-step de Bruijn graph}, for a generalized $(q,v)$-Trotter error bound.
\begin{definition}
    \label{def:debruijnq}
    The $q$-step de Bruijn graph corresponding to a $(q,v)$-Trotter error bound of a $p$th-order product formula is a weighted graph whose vertices are $(p+v)$-bit strings, i.e.,
    \begin{equation}
     V = \{0,1\}^{L-q} = \{0,1\}^{p+v},   
    \end{equation}
    with edges
    \begin{equation}
    \begin{aligned}
    E = \{0,1\}^{L} = \{0,1\}^{p+q+v},
    \end{aligned}
    \end{equation}
    where every edge $e\in E$ is directed from vertex $(e_1,\dots,e_{p+v})$ to vertex $(e_{q+1},\dots,e_L)$, and has weight
    \begin{equation}
        w(e) = f^{\mathrm{geo},q}_{H,p,L}(\tau;e_1,e_2,\dots,e_{L}).
    \end{equation}
\end{definition}
See \cref{fig:debruijnq} for the illustration of a part of the de Bruijn graph corresponding to the $q=2$ Trotter error bound (with $v=0$).
The permutation pattern $c$ minimizing the general $(q,v)$ error bound corresponds to a walk containing $ (n-k_0)/q$ edges (see \cref{eq:q2}). Neglecting the bounded initial contribution in the long-chain limit, optimizing the bulk $(q,v)$-bound corresponds to finding the minimum-weight walk with $(n-k_0)/q$ edges in the de Bruijn graph defined in \cref{def:debruijnq}.

\subsection{Optimal permutation via minimum-mean-weight cycle}
If we use brute-force search to find the optimal consecutive-label ordering pattern of length $n-1$, the computational cost scales exponentially with $n$. 
With the de Bruijn graph formulation, it can be found by dynamic programming in $\OC(n|E|)$ time,
via computing the minimum cost for each vertex at each walk length.
For $ 2^p,2^q \ll n$, we can further simplify the analysis and obtain an $n$-independent algorithm. For simplicity, we focus on optimizing the leading-order error bound for the rest of the section. All results also apply uniformly to arbitrary $q$-Trotter error bounds (\cref{eq:q2}).
\begin{lemma}
As $n$ increases, the minimum mean weight of a walk of length $n-p$ converges to the minimum cycle mean. Repeating a minimum-mean-weight cycle attains this value asymptotically.
\end{lemma}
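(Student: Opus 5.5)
Let $m=n-p$ denote the walk length, $\mu^*$ the minimum cycle mean of the de Bruijn graph $(V,E)$, and $W_m$ the minimum total weight of a walk with $m$ edges. The goal is to show $W_m/m\to\mu^*$. The plan is to prove matching lower and upper bounds of the form $m\mu^*-O(1)\le W_m\le m\mu^*+O(1)$. Here the $O(1)$ constants depend only on $|V|=2^{p-1}$ and the maximum edge weight $w_{\max}=\max_{c'}|f_{H,p}(c')|$, and not on $m$. Dividing by $m$ then gives the claim. We use only that the graph is finite and strongly connected: from any $(p-1)$-bit string one reaches any other by shifting in its $p-1$ bits, so every vertex is within distance $p-1$ of every other.

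\emph{Lower bound (cycle decomposition).} Take any walk $P$ with $m$ edges. While $P$ has more than $|V|$ edges, it revisits some vertex within any window of $|V|+1$ consecutive vertices. Excise the closed subwalk between the two visits; this subwalk is a simple cycle, or can be reduced further until it is one. Repeating the excision splits the edge multiset of $P$ into simple cycles $\gamma_1,\dots,\gamma_r$ plus a residual path with at most $|V|-1$ edges. Each cycle satisfies $w(\gamma_i)\ge\mu^*|\gamma_i|$ by definition of $\mu^*$, and the residual path has weight at least $-(|V|-1)w_{\max}$. Hence
\[
w(P)\ge \mu^*(m-|V|+1)-(|V|-1)w_{\max}=m\mu^*-O(1).
\]
In our setting the weights are norms and thus nonnegative, which makes this step even simpler, but the argument does not need it.

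\emph{Upper bound (repeating the optimal cycle).} Let $\gamma^*$ be a simple cycle attaining $\mu^*$, with length $\ell\le|V|$; such a cycle exists since there are finitely many simple cycles, and Karp's algorithm finds one. Build a walk as follows. Walk from an arbitrary start to a vertex of $\gamma^*$ in at most $p-1$ edges. Then traverse $\gamma^*$ a total of $\lfloor (m-p+1)/\ell\rfloor$ times. Finally pad with fewer than $\ell+p$ arbitrary edges to reach exactly $m$ edges; this is possible because every vertex has out-degree $2$. This walk has weight at most $m\mu^*+(2\ell+2p)w_{\max}$, so $W_m\le m\mu^*+O(1)$, and the same construction shows that repeating $\gamma^*$ attains mean weight $\mu^*+O(1/m)\to\mu^*$. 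The claimed $n$-independent cost then follows from running Karp's algorithm on a graph of size $2^{\OC(p)}$.

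The main obstacle is the lower bound, specifically making the cycle-decomposition bookkeeping rigorous. One has to ensure that the excised pieces are genuinely cycles of the graph, so that each has mean weight at least $\mu^*$, and that the leftover is a path of length less than $|V|$. I would argue this by induction on the walk length, using the pigeonhole principle on any $|V|+1$ consecutive vertices. The rest of the argument consists of routine estimates.
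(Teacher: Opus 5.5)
Your argument is correct and is essentially the paper's proof: excise closed subwalks at repeated vertices to leave fewer than $|V|$ residual edges for the lower bound, and repeat a minimum-mean simple cycle (with $O(1)$ truncation or padding) for the upper bound. The differences are cosmetic: you do not need nonnegative weights, since your residual term $-(|V|-1)w_{\max}$ absorbs a possibly negative $\mu^*$, whereas the paper uses nonnegativity. Two small points of bookkeeping: the excision should continue while any vertex repeats (i.e.\ while at least $|V|$ edges remain) to match your residual count, and the initial approach walk is unnecessary because the walk may start on the cycle.
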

\begin{proof}
    For a walk with $n-p$ edges, we list all edges in the walk as a sequence $(e_1, e_2,\dots,e_{n-p})$ and its vertices as $(v_0,v_1,\ldots,v_{n-p})$. Now, we can repeatedly run the following procedure: find a pair of vertices $v_j,v_k$ satisfying $v_j=v_k$ with $j<k$, and remove the closed subwalk $e_{j+1},e_{j+2},\ldots,e_k$. In the end, the remaining sequence contains no repeated vertex, and it has at most $|V|-1=2^{p-1}-1$ edges. Since $n\gg 2^p$, the total weight is dominated by the removed cycles rather than by the remaining edges, and this effect becomes more significant as $n$ increases.

    We call a cycle with vertices $v_0,v_1,\ldots,v_k$ a \emph{simple} cycle if $v_0=v_k$ and $v_0,\ldots,v_{k-1}$ are distinct. The length of a simple cycle is upper bounded by $|V|=2^{p-1}$. 
    Note that there always exists a simple cycle with minimum mean weight among all cycles, and the length of the cycle is also upper bounded by $|V|=2^{p-1}$.
    Now, we can take the sum of weights of all removed cycles in the previous procedures. If any removed cycle has mean weight higher than the minimum mean weight, then the mean weight of all edges in all removed cycles is higher than the minimum mean weight. If cycles with strictly larger mean occupy a nonvanishing fraction of the walk, their excess cost persists as $n\rightarrow\infty$. More precisely, let $\lambda_*$ be the minimum cycle mean and let $r\leq|V|-1$ be the length of the remaining path. Since all edge weights are nonnegative, the walk weight is at least $(n-p-r)\lambda_*$. Repeating a minimizing cycle and truncating its last repetition gives weight $(n-p)\lambda_*+\OC(1)$. Dividing these two bounds by $n-p$ proves the claimed convergence. A finite number of more expensive edges can remain without changing the limit.
\end{proof}

With the above lemma, the question we aim to solve reduces to finding the minimum-mean-weight cycle of the de Bruijn graph. An asymptotically optimal ordering pattern $c$ is the repetition of a minimum-mean-weight cycle.
If the cycle contains $m$ edges, then the resulting $c$ corresponds to $m$-translation-invariant permutations (see \cref{def:tipermutation}).

Karp's algorithm finds such a cycle in $\OC(|V||E|)=\OC(2^{2p})$ time (for the general $q$-bound, this is $\OC(2^{2p+q+v})$)~\cite{karp1978characterization}. This cost is manageable for the formula orders of interest ($2^p, 2^q \ll n$).

We test this strategy based on the leading-order error bound for an ensemble of $500$ random translation-invariant 1D nearest-neighbor Hamiltonians for fourth- and sixth-order product formulas. 
We find that sequential circuits, corresponding to $c=00\dots0$ or $c=11\dots1$, minimize the mean local predictor for almost all sampled Hamiltonians at both orders. There are very few counterexamples, one of which is the TFIM shown in \cref{fig_illu}.

Our finding that sequential circuits generally perform well is intuitively plausible but is not a rigorously proved theorem. However, in \cref{sec:nondegenerate}, we prove that for all translation-invariant 1D nearest-neighbor Hamiltonians, the estimated nondegenerate contribution to the leading-order error at orders two and four is smaller for the sequential ordering than for the brickwall ordering. This also explains why TFIM is an exception: its leading-order error has no nondegenerate contribution and consists of only the degenerate part.

\subsection{Optimal bounded-depth permutation}
\begin{figure}[t!]
    \centering
    \includegraphics[width=0.7\linewidth]{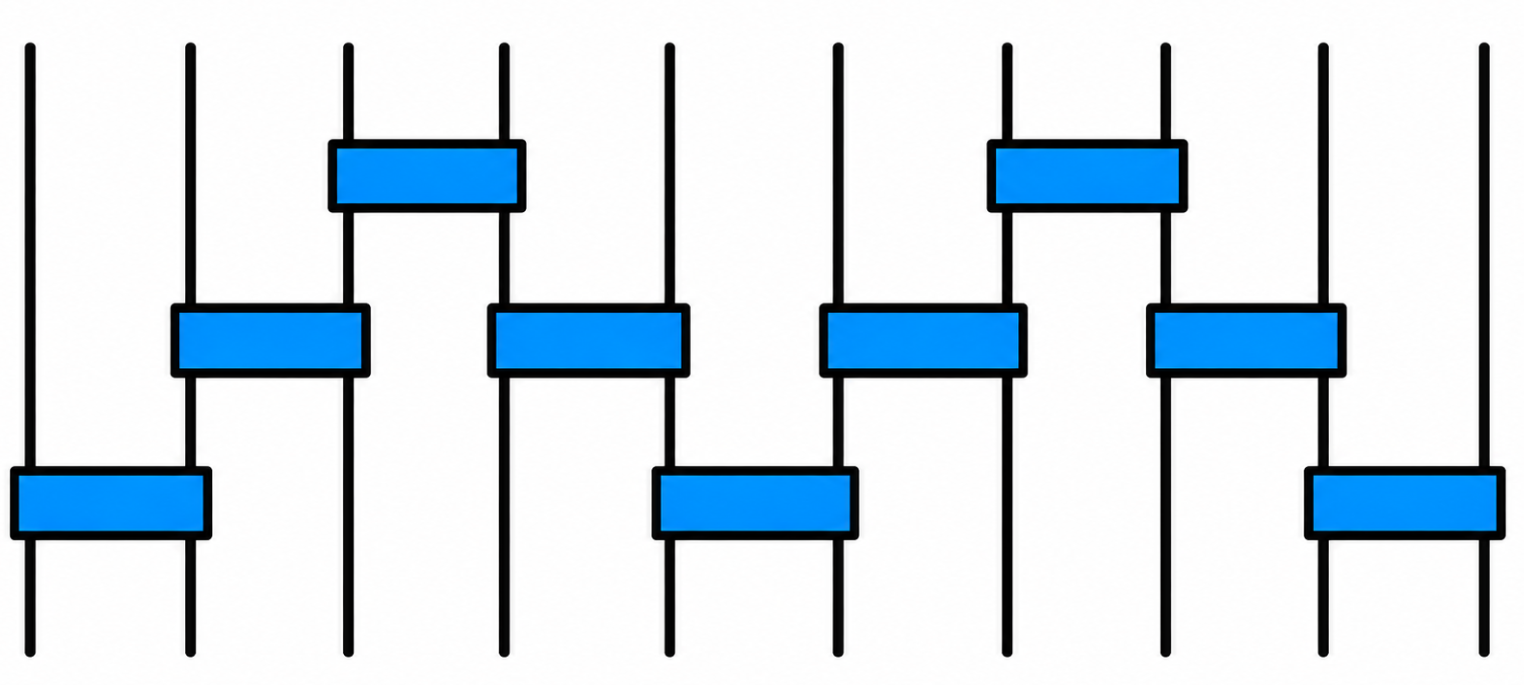}
    \caption{A wave-like parallel sequential circuit with depth $3$. Note the difference between this and the original parallel-sequential circuit introduced in \cref{fig_illu}.
    }
    \label{fig:wavelike}
\end{figure}

The sequential circuit, selected as the best ordering for most Hamiltonians in our tests, has depth $n$ for every first-order stage of the product formula, which may not be practical for near-term digital quantum simulation.
Therefore, finding the optimal permutation with a given upper bound on the depth per stage is also of interest. Now, we solve this problem using a modified de Bruijn graph.
We first show that the circuit depth can be easily read from the consecutive-label ordering pattern $c$.
\begin{lemma}
    The minimum circuit depth of a single stage in the product formula $S_c^{(p)}(\tau)$ is $1$ plus the length of the longest all-0 or all-1 substring of $c$, i.e.,
    \begin{equation}
    \begin{aligned}
       & \quad \, \, \mathrm{depth}(c)  \\
        &=1 + \max_{1\leq j\leq l\leq n-1}\{l-j+1: c_j=c_{j+1}=\dots=c_l\}.
    \end{aligned}
    \end{equation}
\end{lemma}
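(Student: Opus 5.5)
The plan is to turn a single first-order stage $\prod_{j} e^{-iH_{\pi(j)}\delta}$ into a scheduling problem on a directed acyclic graph and read the depth off as the length of a longest directed path. I will take a \emph{layer} to mean a set of gates acting on pairwise disjoint qubits. A depth-$d$ realization of the stage is an assignment of each gate $e^{-iH_j\delta}$ to a layer in $\{1,\dots,d\}$ such that the layered circuit equals the original product. By \cref{lem:consec_label}, any reordering that keeps the relative order of every neighboring pair $H_j,H_{j+1}$ gives the same operator. So the only constraint I will impose is that, for each $j$, the gate of $H_j$ comes before or after the gate of $H_{j+1}$ exactly as $c_j$ prescribes.

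Concretely, I will define a DAG $D_c$ on vertices $\{1,\dots,n\}$ with one edge between $j$ and $j+1$ for each $j$. The edge is oriented $j\to j+1$ if $c_j=1$ and $j+1\to j$ if $c_j=0$. Its underlying graph is a path. A directed path in $D_c$ therefore moves monotonically along the chain through consecutive edges with the same orientation, so it corresponds exactly to a run $c_j=\dots=c_l$ of equal bits. Hence the longest directed path has $1+\max\{l-j+1: c_j=\dots=c_l\}$ vertices, which is the right-hand side of the lemma.

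The lower bound comes from two facts. First, $H_j$ and $H_{j+1}$ share qubit $j+1$, so their gates can never sit in the same layer. Second, the ordered layers must respect the orientation of every edge of $D_c$: in general $[H_j,H_{j+1}]\neq0$, and the depth should be a property of the ordering pattern rather than of accidental commutations, so I will state this as the convention that the pattern $c$ is respected. It follows that the gates along a directed path of $D_c$ must occupy strictly increasing layers, which forces $\mathrm{depth}\ge$ the number of vertices on the longest directed path.

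For the matching construction I will place gate $j$ in layer $\ell(j)$, defined as the number of vertices on the longest directed path of $D_c$ ending at $j$. For every edge $u\to v$ this gives $\ell(u)<\ell(v)$, so all prescribed relative orders are kept. Any two gates in the same layer are then non-adjacent, hence act on disjoint qubits and commute. By \cref{lem:consec_label} the layered circuit equals the original stage. Its depth is $\max_j\ell(j)$, which equals the longest-path length.

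The main obstacle is pinning down the notion of minimum depth so that the lower bound is honest. In particular I must justify that the depth may be computed from the constraint structure encoded by $c$ alone, rather than depending on special commutation relations of a particular $h$. I will state this convention explicitly; everything else is the standard longest-path layering argument for a DAG.
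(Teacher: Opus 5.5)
Your proposal is correct and follows essentially the same route as the paper: the lower bound comes from a run of equal bits forcing a chain of pairwise-overlapping gates in a fixed order, and the matching schedule assigns each gate the number of vertices on the longest directed path ending at it, in the path oriented by $c$. Your explicit convention, that depth is determined by the order constraints encoded in $c$ rather than by accidental commutations of a particular $h$, makes precise what the paper assumes implicitly when it says these gates ``must appear sequentially.''
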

\begin{proof}
    If $c_j=c_{j+1}=\dots=c_{j+k}=1$, then $e^{-iH_j\delta_j},e^{-iH_{j+1}\delta_{j+1}},\dots,e^{-iH_{j+k+1}\delta_{j+k+1}}$ must appear sequentially in one stage, and since adjacent terms have overlapping supports, they cannot be parallelized. The same holds for the all-$0$ case. 
    Conversely, orient the path of Hamiltonian terms according to $c$, and assign each gate to a layer equal to one plus the length of the longest directed path ending at that gate. Adjacent gates then occupy different layers in the required order. The longest directed path contains one more vertex than the longest uniform substring of $c$, so this schedule attains the stated depth.
\end{proof}

Therefore, if the depth $d$ of a single stage is given, one must ensure the consecutive-label order $c$ does not contain any all-$0$ or all-$1$ substring of length $\geq d$. We can modify the de Bruijn graph to reflect these restrictions.
\begin{itemize}
    \item If $d\leq p$, the vertices and edges remain unchanged, but the weights should be
    \begin{equation*}
    w(s,t) =
    \begin{cases}
f_{H,p}(s_1,s_2,\dots,s_{p-1},t_{p-1}), \, \text{if longest} \\
\quad \text{uniform substring of } (s_1,\dots,t_{p-1})\\
\quad \text{is shorter than } d,\\
\infty,\quad  \text{otherwise}.
\end{cases}
\end{equation*}
    \item If $d>p$, the de Bruijn graph now has vertex set $V = \{0,1\}^{d-1}$ and edge set
    \begin{equation}
    \begin{aligned}
        E &=  \{((v_1,v_2,\dots,v_{d-1}),(v_2,v_3,\dots,v_{d-1},x)):\\
        &\quad \quad x\in\{0,1\}, v_i\in\{0,1\}, 1\leq i\leq d-1 \}
    \end{aligned}
    \end{equation}
    with weight function
     \begin{equation*}
    w(s,t) =
    \begin{cases}
|f_{H,p}(s_{d-p+1},\dots,s_{d-1},t_{d-1})|, \quad \text{if longest} \\ \quad \quad \text{uniform substring of } (s_1,\dots,t_{d-1}) \\ \quad \quad \text{is shorter than $d$},\\
\infty  \quad \text{otherwise}.
\end{cases}
\end{equation*}
\end{itemize}
Now, the optimal depth-$d$ consecutive-label ordering pattern $c$ can be obtained by finding the minimum mean weight cycle of the modified de Bruijn graph. Karp's algorithm runs in time $\OC(|V||E|)=\OC(4^{\max(p,d)})$, which is efficient when $d=\OC(\log n)$. 

We test this algorithm for $d=3$ on the same $500$ random Hamiltonians. At fourth order, all $500$ selected patterns are \emph{wave-like} (parallel-sequential) circuits (see \cref{fig:wavelike}), with pattern $c=0^{d-1}1^{d-1}0^{d-1}1^{d-1}\dots$ or $c=1^{d-1}0^{d-1}1^{d-1}0^{d-1}\dots$, including translations of these patterns. At sixth order, $466$ are wave-like and $34$ ($6.8\%$) instead have the period-three pattern $110110\dots$, which is also a type of parallel-sequential circuit. We present their actual-error performance in \cref{fig:karp_performance}.

\section{Numerical validation}
\label{sec:numerics}
We numerically test the effectiveness of our error bounds, in terms of tightness and success fraction in predicting which ordering is better. We also test the performance of the term ordering patterns selected by the de Bruijn graph-based algorithm among the $2^{n-1}$ candidates.

\subsection{Ordering-aware Trotter error bounds}
\label{sec:num_bounds}

In this subsection, we test our error bounds for
two characteristic permutations: brickwall (BW) and sequential (SEQ). Note that both SEQ and BW have two variants: forward SEQ ($c=11\dots 1$), backward SEQ ($c=00\dots 0$), even-odd BW ($c=0101\dots$), and odd-even BW ($c=1010\dots$).
Therefore, in the small-system comparisons of \cref{fig:comparison_1}, we average over the two variants of each circuit family. We also compute the \emph{BW/SEQ ratio}, the ratio between BW and SEQ's averaged Trotter error or upper bounds. The larger-system tests use the representative variants specified in the figure captions.

We start with small-scale numerical tests, presented in \cref{fig:comparison_1}, and then scale up to many qubits, as shown in \cref{fig:scaleup,fig:stepnumber,fig:stepnumber_6th}. We introduce the settings as follows.

\subsubsection{Small-scale validations}

We test our leading-order estimate and $q=2$ bound on $10$-qubit systems, and compare them with the state-of-the-art Trotter error bounds in Refs.~\cite{childs2021theory,schubert2023trotter}.
We use step size $\tau=0.1$ and Suzuki formulas with orders $p=2,4,6$. We consider three fixed translation-invariant models defined in \cref{tab:hamiltonians} and $500$ random translation-invariant Hamiltonians. 

\begingroup
\renewcommand{\arraystretch}{1.3}
\begin{table}[h!]
\centering
\begin{tabular}{c|l}
\hline
Model & Unnormalized local term $\tilde h_j$ \\
\hline
Heisenberg & $X_jX_{j+1}+Y_jY_{j+1}+Z_jZ_{j+1}$ \\
TFIM & $X_jX_{j+1}+Z_j$ \\
XYZ & $X_jX_{j+1}+0.7Y_jY_{j+1}+1.3Z_jZ_{j+1}+0.4Z_j$ \\
\hline
\end{tabular}
\caption{Fixed models used in \cref{fig:comparison_1}, with $h_j=\tilde h_j/\|\tilde h_j\|$ and $H=\sum_{j=1}^{N-1}h_j$. On-site fields act on the left site of each bond.}
\label{tab:hamiltonians}
\end{table}
\endgroup

We generate the random Hamiltonians through the following procedure: we draw a complex $4\times4$ matrix $G$ whose real and imaginary entries are independent standard Gaussian variables, and normalize it as
\begin{equation}
h=\frac{G+G^{\dagger}}{\|G+G^{\dagger}\|},
\qquad H=\sum_{j=1}^{N-1}h_j,
\label{eq:num_random_ti}
\end{equation}
where $h_j$ is the same $h$ translated to qubits $j$ and $j+1$.

\begin{figure*}[t!]
\centering
\includegraphics[width=\textwidth]{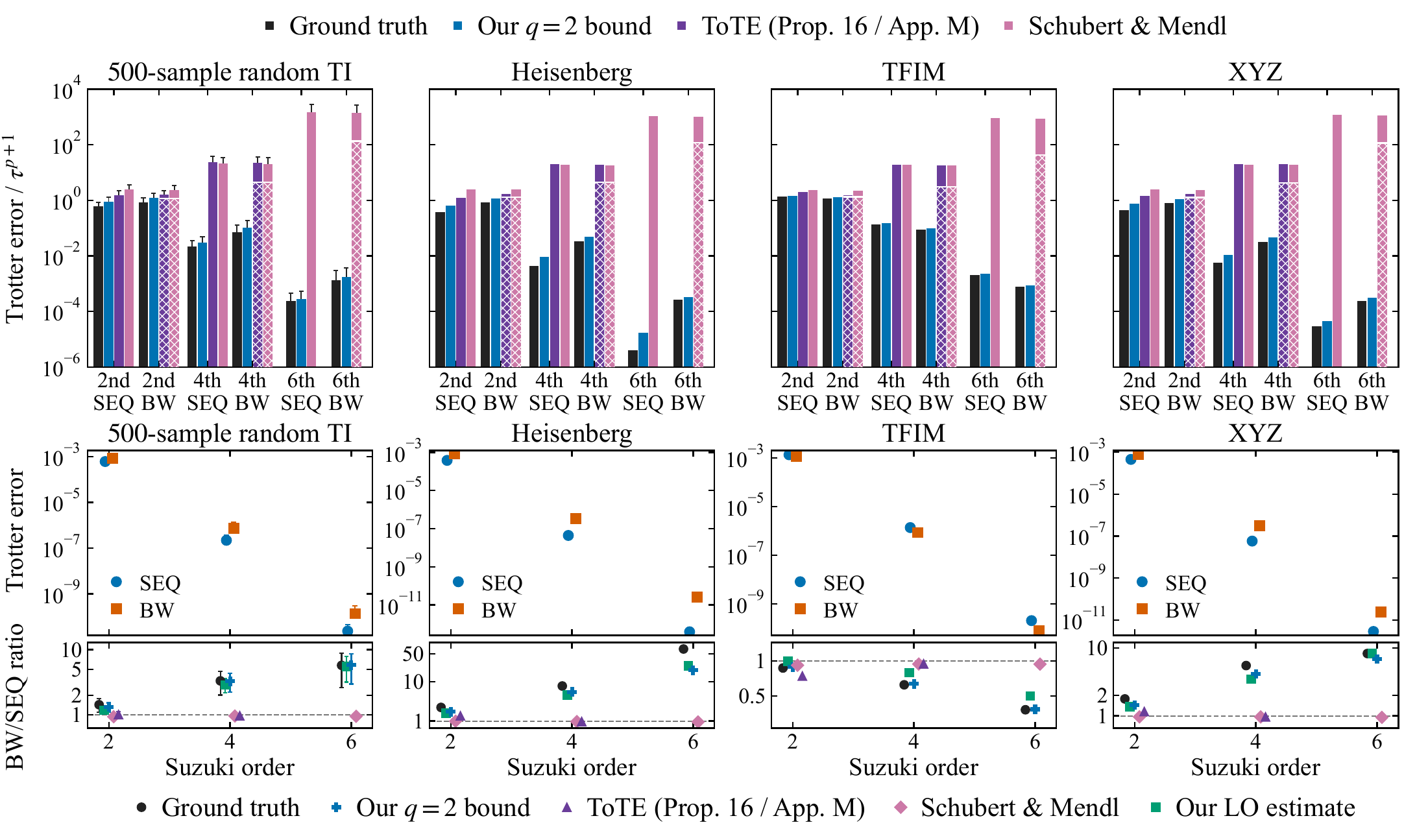}
\caption{Single-step errors and bounds for $10$-qubit open chains at $\tau=0.1$
Columns show the $500$-sample random translation-invariant ensemble and three fixed models. Top panels show actual errors and full bounds, both
divided by $\tau^{p+1}$. 
Note that the ToTE and \textcite{schubert2023trotter} BW bars have two layers: the full heights correspond to individual-term bounds; the shaded heights show the tighter bounds based on the grouping technique. 
ToTE uses Proposition 16 at $p=2$ and Appendix M at $p=4$. Lower panels show actual errors without $\tau^{p+1}$ rescaling
and BW/SEQ ratios, including the LO predictions. All ratio comparisons use individual-term bounds for BW, as the grouping version only works for BW and tightens the BW bound only. Random-ensemble values are means after direction averaging; ratio means are formed sample by sample. Error bars are sample standard deviations, shown only upward in error panels. 
}
\label{fig:comparison_1}
\end{figure*}

We compute the Trotter error bounds for BW and SEQ and the BW/SEQ ratios for the three fixed Hamiltonians and the $500$ random samples (for the random Hamiltonians, we also compute their means and standard deviations) in the following settings.
\begin{itemize}
    \item \emph{Ground truth}: we evaluate the exact Trotter error for BW and SEQ by directly computing $S^{(p)}_c(\tau)-e^{-iH\tau}$ and taking its spectral norm.
    \item \emph{Leading-order error estimate (LO)}: we  use the error bound introduced in \cref{eq:newbound,eq:fhp} to compute the leading-order error estimate.
\item \emph{$q=2$ Trotter error bound}: we use the $q=2$ (and $v=0$) construction of the full Trotter error bound in \cref{eq:q2}, which adds two bonds at each increment. At second order, $L=p+q=4$, and we let the consecutive-label ordering pattern grow through $3\to5\to7\to9$ terms; at fourth order, we use $L=6$ and the pattern grows through $5\to7\to9$ terms. 
However, one caveat here is that, at sixth order, with $L=p+q=8$, the permutation pattern only grows once from $7$ terms to $9$ terms, and the nonlocal tail computation is involved only once. This means that the small-scale numerics for the $q=2$ bound at $p\geq 6$ are similar to directly computing the ground truth. Therefore, the true performance and advantage of our bound for $p\geq 6$ product formulas is better demonstrated in the large-scale error estimation shown in \cref{fig:scaleup}, and the Trotter step estimation in \cref{fig:stepnumber,fig:stepnumber_6th}.

\item \emph{ToTE bound}: We compute the small-prefactor bounds of Ref.~\cite{childs2021theory}, abbreviated ToTE (Theory of Trotter Error).
At second order, we use Proposition 16 of \cite{childs2021theory}, which bounds sums of double commutators while keeping some matrix sums inside the norms. At fourth order, we use the numerical bound introduced in Appendix M of \cite{childs2021theory}, which utilizes a weighted sum of nested-commutator norms. For SEQ, these nested commutators are generated according to the Suzuki formula that contains $n$ local gates at every stage. For BW, there is an alternative approach to utilize the ToTE bound: one can choose to group all commuting terms as $H_\mathrm{even}$ and $H_\mathrm{odd}$ such that $H=H_\mathrm{even} + H_\mathrm{odd}$. Using this grouping strategy, instead of considering the BW circuit as $e^{-i H_{1} \tau} e^{-i H_{3} \tau} \cdots e^{-i H_{2} \tau} e^{-i H_{4} \tau}$, one can simply write the circuit as $e^{-i H_\mathrm{odd} \tau} e^{-i H_\mathrm{even} \tau}$, which simplifies the expressions for nested commutators and numerically gives tighter bounds for BW. 
However, directly evaluating these grouped commutator norms using full-chain matrices is not scalable to larger $n$. A scalable bound can instead be obtained by exploiting geometric locality and applying additional triangle inequalities, as in the large-scale tests below.
In our small-scale numerics, we present the bounds with both decompositions for a fair comparison. We restrict the ToTE comparison to its explicitly evaluated second- and fourth-order small-prefactor bounds. At sixth order, we only use the general higher-order extension of this approach given by Schubert and Mendl, described below.

\item \emph{\textcite{schubert2023trotter} bound}: we test the small-prefactor bounds of Theorem 1 of \textcite{schubert2023trotter}, which is also based on computing nested commutators, but keeps additional sums inside norms and applies at arbitrary order. We use the standard ``central splitting point" version of their bound, without optimizing over that choice. 
Note that its evaluation is computationally demanding at higher orders: the $n=9$ sixth-order formula has $401$ exponential stages after adjacent equal generators are merged, giving many commutator contributions to process. For $M$ individual $2$-qubit evolutions ($M=\mathcal{O}(n5^{p/2})$ for the Suzuki formula in \cref{eq:recursive_suzuki}), directly enumerating the multi-indices 
arising in this approach
involves $\mathcal{O}(M^p)$ commutator-norm terms at fixed $p$. In order to reduce the computational cost, we take advantage of geometric locality, combine repeated contributions by dynamic programming, and reuse local commutator norms across translated terms. For translation-invariant chains at fixed $p$, the scaling in $n$ can be reduced to $\mathcal{O}(n^2)$,
but the dependence on $p$ is still costly: at sixth order, the four ten-qubit circuit variants require computing $45{,}737$ different nested-commutator norms of matrices on at most eight qubits, even after combining commuting adjoint operations.
The \textcite{schubert2023trotter} bound also supports a less efficient but tighter grouping strategy for BW circuits, so we present the results both with and without grouping for BW.

\end{itemize}

We show the results of these small-scale numerical tests in \cref{fig:comparison_1} and briefly summarize them as follows.

The ground truth is that sequential ordering is better for Heisenberg and XYZ at all three orders, whereas brickwall is better for TFIM. The $q=2$ bound correctly predicts the preferred ordering at second and fourth order. For example, at fourth order, the actual and predicted ratios are $7.72$ and $5.44$ for Heisenberg, and $0.624$ and $0.636$ for TFIM.
The LO bound predicts the preferred ordering except for second-order TFIM, where it gives a tie. By contrast, the individual-bond reference bounds favor BW at fourth order for all three models.

For the ensemble of $500$ random translation-invariant Hamiltonians, the ground truth is that SEQ has smaller Trotter error in $446$, $493$, and $497$ of the $500$ samples at orders two, four, and six, respectively; BW is better in the remaining $54$, $7$, and $3$ samples.

At fourth order, the Schubert \& Mendl and ToTE bounds have mean bound-to-error ratios of about $1088$ and $1195$ for SEQ, and $342$ and $383$ for BW, respectively. 
Grouping the BW terms tightens these reference bounds, but a substantial gap remains. 
At fourth order, they predict mean BW/SEQ ratios of $0.959$ and $0.967$, respectively, whereas the actual mean ratio is $3.32$. Both favor BW in every sample, giving a success rate of only $1.4\%$. At second order, the success rates are $10.8\%$ for \textcite{schubert2023trotter} and $66.6\%$ for ToTE.

The mean bound-to-error ratios over the $500$ random Hamiltonians predicted by our $q=2$ bound for SEQ and BW are, respectively, $1.51$ and $1.41$ at second order, $1.39$ and $1.40$ at fourth order, and $1.22$ and $1.26$ at sixth order. The $q=2$ bound predicts the better ordering out of BW and SEQ in $481/500$, $495/500$, and $499/500$ samples at orders two, four, and six, respectively. 
However, we emphasize again that our $q=2$ bound for sixth-order product formulas uses slices containing $L=8$ terms. For a small system of only $10$ qubits, evaluating our bound is therefore already comparable in cost to computing the actual Trotter error directly. To better assess the practical performance of our bounds, we refer the reader to the following \emph{large-scale tests}, which consider systems of up to $256$ qubits.

\subsubsection{Larger-scale tests}

In the small-scale tests, the slices used in the $q=2$ calculation cover much of the chain, and some increments can be evaluated on complete subchains. Therefore, we must test the bound when the system is much larger than the slice to estimate how much improvement our new bounds can achieve.

We extend the three fixed models to $12$--$256$ qubits, keeping $\tau=0.1$ and norm-one local terms. We use $L=p+q=6$ for fourth order and $L=8$ for sixth order.
We compare with the \textcite{schubert2023trotter} bound, which applies at both orders and is slightly tighter than the fourth-order ToTE bound, as shown in \cref{fig:comparison_1}.

\begin{figure}[t]
    \centering
    \includegraphics[width=\linewidth]{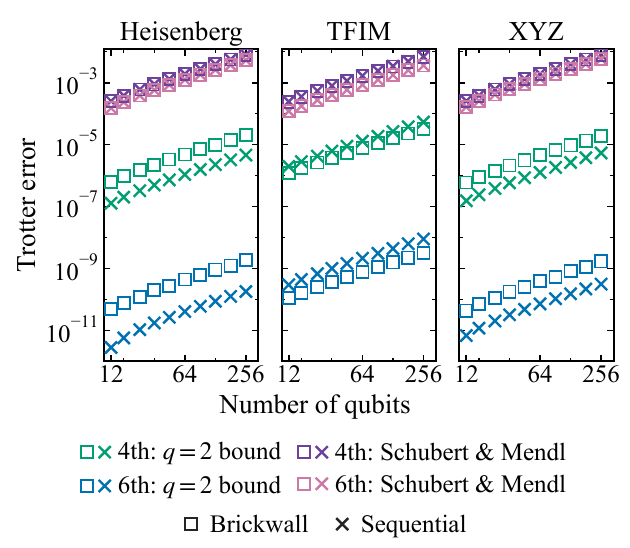}
    \caption{Fourth- and sixth-order Trotter error bounds for the Heisenberg, TFIM, and XYZ models on open chains of $12$--$256$ qubits, with $\tau=0.1$ and normalized local terms. The $q=2$ bound uses a slice of $L=6$ terms for $p=4$ and $L=8$ for $p=6$.
    \textcite{schubert2023trotter} uses the individual-term bound for BW. Colors distinguish the bound and formula order; open squares show even-odd BW and crosses show forward SEQ.}
    \label{fig:scaleup}
\end{figure}

As shown in \cref{fig:scaleup}, the $q=2$ bound remains multiple orders of magnitude tighter than the \textcite{schubert2023trotter} bound throughout this range. 
The $q=2$ bound continues to favor SEQ for Heisenberg and XYZ and BW for TFIM, consistent with the small-system ground truth, while the Schubert--Mendl values for the two circuits are nearly indistinguishable. Thus the improvement persists when the local calculation of our $q=2$ bound covers only a small part of the full system.

To see the actual improvement in resource estimation of quantum simulation, we also 
test how the improved error bound reduces the number of Trotter steps needed for a longer evolution. We first follow exactly the same settings in Ref.~\cite{childs2021theory}: in this test, we use the unnormalized random-field Heisenberg model,
\begin{equation}
H=\sum_{j=1}^{n}
\bigl(X_jX_{j+1}+Y_jY_{j+1}+Z_jZ_{j+1}+z_jZ_j\bigr),
\label{eq:num_random_field}
\end{equation}
where the $z_j$ are independent and uniformly random on $[-1,1]$. As in Ref.~\cite{childs2021theory}, we use five random-field realizations. For every realization, we first generate the longest $256$-qubit chain, and take their first $n$ fields for shorter chains. 
Just as in Ref.~\cite{childs2021theory}, we use the $4$th-order Suzuki formula, set the total evolution time as $t=n+1$, 
and let the target spectral-norm error be $10^{-3}$.

We also compute the number of Trotter steps required for $6$th-order Suzuki formula based on our $q=2,v=0$ bound, $q=2,v=1$ bound, and the \textcite{schubert2023trotter} bound.
We use the unnormalized Heisenberg model without the random field, because translation invariance allows us to reuse local commutator norms when evaluating the \textcite{schubert2023trotter} bound, substantially reducing the computational cost.

We compute the required step counts in five ways:
\begin{itemize}
    \item \emph{Ground truth}: we simulate the complete evolution and vary the number of Trotter steps until its error is below the target. This directly includes cancellation between steps. We evaluate chains of $4,6,8,10,12$ qubits.
    \item \emph{ToTE bound (4th-order only)}: we multiply the single-step bound by the number of steps and find a count that meets the target. We use the individual-bond  construction \cite[Appendix M]{childs2021theory} for SEQ and the two-layer construction for BW. For BW, we further expand the innermost layer into individual bonds before taking norms, so each norm can be evaluated on a finite local region. The time dependence is a $\tau^5$ factor, so the commutator norms need only be computed once for all different $\tau$s.
    \item \emph{\textcite{schubert2023trotter} bound (6th-order only)}: we use the same accumulation of single-step bounds. Again, we use \cite[Theorem 1]{schubert2023trotter} with the central splitting point and the individual-bond decomposition for both SEQ and BW, keeping the internal commutator sums inside the norms. At sixth order, this bound has a $\tau^7$ factor, so the commutators need only be computed once.
    \item \emph{$q=2$ bound for 4th order}: we use the same accumulation of single-step bounds with $L=6$.
    Unlike ToTE, this bound must be recomputed at each trial step size. We search over step counts, setting the step size to the total time divided by the count. The search has a resolution of $1\%$ or one step, and we report a count whose bound meets the target.
    \item \emph{$q=2$ bound and $q=2,v=1$ bound for $6$th order}: for the translation-invariant Heisenberg model, we test our bounds for $6$th-order Suzuki formulas. To get even tighter bounds, we also test the $(q=2,v=1)$ version with $L=9$ in addition to the standard $q=2$ bound with $L=8$.   
\end{itemize}

\begin{figure}[t]
\centering
\includegraphics[width=\linewidth]{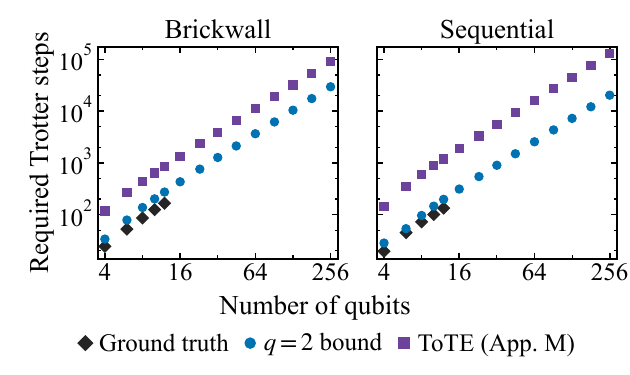}
\caption{Number of fourth-order Trotter steps for the random-field Heisenberg model in \cref{eq:num_random_field}, with $t=n+1$ 
and target error $10^{-3}$. Left: even--odd BW; right: forward SEQ. Points and error bars give means and sample standard deviations over five field realizations. Ground truth uses the complete multi-step evolution on $4,6,8,10,12$ qubits. The bound-based results extend to $256$ qubits and multiply a single-step bound by the number of steps. ToTE uses \cite[Appendix M]{childs2021theory}; the $q=2$ bound uses $L=6$.
}
\label{fig:stepnumber}
\end{figure}

\begin{figure}
    \centering    \includegraphics[width=\linewidth]{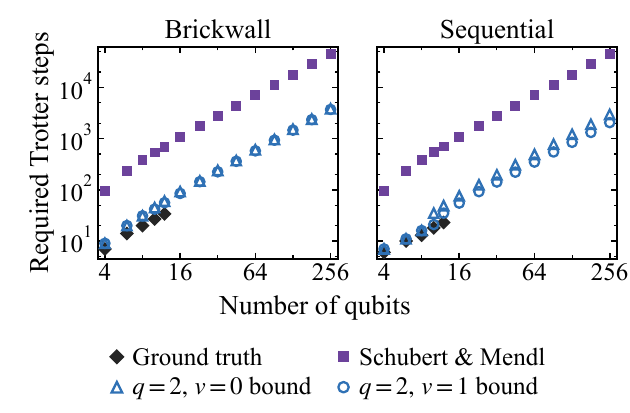}
    \caption{Number of sixth-order Trotter steps for the Heisenberg model, with $t=n+1$ and target error $10^{-3}$. Left: even--odd BW; right: forward SEQ. Ground truth uses the complete multi-step evolution on $4,6,8,10,12$ qubits. The bound-based results extend to $256$ qubits and multiply a single-step bound by the number of steps. We test two versions of our Trotter error bound: the $q=2, v=0$ bound and the $q=2, v=1$ bound. The latter is tighter and predicts fewer steps to achieve the same Trotter error.}
    \label{fig:stepnumber_6th}
\end{figure}

We show the results in \cref{fig:stepnumber}. For $12$ qubits, the mean counts for ground truth, 
$q=2$, and ToTE are approximately $166$, $273$, and $857$ for BW, and $133$, $197$, and $1198$ for SEQ. At $256$ qubits, the $q=2$ estimate is lower than ToTE by factors of about $3.1$ and $6.3$ for BW and SEQ, which is caused by multiple orders of magnitude of overestimation by the ToTE bound compared with the $q=2$ bound. The remaining gap from ground truth is partly due to the triangle inequality used to accumulate single-step bounds, which ignores cancellation between steps.

We show the sixth-order results in \cref{fig:stepnumber_6th}. For $12$ qubits, the counts for ground truth, $q=2,v=0$ bound, $q=2,v=1$ bound, and \textcite{schubert2023trotter} are approximately $34$, $60$, $56$, and $700$ for BW, and $23$, $49$, $35$, and $701$ for SEQ. Thus both versions of our bound predict step counts much closer to ground truth, and increasing the slice size from $8$ to $9$ terms gives a further improvement, especially for SEQ.

At $256$ qubits, the $q=2,v=0$ bound predicts approximately $3.8\times10^3$ steps for BW and $3.0\times10^3$ for SEQ. The $q=2,v=1$ bound reduces these counts to approximately $3.7\times10^3$ and $2.1\times10^3$, respectively, whereas \textcite{schubert2023trotter} predicts about $4.5\times10^4$ steps for either circuit. The $q=2,v=1$ estimates are therefore lower than Schubert--Mendl by factors of about $12$ and $22$ for BW and SEQ. At fixed formula order and circuit ordering, these reductions in step count give the same reductions in total gate count. These results demonstrate that the improvement from our bound persists at sixth order and can reduce the predicted circuit size by more than an order of magnitude.

\subsection{De Bruijn graph-based ordering optimization}
\label{sec:num_opt}

We now test the permutations selected by our de Bruijn graph-based algorithm. We use the same $500$ random local interactions on $8$-qubit chains, with $\tau=0.1$ and $p=4,6$. The graph edges are assigned the finite-time local weights in \cref{eq:finite_local_predictor}, evaluated at the simulation step size.

We compare three strategies: ``Karp optimal'' uses a minimum-mean-weight cycle without a depth restriction; ``$d=3$ Karp optimal'' restricts the depth of each first-order sweep to three; and ``Even--odd brickwall'' always uses the fixed even-odd brickwall ordering. We repeat each selected cycle's bit pattern and truncate it to $6$ bits.
We then compute the actual error of the selected circuit. Note that this is a small-system demonstration, so the edge effects' contribution is not negligible, and the two variants of SEQ and BW may lead to different Trotter errors. 

To determine how close this error is to the optimum, we also evaluate the exact Trotter errors corresponding to all $2^6=64$ consecutive-label ordering patterns for each sample. 
We define the \emph{optimality ratio} as the selected circuit's actual error divided by the smallest actual error among all $64$ patterns. A ratio of one means that the selected circuit is optimal. The empirical cumulative distribution function (CDF)
in \cref{fig:karp_performance} gives the fraction of samples whose optimality ratio is at most the value on the horizontal axis. A curve reaching one closer to an optimality ratio of one therefore indicates better performance.

\begin{figure}[t]
\centering
\includegraphics[width=\linewidth]{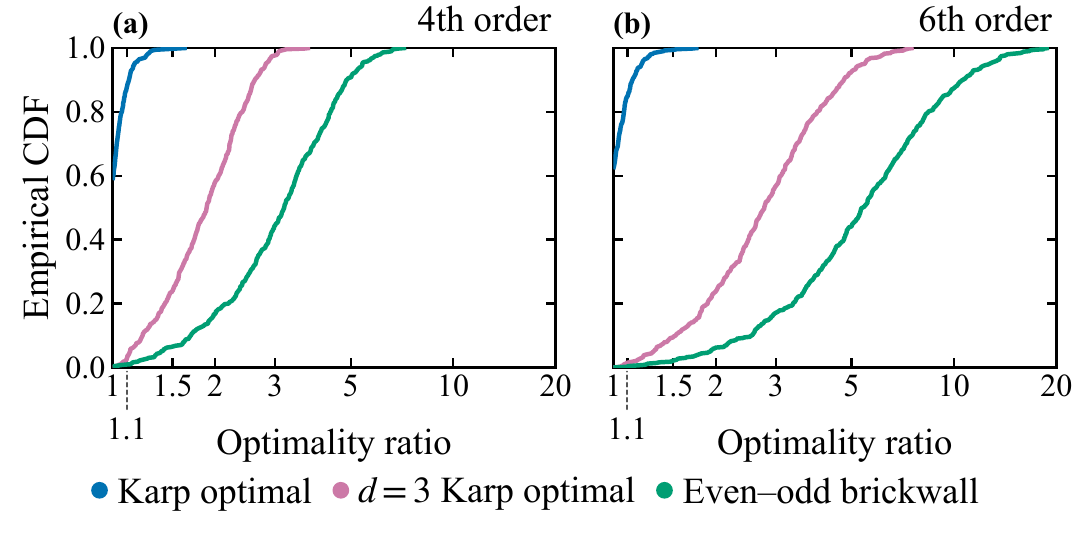}
\caption{Performance of unrestricted Karp optimization, depth-three Karp optimization, and fixed even--odd brickwall for $500$ random translation-invariant $8$-qubit chains at $\tau=0.1$. Panels (a) and (b) show fourth and sixth order. The optimality ratio is the actual error of the selected circuit divided by the minimum over all $64$ ordering patterns; the empirical CDF gives the fraction of samples at or below each ratio. Both Karp strategies use the local weights in \cref{eq:finite_local_predictor}.}
\label{fig:karp_performance}
\end{figure}

For all $500$ samples at both orders, unrestricted Karp optimization selects one of the two sequential directions. Its mean optimality ratios are $1.035$ and $1.042$ at fourth and sixth order, respectively, with $95$th percentiles $1.16$ and $1.21$. Thus the selected errors are only $3.5\%$ and $4.2\%$ above the optimum on average.

With depth restricted to three, the fourth-order optimizer selects translations of the wave-like pattern $0011\,0011\ldots$ in every sample. At sixth order, it selects this pattern in $466$ samples and translations of $110\,110\ldots$ in the remaining $34$. The mean optimality ratios are $1.92$ and $3.00$ at orders four and six, respectively, compared with $3.28$ and $6.10$ for brickwall. These results show that our local-weight optimization finds nearly optimal sequential orderings and also identifies more accurate shallow circuits than brickwall for this ensemble.

\section{Disucssions and outlook}
\label{sec:discuss}
In this work, for 1D nearest-neighbor systems, we develop a strongly ordering-aware Trotter error bound that can be efficiently evaluated on a classical computer. Next, we introduce the de Bruijn graph as a graph-theoretic representation of our error bound for Suzuki-Trotter formulas and propose an efficient classical algorithm to find the permutation minimizing our bound. Applying our strategy to random translation-invariant Hamiltonians suggests that sequential circuits have the optimal Trotter error performance for almost all cases tested in our study,
although other orderings, including brickwall, can perform best for some models. This agrees well with our intuition that deeper circuits can propagate correlations over longer distances, which may help reduce the Trotter error. For the sake of practicality, we optimize over modified de Bruijn graphs to obtain the 
best depth-restricted Trotter circuits. Our numerical results show that wave-like parallel-sequential circuits are frequently selected by the depth-restricted optimization.

One caveat of our Trotter error bounds is that they must be computed separately for each step size $\tau$, rather than simply rescaled by $\tau^{p+1}$. This dependence on $\tau$ is also one reason for their tightness: by exactly simulating the relevant local dynamics classically, our bounds capture all higher-order (and locally supported) contributions in $\tau$ beyond the leading-order term.

The ordering-aware error bounds extend directly to two-dimensional and other geometrically local systems.
When adding a new local term, we can simply include all previously added terms within a radius of order $p$ in the slice and take into account all interactions crossing the boundary of this region in the nonlocal correction. On a two-dimensional lattice, such a local region contains $O(p^2)$ qubits, so a direct implementation uses $2^{O(p^2)}$ classical computation for a local error increment. This cost remains independent of the total system size and is therefore constant for fixed product-formula order, although improving the dependence on $p$ is an interesting open problem.
Generalizing the permutation optimization to two and higher dimensions is a natural direction for future work. One challenge is to define sequential circuits in a meaningful way. Also, incorporating hardware-level considerations, including native gate sets, noise models, and fault-tolerant constraints, could better guide practical product-formula circuit design.

In brickwall implementations of Suzuki-Trotter formulas, some local evolution gates in the same second-order layer can be merged as evolutions of twice longer duration, suggesting brickwall circuits may need fewer gates to implement a single step and perform better in practice. However,
for fault-tolerant implementations,  recent results on angle-dependent Clifford+$T$ synthesis show that, in the small-angle regime, a rotation by angle $\theta$ can be synthesized to error $\delta$ using $\widetilde{O}(\theta^2/\delta)$ $T$ gates \cite{bothe2026more}. Consequently, under a fixed total synthesis-error budget, implementing two rotations of angle $\theta$ separately has essentially the same leading-order $T$-gate cost as implementing a single merged rotation by angle $2\theta$: assigning error $\delta/2$ to each separate rotation gives cost
\begin{equation}
2\,\widetilde{O}\left(\frac{\theta^2}{\delta/2}\right) = \widetilde{O}\left(\frac{(2\theta)^2}{\delta}\right).
\end{equation}
Thus, when the local evolutions are compiled into small-angle rotations, the gate-merging feature of brickwall circuits does not imply a corresponding advantage in $T$ count. A complete resource comparison between different orderings should therefore account for synthesis cost and circuit depth in addition to the number of local evolution gates.

Another direction is to design randomized product formulas informed by our results. Although Ref.~\cite{childs2019faster} found minimal advantage from fully randomized permutations for higher-order Suzuki formulas, their baseline was a variant of the low-depth ordering. Given the advantage of sequential circuits in many of our tests, a reasonable conjecture is that randomization over only variants of sequential circuits may lead to better performance than uniform randomness, and even better than deterministic sequential circuits. We leave these problems open for future research.

\section*{AI disclosure}
The research problem and the main conceptual ideas of this work were developed manually by the authors.
In particular, the authors derived the leading-order Trotter error bound in \cref{sec:leading}, developed the ordering-optimization algorithm in \cref{sec:optimization}, carried out extensive preliminary numerical experiments presented in \cref{sec:numerics}, and introduced the general framework of using local exact calculations to obtain Trotter error bounds for geometrically local systems in \cref{sec:geometry}. 
Generative AI, including ChatGPT 5.5, 5.6 Sol, and 6 Astra, subsequently derived a substantial portion of the technical details used to turn this framework into the full finite-step Trotter error bound presented in \cref{sec:geometry}. AI also assisted with later-stage numerical verification presented in \cref{sec:numerics} (based on code written by the authors), literature searches, and the writing and editing of the manuscript. All AI-generated arguments, calculations, numerical results, references, and text were independently checked and verified by the authors, who take full responsibility for the content and correctness of the paper.

\section*{Acknowledgments}

Z.L. and A.M.C. acknowledge support from the National Science Foundation (QLCI Grant No.~OMA-2120757). Z.-Y.W.~and A.V.G.~acknowledge support from the U.S.~Department of Energy, Office of Science, Accelerated Research in Quantum Computing, Fundamental Algorithmic Research toward Quantum Utility (FAR-Qu).
Z.-Y.W.~and A.V.G.~were also supported in part by NSF QLCI (awards No.~OMA-2120757 and No.~OSI-2553574), NQVL:QSTD:Design:FTL, DoE ASCR Quantum Testbed Pathfinder program (award No.~DE-SC0024220), ONR MURI,  NSF STAQ program, AFOSR MURI, and ARL (W911NF-24-2-0107). Z.-Y.W.~and A.V.G.~also acknowledge  support from the U.S.~Department of Energy, Office of Science, National Quantum Information Science Research Centers, Quantum Systems Accelerator (award No.~DE-SCL0000121).

\bibliography{ref.bib}

@article{childs2021theory,
  title = {{Theory of {T}rotter error with commutator scaling}},
  author = {Childs, Andrew M and Su, Yuan and Tran, Minh C and Wiebe, Nathan and Zhu, Shuchen},
  journal = {Physical Review X},
  volume = {11},
  number = {1},
  pages = {011020},
  year = {2021},
  publisher = {APS},
  doi = {10.1103/PhysRevX.11.011020}
}

@article{childs2019faster,
  title = {{Faster quantum simulation by randomization}},
  author = {Childs, Andrew M and Ostrander, Aaron and Su, Yuan},
  journal = {Quantum},
  volume = {3},
  pages = {182},
  year = {2019},
  publisher = {Verein zur F{\"o}rderung des Open Access Publizierens in den Quantenwissenschaften},
  doi = {10.22331/q-2019-09-02-182},
}

@article{karp1978characterization,
  title = {{A characterization of the minimum cycle mean in a digraph}},
  author = {Karp, Richard M},
  journal = {Discrete Mathematics},
  volume = {23},
  number = {3},
  pages = {309--311},
  year = {1978},
  publisher = {Elsevier},
  doi = {10.1016/0012-365X(78)90011-0}
}

@article{schubert2023trotter,
  title = {{Trotter error with commutator scaling for the {F}ermi-{H}ubbard model}},
  author = {Schubert, Ansgar and Mendl, Christian B},
  journal = {Physical Review B},
  volume = {108},
  number = {19},
  pages = {195105},
  year = {2023},
  publisher = {APS},
  doi = {10.1103/PhysRevB.108.195105},
}

@article{tranter2019ordering,
  title = {{Ordering of Trotterization: Impact on errors in quantum simulation of electronic structure}},
  author = {Tranter, Andrew and Love, Peter J and Mintert, Florian and Wiebe, Nathan and Coveney, Peter V},
  journal = {Entropy},
  volume = {21},
  number = {12},
  pages = {1218},
  year = {2019},
  publisher = {MDPI},
  doi = {10.3390/e21121218},
}

@article{childs2019nearly,
  title = {{Nearly optimal lattice simulation by product formulas}},
  author = {Childs, Andrew M and Su, Yuan},
  journal = {Physical Review Letters},
  volume = {123},
  number = {5},
  pages = {050503},
  year = {2019},
  publisher = {APS},
  doi = {10.1103/PhysRevLett.123.050503},
}

@misc{lane2025ordering,
  title = {{Ordering Matters: Structure, Accuracy and Gate Cost in Second-Order Suzuki Product Formulas}},
  author = {Lane, Matthew A and Browne, Dan E},
  jear = {2025},
  eprint = {2505.04240},
  archivePrefix = {arXiv},
  primaryClass = {quant-ph},
}

@article{huyghebaert1990product,
  title = {{Product formula methods for time-dependent {Schr{\"o}dinger} problems}},
  author = {Huyghebaert, J. and De Raedt, H.},
  journal = {Journal of Physics A: Mathematical and General},
  volume = {23},
  number = {24},
  pages = {5777--5793},
  year = {1990},
  doi = {10.1088/0305-4470/23/24/019}
}

@article{jahnke2000error,
  title = {{Error bounds for exponential operator splittings}},
  author = {Jahnke, Tobias and Lubich, Christian},
  journal = {BIT Numerical Mathematics},
  volume = {40},
  number = {4},
  pages = {735--744},
  year = {2000},
  doi = {10.1023/A:1022396519656}
}

@article{thalhammer2008high,
  title = {{High-order exponential operator splitting methods for time-dependent {Schr{\"o}dinger} equations}},
  author = {Thalhammer, Mechthild},
  journal = {SIAM Journal on Numerical Analysis},
  volume = {46},
  number = {4},
  pages = {2022--2038},
  year = {2008},
  doi = {10.1137/060674636}
}

@article{descombes2010exact,
  title = {{An exact local error representation of exponential operator splitting methods for evolutionary problems and applications to linear {Schr{\"o}dinger} equations in the semi-classical regime}},
  author = {Descombes, St{\'e}phane and Thalhammer, Mechthild},
  journal = {BIT Numerical Mathematics},
  volume = {50},
  number = {4},
  pages = {729--749},
  year = {2010},
  doi = {10.1007/s10543-010-0282-4}
}

@article{auzinger2013defect,
  title = {{Defect-based local error estimators for splitting methods, with application to {Schr{\"o}dinger} equations. {Part II}. Higher-order methods for linear problems}},
  author = {Auzinger, Winfried and Koch, Othmar and Thalhammer, Mechthild},
  journal = {Journal of Computational and Applied Mathematics},
  volume = {255},
  pages = {384--403},
  year = {2014},
  doi = {10.1016/j.cam.2013.04.043}
}

@article{childs2018toward,
  title = {{Toward the first quantum simulation with quantum speedup}},
  author = {Childs, Andrew M. and Maslov, Dmitri and Nam, Yunseong and Ross, Neil J. and Su, Yuan},
  journal = {Proceedings of the National Academy of Sciences},
  volume = {115},
  number = {38},
  pages = {9456--9461},
  year = {2018},
  doi = {10.1073/pnas.1801723115},
}

@misc{blunt2025monte,
  title = {{A {Monte Carlo} approach to bound {Trotter} error}},
  author = {Blunt, Nick S. and Ivanov, Aleksei V. and Bay-Smidt, Andreas Juul},
  year = {2025},
  eprint = {2510.11621},
  archivePrefix = {arXiv},
  primaryClass = {quant-ph},
}

@misc{maxwell2026practical,
  title = {{Practical estimation of {Trotter} error for {Hamiltonian} simulation}},
  author = {Maxwell, William and Casares, Pablo A. M. and Lang, Robert A. and Fomichev, Stepan and Arrazola, Juan Miguel and Jahangiri, Soran and Asadi, Ali and Nunez Meneses, Luis Alfredo and Germain, Thomas and Motlagh, Danial},
  year = {2026},
  eprint = {2606.30738},
  archivePrefix = {arXiv},
  primaryClass = {quant-ph},
}

@article{barthel2012quasilocality,
  author = {Barthel, Thomas and Kliesch, Martin},
  title = {{Quasilocality and Efficient Simulation of {Markovian} Quantum Dynamics}},
  journal = {Physical Review Letters},
  volume = {108},
  number = {23},
  pages = {230504},
  year = {2012},
  doi = {10.1103/PhysRevLett.108.230504},
}

@article{suzuki1991general,
  title = {{General theory of fractal path integrals with applications to many-body theories and statistical physics}},
  author = {Suzuki, Masuo},
  journal = {Journal of Mathematical Physics},
  volume = {32},
  number = {2},
  pages = {400--407},
  year = {1991},
  doi = {10.1063/1.529425}
}

@article{georgescu2014quantum,
  author = {Georgescu, I M and Ashhab, S and Nori, Franco},
  journal = {Reviews of Modern Physics},
  number = {1},
  pages = {153--185},
  publisher = {APS},
  title = {{Quantum simulation}},
  volume = {86},
  year = {2014},
  doi = {10.1103/RevModPhys.86.153},
}

@article{RevModPhys.83.863,
  author = {Polkovnikov, Anatoli and Sengupta, Krishnendu and Silva, Alessandro and Vengalattore, Mukund},
  doi = {10.1103/RevModPhys.83.863},
  issue = {3},
  journal = {Reviews of Modern Physics},
  month = aug,
  numpages = {0},
  pages = {863--883},
  publisher = {American Physical Society},
  title = {{Colloquium: Nonequilibrium dynamics of closed interacting quantum systems}},
  volume = {83},
  year = {2011},
}

@article{trotter1959product,
  author = {Trotter, Hale F},
  journal = {Proceedings of the American Mathematical Society},
  number = {4},
  pages = {545--551},
  publisher = {JSTOR},
  title = {{On the product of semi-groups of operators}},
  volume = {10},
  year = {1959},
  doi = {10.1090/S0002-9939-1959-0108732-6}
}

@article{suzuki1990fractal,
  author = {Suzuki, Masuo},
  journal = {Physics Letters A},
  number = {6},
  pages = {319--323},
  publisher = {Elsevier},
  title = {{Fractal decomposition of exponential operators with applications to many-body theories and Monte Carlo simulations}},
  volume = {146},
  year = {1990},
  doi = {10.1016/0375-9601(90)90962-N}
}

@article{Lloyd1996,
  author = {Lloyd, S},
  doi = {10.1126/science.273.5278.1073},
  journal = {Science},
  month = aug,
  pages = {1073--1078},
  title = {{Universal Quantum Simulators}},
  volume = {273},
  year = {1996},
}

@article{berry2007efficient,
  author = {Berry, Dominic W and Ahokas, Graeme and Cleve, Richard and Sanders, Barry C},
  journal = {Communications in Mathematical Physics},
  pages = {359--371},
  publisher = {Springer},
  title = {{Efficient quantum algorithms for simulating sparse Hamiltonians}},
  volume = {270},
  year = {2007},
  doi = {10.1007/s00220-006-0150-x},
}

@article{Vidal2004,
  author = {Vidal, Guifr{\'{e}}},
  issn = {00319007},
  journal = {Physical Review Letters},
  number = {4},
  pages = {040502},
  title = {{Efficient simulation of one-dimensional quantum many-body systems}},
  volume = {93},
  year = {2004},
  doi = {10.1103/PhysRevLett.93.040502},
}

@article{paeckel2019time,
  author = {Paeckel, Sebastian and K{\"{o}}hler, Thomas and Swoboda, Andreas and Manmana, Salvatore R and Schollw{\"{o}}ck, Ulrich and Hubig, Claudius},
  journal = {Annals of Physics},
  pages = {167998},
  publisher = {Elsevier},
  title = {{Time-evolution methods for matrix-product states}},
  volume = {411},
  year = {2019},
  doi = {10.1016/j.aop.2019.167998}
}

@article{wei2025state,
  author = {Wei, Zhi-Yuan and Malz, Daniel},
  doi = {10.22331/q-2026-04-21-2079},
  issn = {2521-327X},
  journal = {{Quantum}},
  month = apr,
  pages = {2079},
  publisher = {{Verein zur F{\"{o}}rderung des Open Access Publizierens in den Quantenwissenschaften}},
  title = {{State preparation with parallel-sequential circuits}},
  volume = {10},
  year = {2026},
}

@article{Schon2005,
  author = {Sch{\"{o}}n, C. and Solano, E. and Verstraete, F. and Cirac, J. I. and Wolf, M. M.},
  issn = {00319007},
  journal = {Physical Review Letters},
  number = {11},
  pages = {110503},
  title = {{Sequential generation of entangled multiqubit states}},
  volume = {95},
  year = {2005},
  doi = {10.1103/PhysRevLett.95.110503},
}

@article{Schon2007,
  author = {Sch{\"{o}}n, C. and Hammerer, K. and Wolf, M. M. and Cirac, J. I. and Solano, E.},
  issn = {10502947},
  journal = {Physical Review A},
  number = {3},
  pages = {032311},
  title = {{Sequential generation of matrix-product states in cavity QED}},
  volume = {75},
  year = {2007},
  doi = {10.1103/PhysRevA.75.032311}
}

@article{zypp,
  author = {Wei, Zhi Yuan and Malz, Daniel and Cirac, J Ignacio},
  issn = {10797114},
  journal = {Physical Review Letters},
  number = {1},
  pages = {010607},
  pmid = {35061477},
  title = {{Sequential Generation of Projected Entangled-Pair States}},
  volume = {128},
  year = {2022},
  doi = {10.1103/PhysRevLett.128.010607}
}

@article{Banuls2008,
  author = {Ba{\~{n}}uls, M. C. and P{\'{e}}rez-Garc{\'{i}}a, D. and Wolf, M. M. and Verstraete, F. and Cirac, J. I.},
  issn = {10502947},
  journal = {Physical Review A},
  number = {5},
  pages = {052306},
  title = {{Sequentially generated states for the study of two-dimensional systems}},
  volume = {77},
  year = {2008},
  doi = {10.1103/PhysRevA.77.052306},
}

@article{chen2024sequential,
  author = {Chen, Xie and Dua, Arpit and Hermele, Michael and Stephen, David T and Tantivasadakarn, Nathanan and Vanhove, Robijn and Zhao, Jing-Yu},
  journal = {Physical Review B},
  number = {7},
  pages = {075116},
  publisher = {APS},
  title = {{Sequential quantum circuits as maps between gapped phases}},
  volume = {109},
  year = {2024},
  doi = {10.1103/PhysRevB.109.075116},
}

@article{yj_circ,
  author = {Liu, Yu-Jie and Shtengel, Kirill and Smith, Adam and Pollmann, Frank},
  issue = {4},
  journal = {PRX Quantum},
  month = nov,
  numpages = {20},
  pages = {040315},
  publisher = {American Physical Society},
  title = {{Methods for Simulating String-Net States and Anyons on a Digital Quantum Computer}},
  volume = {3},
  year = {2022},
  doi = {10.1103/PRXQuantum.3.040315},
}

@article{Bravyi2006,
  author = {Bravyi, S. and Hastings, M. B. and Verstraete, F.},
  doi = {10.1103/PhysRevLett.97.050401},
  issn = {00319007},
  journal = {Physical Review Letters},
  number = {5},
  pages = {050401},
  title = {{Lieb-Robinson bounds and the generation of correlations and topological quantum order}},
  volume = {97},
  year = {2006},
}

@incollection{lieb1972finite,
  author = {Lieb, Elliott H and Robinson, Derek W},
  pages = {425--431},
  publisher = {Springer},
  title = {{The finite group velocity of quantum spin systems}},
  year = {1972},
  doi = {10.1007/978-3-662-10018-9_25},
  booktitle = {Statistical Mechanics},
  editor = {Nachtergaele, Bruno and Solovej, Jan Philip and Yngvason, Jakob},
  address = {Berlin, Heidelberg}
}

@misc{bothe2026more,
  title={{More efficient Clifford+ T synthesis for small-angle rotations and application to Trotterization}},
  author={Bothe, Marius and S{\"u}nderhauf, Christoph and Witham, Michael J and Campbell, Earl T and Blunt, Nick S},
  jear = {2026},
  eprint = {2605.31544},
  archivePrefix = {arXiv},
  primaryClass = {quant-ph},
}

\appendix
\newpage
\onecolumngrid

\crefalias{section}{appendix}

\newpage
\section{Nondegenerate leading-order error analysis}
\label{sec:nondegenerate}
In this appendix, we focus on the nondegenerate part of the leading-order Trotter error of the Suzuki-Trotter formula, as defined in \cref{sec:leading}. We use the de Bruijn graph-based optimization approach, as introduced in \cref{sec:optimization}, to give a rigorous proof based on numerical calculations for the following result.
\begin{theorem}
    For second- and fourth-order Suzuki–Trotter formulas applied to translation-invariant one-dimensional nearest-neighbor Hamiltonians, sequential orderings minimize the asymptotic mean per-term nondegenerate leading-order bound.
\end{theorem}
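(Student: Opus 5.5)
The plan is to show that, once we restrict to the nondegenerate part, the de Bruijn-graph optimization of \cref{sec:optimization} has edge weights that do not depend on the Hamiltonian up to one overall factor. That leaves a finite, Hamiltonian-independent minimum-mean-cycle computation for $p=2$ and $p=4$, which can be settled exactly.

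\emph{Step 1: one-dimensionality of the nondegenerate sector.} A nondegenerate nested commutator in the block $\tilde{\mathcal A}_{p,c,j}$ uses each of $H_j,\dots,H_{j+p}$ exactly once. I would prove that every such commutator equals $0$ or $\pm K_j$, where
\begin{equation*}
K_j\coloneqq[H_{j+p},[H_{j+p-1},\dots,[H_{j+1},H_j]\dots]].
\end{equation*}
The proof is by induction on the number of terms. The innermost commutator must involve two adjacent terms, otherwise the expression vanishes as in the ``disconnected'' argument. At each outer step, the Jacobi identity combined with $[H_a,H_b]=0$ for $|a-b|\geq2$ lets us reorder the nesting. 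The $p=2$ prototype is $[H_1,[H_2,H_3]]=[H_3,[H_2,H_1]]$, since the third Jacobi term contains $[H_3,H_1]=0$. In free-Lie-algebra language, the claim is that the multilinear component of the partially commutative Lie algebra on a path graph is one-dimensional. Consequently the nondegenerate part of each block is $\gamma_p(c')K_j$, where $\gamma_p(c')$ is a scalar that depends only on the local pattern $c'\in\{0,1\}^p$ (by \cref{lemma:substring_c}) and not on $h$. By translation invariance, $\norm{K_j}=\kappa_H$ for every bulk $j$. The per-block nondegenerate bound is therefore $\kappa_H\,\abs{\gamma_p(c')}$.

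\emph{Step 2: reduction to a universal cycle problem.} Substitute $w(c')=\abs{\gamma_p(c')}$ as the edge weights of the de Bruijn graph. By the minimum-mean-cycle lemma of \cref{sec:optimization}, the asymptotic mean per-term nondegenerate bound equals $\kappa_H$ times the minimum cycle mean of this graph. The factor $\kappa_H$ drops out, so the theorem reduces to a single graph statement: the self-loops at $0^{p-1}$ and $1^{p-1}$ attain the minimum cycle mean. If $\kappa_H=0$ the claim is trivial; TFIM is an example of this case.

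\emph{Step 3: exact computation of $\gamma_p$.} To compute $\gamma_p(c')$, treat $H_1,\dots,H_{p+1}$ as formal noncommuting variables. Expand the $p$th-order Suzuki formula $S^{(p)}_{c'}(\tau)$ and $e^{-iH_{[1,p+1]}\tau}$ to order $\tau^{p+1}$, and keep only the multilinear part, meaning words in which each variable appears exactly once. By Step 1, this part equals $\gamma_p(c')$ times the expansion of $K$. So $\gamma_p(c')$ can be read off from the coefficient of a single word, for example $H_{p+1}H_p\cdots H_1$, which occurs in $K$ with coefficient $1$. For each word, the multilinear coefficient of a product of exponentials is a sum over ordered selections of gate slots, one slot per variable, taken in the word's order. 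This is a finite sum in the gate durations $\delta_\ell$. For $p=4$ the durations involve $\mu_4=(4-4^{1/3})^{-1}$, so I would compute exactly in $\mathbb{Q}(4^{1/3})$, or alternatively in rigorous interval arithmetic. After obtaining the $2^p$ values, I would either run Karp's algorithm exactly, or more simply compare $w(0^p)$ and $w(1^p)$ with a lower bound on every cycle mean: every simple cycle has length at most $2^{p-1}$, so there are only finitely many simple cycles to enumerate.

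\emph{Main obstacle.} The step I expect to be hardest is the $p=4$ comparison. There are only $16$ edges, but the weights are irrational, and a nonsequential cycle could tie with, or come numerically very close to, the sequential self-loops. I would first try exact arithmetic in $\mathbb{Q}(4^{1/3})$ and check strict inequality for every simple cycle. A second, more conceptual difficulty is making the Step 1 induction rigorous for arbitrary nesting shapes; if that proves awkward, I would instead verify one-dimensionality for $p+1\leq5$ directly, by row-reducing the finitely many multilinear nested commutators.
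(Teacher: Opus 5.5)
Your proposal is correct and follows essentially the paper's proof. Both arguments use the same steps: the lemma that every nondegenerate nested commutator on $H_j,\dots,H_{j+p}$ is $0$ or $\pm$ a single canonical commutator, translation invariance to factor out the common norm, reduction to a Hamiltonian-independent minimum-mean-cycle problem on the de Bruijn graph with weights $|f_p(c')|$, and a computer evaluation of these weights for $p=2,4$ followed by Karp's algorithm.

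Your exact-arithmetic extraction of $\gamma_p$ from the single word $H_{p+1}\cdots H_1$ is a more rigorous version of the paper's brute-force numerics. Note, though, that ``multilinear part $=\gamma_p K$'' holds only modulo commutation of non-adjacent variables, not in the fully free algebra. This is harmless here because that word is the unique representative of its commutation class.
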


We use $\tilde{\mathcal{A}}_{p,c}^{\mathrm{ND}}$ to denote the nondegenerate contribution of the leading-order Trotter error:
\begin{equation}
\begin{aligned}
    \tilde{\mathcal{A}}_{p,c}^\mathrm{ND}  
    &\coloneqq -(-i)^{p+1}\sum_{j=1}^{n-p} \sum_{N\in \mathrm{NDNC}_j} \tau^{p+1} \epsilon_{p,c,j,N} N
\end{aligned}
\end{equation}
where 
$\epsilon_{p,c,j,N}\in \mathbb{R}$, with the common phase $-(-i)^{p+1}$ included so that the coefficients below are real, and
\begin{equation}
\begin{aligned}
    \mathrm{NDNC}_{p+1} (j,j+k) \coloneqq \{ [H_{j_1},\dots,[H_{j_{p}},H_{j_{p+1}}]]: (j_1,\dots,j_{p+1}) = \sigma(j,\dots,j+p), \sigma\in S_{p+1}\}
\end{aligned}
\end{equation}
is the set of all nondegenerate nested commutators of $H_j,H_{j+1},\dots,H_{j+p}$.

A key observation is the following lemma.
\begin{lemma}
    All nondegenerate nested commutators of $H_j,\dots,H_{j+p}$ are either $0$ or $\pm [H_{j},[H_{j+1},\dots,[H_{j+p-1}, H_{j+p}]]]$.
\end{lemma}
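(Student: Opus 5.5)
We prove the lemma by induction on the nesting depth, evaluating each nondegenerate nested commutator $[H_{j_1},[H_{j_2},\dots,[H_{j_p},H_{j_{p+1}}]]]$ from the inside out. The key structural fact is the one already used for the disconnected case in \cref{sec:leading}. At each step the partial commutator $K_m\coloneqq[H_{j_{m}},[\dots,[H_{j_p},H_{j_{p+1}}]]]$ has term support $T_m=\{j_m,\dots,j_{p+1}\}$. If $T_m$ is not a set of consecutive labels, then $K_m=0$. Indeed, at the first step $r$ where $T_r$ becomes disconnected, the newly added term $H_{j_r}$ commutes with the operator $K_{r+1}$, which is supported on qubits adjacent only to $T_{r+1}$. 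Because each label appears exactly once, no later term can repair this, and the whole commutator vanishes. So the nonzero commutators are precisely those for which every $T_m$ is an interval. Equivalently, reading inward to outward, each new label extends the current interval by one at its left end or its right end.

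The main step is to show that any such ``interval-growing'' commutator equals $\pm$ the canonical one $[H_j,[H_{j+1},\dots,[H_{j+p-1},H_{j+p}]]]$. I would encode a nonzero nondegenerate commutator by the innermost pair and the sequence of left/right extensions. I would then show that the commutator depends on this data only through a sign. This rests on two local identities. The first is antisymmetry, $[H_a,H_b]=-[H_b,H_a]$, which handles the innermost pair. The second is an exchange rule: if $K$ has term support an interval $[a,b]$, and $H_{a-1}$ and $H_{b+1}$ are the next two additions, then $[H_{a-1},[H_{b+1},K]]=[H_{b+1},[H_{a-1},K]]$. This follows from the Jacobi identity together with $[H_{a-1},H_{b+1}]=0$, which holds because $b+1-(a-1)\ge 2$. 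The Jacobi term vanishes here, so the exchange carries no sign at all.

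With these two identities, I would use the exchange rule to reorder the extension sequence so that all right extensions come first and all left extensions come last. Expressing everything from the perspective of the rightmost element, I would then repeatedly use antisymmetry at the innermost level together with a second local manipulation. Writing $[H_{a-1},K]$ with $K=[H_a,K']$ and applying Jacobi, one obtains $[H_{a-1},[H_a,K']]=[[H_{a-1},H_a],K']+[H_a,[H_{a-1},K']]$, and the last term vanishes when $a-1$ is disconnected from the support of $K'$.

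The hardest step is precisely this conversion of a left extension into the canonical right-nested form: the combinatorics of the Jacobi rearrangement and keeping track of the sign. My approach would be an induction on $p$. The outermost label is either $j$ or $j+p$. If it is $j$, the inner commutator is a nonzero nondegenerate commutator on $H_{j+1},\dots,H_{j+p}$, and the inductive hypothesis gives the canonical form directly. If it is $j+p$, I would apply the inductive hypothesis to write the inner commutator as $\pm[H_j,[H_{j+1},\dots,H_{j+p-1}]]$. I would then push $H_{j+p}$ inward through the nest, using the fact that it commutes with every $H_i$ for $i<j+p-1$. Each step of pushing it inward uses Jacobi with one vanishing term, and in the end it meets $H_{j+p-1}$, where antisymmetry contributes the final sign.
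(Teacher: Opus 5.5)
Your proof is correct and, in its final form, matches the paper's argument. Both use induction on $p$, force the outermost term to be $H_j$ or $H_{j+p}$, and reduce the $H_{j+p}$ case to $\mp$ the canonical commutator; your step of pushing $H_{j+p}$ inward via the Jacobi identity (with one vanishing term at each step) supplies the justification that the paper's displayed equality leaves implicit, while your exchange-rule and extension-reordering detour is unnecessary and can be dropped.
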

\begin{proof}
    We proceed by induction from $p=1$. It is obvious that $[H_{j+1}, H_j] = - [H_j, H_{j+1}]$. Next, for arbitrary $k$, we consider the process of assigning terms in $\{H_j,\dots, H_{j+k}\}$ into the nested commutator $[\cdot,[\cdot,\dots,[\cdot,\cdot]]]$ from the innermost layer to the outermost layer. Observe that the next term to be assigned must be connected to the already assigned terms to avoid non-overlapping support. Therefore, the outermost term (i.e., the first term in the nested commutator) must be either $H_j$ or $H_{j+k}$. This implies that there are only two types of nested commutators: $[H_j,N(j+1,\dots,j+k)]$ and $[H_{j+k},N(j,\dots,j+k-1)]$ where $N(j,\dots,j+k-1)$ is an arbitrary nondegenerate nested commutator of $H_j,\dots,H_{j+k-1}$ ($k$ terms in total). Assuming the lemma holds for $k$, every nonzero $N(j+1,\dots,j+k)$ is one of $\pm [H_{j+1},\dots,[H_{j+k-1}, H_{j+k}]]$. Therefore, every nonzero nondegenerate nested commutator of the second type is one of 
    \begin{equation}
    \begin{aligned}
        &\quad \pm[H_{j+k},[H_j,\dots,[H_{j+k-2},H_{j+k-1}]]] \\
        &= \mp [H_j, [H_{j+1},\dots,[H_{j+k-1}, H_{j+k}]]],
    \end{aligned}
    \end{equation}
    which is a nested commutator of the first type. Therefore, there is only one type of nondegenerate nested commutator for any $k$, and they are all equal to either $0$ or one of $\pm[H_j,\dots,[H_{j+k-1},H_{j+k}]]$.
\end{proof}

This lemma significantly reduces the degrees of freedom in the expression for $\tilde{\mathcal{A}}_{p,\pi}^{\mathrm{ND}}$:
\begin{equation}
    \tilde{\mathcal{A}}_{p,c}^{\mathrm{ND}} = -(-i)^{p+1}\sum_{j=1}^{n-p} \tau^{p+1}  \beta_{p,c,j} [H_j,\dots,[H_{j+p-1},H_{j+p}]].
\end{equation}
Note that here $\beta_{p,c,j}$ is the only quantity affected by the permutation. We further observe that only the relative order of $j,j+1,\dots,j+p$ matters, because other terms do not contribute to these nested commutators. Recall from \cref{def:relativeorder} that the relative order of these elements is given by a $p$-bit string $c' = (c_j,c_{j+1},\dots,c_{j+p-1})$. Now, we can write $\beta_{p,c,j}$ as a function of only $c'$: $\beta_{p,c,j} = f_p(c')$.

By the following triangle inequality, we have
\begin{equation}
\label{eq:triangle_general}
    \| \tilde{\mathcal{A}}_{p,c}^{\mathrm{ND}} \|  \leq  \tau^{p+1} \sum_{j=1}^{n-p} |\beta_{p,c,j}|   \left\|  [H_j,\dots,[H_{j+p-1},H_{j+p}]] \right\|.
\end{equation}
For the $j$th block, the nondegenerate nested commutator acts non-trivially on qubits $j,j+1,\dots,j+p+1$, and the nested commutators for blocks $j$ and $k$ generally do not commute or anticommute when $|j-k|\leq p+1$. Therefore, in general, the triangle inequality is not saturated. 
However, if a certain $c$ minimizes the upper bound on $\|  \tilde{\mathcal{A}}_{p,c}^{\mathrm{ND}} \|$, it may also lead to better practical performance. This is not rigorously guaranteed, but our numerics suggest that minimizing the bound is often a useful predictor of a low-error ordering, and for some Hamiltonians, the optimal ordering can be efficiently found.

In this appendix, we again focus on translation-invariant Hamiltonians because they are easier to analyze and arise in many scenarios of interest. A major simplification resulting from translation invariance is that all nondegenerate nested commutators are the same matrix with different supports. This means that there exists a real number $E_{H,p}$ such that
\begin{equation}
    \left\|  [H_j,\dots,[H_{j+p-1},H_{j+p}]] \right\| = E_{H,p}
\end{equation}
for all $1\leq j\leq n-p$. Therefore, the upper bound in \cref{eq:triangle_general} for translation-invariant Hamiltonians can be written as
\begin{equation}
\|\tilde{\mathcal{A}}_{p,c}^{\mathrm{ND}}\|\leq \tau^{p+1} E_{H,p} \sum_{j=1}^{n-p} |\beta_{p,c,j}|,
\end{equation}
which implies that minimizing this upper bound is equivalent to finding the consecutive-label ordering pattern $c$ that minimizes $\sum_j |\beta_{p,c,j}| = \sum_j |f_p(c_j,\dots,c_{j+p-1})|$. To solve this problem, we model the process of choosing $c$ as finding a minimum-mean-weight cycle in a de Bruijn graph, as shown in \cref{sec:optimization}.

We carry out brute-force computation of $f_p(c')$ values for the Suzuki-Trotter product formulas with $p\in\{2,4\}$. Surprisingly, we find that there are very few possible values of $f_p(c')$ when $p\in\{2,4\}$. Specifically, we find the following.
\begin{fact}
\label{fact:order2}
    $f_2(c') \in \{-1/6, 1/12\}$ 
\end{fact}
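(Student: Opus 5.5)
The plan is to compute $f_2(c')$ explicitly for all four patterns $c'\in\{00,01,10,11\}$. Take the window to be $H_1,H_2,H_3$, standing for $H_j,H_{j+1},H_{j+2}$. By \cref{lemma:substring_c} only the relative order of these three terms matters. Since $[H_1,H_3]=0$, \cref{lem:consec_label} leaves only the position of $H_2$ relative to $H_1$ and $H_3$ as the distinguishing data. This gives three genuinely different cases:
\begin{itemize}
\item the two sequential orders $123$ and $321$, i.e.\ $c'=11,00$;
\item the brickwall order with $H_2$ applied after $H_1,H_3$ in the first half-sweep;
\item the brickwall order with $H_2$ applied before them.
\end{itemize}
The last two correspond to $c'=10$ and $c'=01$, in some assignment. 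In every case the second-order formula on the window is a nested symmetric (Strang) product $e^{X/2}e^{Y}e^{X/2}$, where $X$ is the set of outermost gates and $Y$ is itself a symmetric product of the remaining gates.

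The key tool is the symmetric BCH expansion
\[
\log\bigl(e^{X/2}e^{Y}e^{X/2}\bigr)=X+Y-\tfrac{1}{24}[X,[X,Y]]+\tfrac{1}{12}[Y,[Y,X]]+\OC(5),
\]
applied with $X,Y$ proportional to $-i\tau$ times sums of the $H$'s. At third order the inner product contributes $\log$ equal to (sum of its generators) plus cubic terms involving only its own generators. Those cubic terms never contain all three of $H_1,H_2,H_3$, so they are degenerate or vanish. Hence the nondegenerate part comes solely from the outer cubic terms, with $Y$ replaced by the plain sum of the inner generators. I then keep only the commutators containing each of $H_1,H_2,H_3$ exactly once. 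Each is reduced to a multiple of $N\coloneqq[H_1,[H_2,H_3]]$ using $[H_1,H_3]=0$ together with the Jacobi identity, which gives $[H_3,[H_1,H_2]]=-N$ and $[H_1,[H_3,H_2]]=-N$.

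The three cases then read as follows.
\begin{itemize}
\item Sequential ($X=H_1$, $Y\approx H_2+H_3$). The term $[X,[X,Y]]$ is degenerate. The cross terms of $[Y,[Y,X]]$ give $[H_2,[H_3,H_1]]+[H_3,[H_2,H_1]]=N$, so the coefficient is $\tfrac1{12}$. The reversed sequential order gives the same value by the mirror symmetry of the argument.
\item Brickwall with $X=H_1+H_3$, $Y=H_2$. Only $[X,[X,Y]]$ contributes, through $[H_1,[H_3,H_2]]+[H_3,[H_1,H_2]]=-2N$, so the coefficient is $-\tfrac1{24}\cdot(-2)=\tfrac1{12}$.
\item Brickwall with $X=H_2$, $Y=H_1+H_3$. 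Only $[Y,[Y,X]]$ contributes, giving $\tfrac1{12}\cdot(-2)=-\tfrac16$.
\end{itemize}
Thus $f_2(c')\in\{-1/6,1/12\}$.

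The only delicate step is bookkeeping of the overall sign and phase. Substituting $X,Y\propto -i\tau H$ produces the factor $(-i)^3\tau^3$. Passing from $\log S$ to $\mathcal{A}=e^{-iH\tau}-S$ via the proof of \cref{lem:everythingisNC} contributes a further minus sign. Together these match exactly the prefactor $-(-i)^{p+1}\tau^{p+1}$ built into the definition of $\tilde{\mathcal{A}}^{\mathrm{ND}}_{p,c}$, so the coefficients above are the $\beta_{2,c,j}=f_2(c')$ without additional sign. I would double-check this convention, and the claim that the inner-product cubic terms are nondegenerate-free, by a direct numerical expansion on three qubit-pairs. This is also consistent with the paper's brute-force evaluation.
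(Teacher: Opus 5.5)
Your derivation is correct, including the phase bookkeeping. Write $S=e^{Z}$ with $Z=-i\tau H+Z_3+\cdots$. Then $\mathcal{A}=-Z_3+\OC(\tau^4)$, so the rational BCH coefficient of $N$ in $Z_3/(-i\tau)^3$ is exactly $\beta_{2,c,j}$. This gives $1/12$ for $c'=11,00,10$ and $-1/6$ for $c'=01$, the pattern in which $H_{j+1}$ is applied first.

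This is a different route from the paper, which gives no analytic argument for this fact. It states that the values were found by a brute-force computation of the leading-order coefficients for each $c'$.

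What your approach buys:
\begin{itemize}
\item exact rational values rather than numerically identified ones;
\item an explicit assignment of values to patterns;
\item a structural explanation that only the position of the middle term relative to its commuting neighbors matters.
\end{itemize}
It is also the $p=2$ analogue of the $01$-substring rule the paper reports empirically at fourth order. What the paper's method buys is uniformity in $p$: the same computation handles the fourth-order case, where your hand BCH analysis through the Suzuki recursion would become unwieldy.

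Two small points you could tighten. First, in your sequential case the term $[H_2,[H_3,H_1]]$ is simply zero, so only $[H_3,[H_2,H_1]]=N$ survives. Second, ``the coefficient of $N$'' is well defined for two reasons. The paper's appendix lemma shows every nondegenerate nested commutator of $H_j,H_{j+1},H_{j+2}$ is $0$ or $\pm N$. And deleting gates outside the window, i.e.\ setting those generators to zero in the formal BCH series, preserves the coefficients of all Lie monomials in $H_j,H_{j+1},H_{j+2}$. With these noted, your numerical double-check is unnecessary: the inner-product cubic terms contain at most two of the three generators, so they are manifestly free of nondegenerate contributions.
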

\begin{fact}
\label{fact:order4}
    $f_4(c') \in \{ -0.0009297, -0.0003951, 0.002650, -0.004509  \}$ (with precision up to 4 significant digits).
\end{fact}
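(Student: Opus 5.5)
The plan is a finite exact computation: determine the scalar $f_4(c')=\beta_{4,c,j}$ for each of the $2^4=16$ patterns $c'\in\{0,1\}^4$, and check that the results fall into the four listed values. By \cref{lemma:substring_c}, translation invariance, and the lemma above, it suffices to work with a single five-term window $H_1,\dots,H_5$. We treat the $H_i$ as formal variables. They are free except for the relations $H_iH_k=H_kH_i$ whenever $|i-k|\ge 2$; call the resulting algebra $\mathcal{B}$. Nothing about the concrete Hamiltonian enters, which is why $f_4$ is universal.

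\textbf{Step 1.} Write the permuted fourth-order Suzuki formula $S^{(4)}_{c'}(\tau)$ on the window as a gate list $\prod_\ell e^{-i\alpha_\ell\tau H_{j_\ell}}$ using \cref{eq:recursive_suzuki} with $\mu_4=(4-4^{1/3})^{-1}$. Here $\pi$ is any permutation with $c(\pi)=c'$, which is legitimate by \cref{lem:consec_label}.

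\textbf{Step 2.} Compute the multilinear part of $J=i\log S^{(4)}_{c'}$, meaning the part containing each of $H_1,\dots,H_5$ exactly once.
\begin{itemize}
\item The multilinear coefficient of a word $H_{\sigma(1)}\cdots H_{\sigma(5)}$ in the product is a sum over increasing gate-index tuples $\ell_1<\dots<\ell_5$ with $j_{\ell_r}=\sigma(r)$ of $\prod_r(-i\alpha_{\ell_r}\tau)$.
\item The multilinear part of $\log$ is then obtained from the standard formula $\log(1+X)=\sum_m (-1)^{m+1}X^m/m$. Restricted to multilinear words, this becomes a sum over ordered set partitions of the word into consecutive blocks, which is a finite computation.
\end{itemize}

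\textbf{Step 3.} Extract $\beta$. By the lemma above, the multilinear Lie part of $J$ in $\mathcal{B}$ equals $-(-i)^{5}\tau^5\beta\,[H_1,[H_2,[H_3,[H_4,H_5]]]]$. To read off $\beta$, fix a normal form for words in $\mathcal{B}$, for instance the lexicographically least representative under swaps of commuting letters. Then take the coefficient of the normal form of $H_1H_2H_3H_4H_5$. When the chain commutator is expanded in $\mathcal{B}$, this class appears with coefficient exactly $1$: only the term that takes the left factor at every level produces it, after accounting for commutations. Dividing by the phase $-(-i)^5$ then gives $\beta$ as a real polynomial in $\mu_4$ and $1-4\mu_4$.

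\textbf{Step 4.} Reduce the work using symmetries before evaluating.
\begin{itemize}
\item Spatial reflection $H_i\mapsto H_{6-i}$ maps $c'$ to its reversed complement and the chain commutator to $\pm$ itself.
\item The palindromic (time-symmetric) structure of Suzuki formulas relates $c'$ to its complement.
\end{itemize}
Together these cut the 16 cases down to a handful. Evaluate each remaining polynomial at $\mu_4$ to four significant digits and confirm that it lies in $\{-0.0009297,-0.0003951,0.002650,-0.004509\}$.

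As a check, run the identical pipeline at $p=2$ and confirm that it reproduces \cref{fact:order2}, $\{-1/6,1/12\}$. This validates the sign and phase conventions. An independent numerical cross-check is also available: evaluate \cref{eq:fhp} with random $2$-qubit $h$ and divide by $\|[H_1,\dots,[H_4,H_5]]\|$.

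The main obstacle is Step 3. The bookkeeping of words modulo the partial commutation relations has to be done consistently, so that contributions from all the nested commutators equal to $\pm$ the chain are collected with the correct signs. The overall phase $-(-i)^{p+1}$ must also be tracked without error. I would handle this by implementing the quotient normal form mechanically and testing it on the $p=2$ case before running $p=4$.
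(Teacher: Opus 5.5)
Your overall route is the same brute-force computation the paper reports: you evaluate all $16$ patterns exactly in the partially commutative algebra $\mathcal{B}$ and read off $\beta$ from the coefficient of $H_1H_2H_3H_4H_5$. Steps 1--3 are sound in substance. In fact the commutation class of $H_1H_2H_3H_4H_5$ is a single word, so $\beta$ is just the coefficient of that word in the multilinear part of $\log S^{(4)}_{c'}$ computed in the free algebra.

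\textbf{Step 4 uses a false symmetry.} Complementation $c'\mapsto\overline{c'}$ is \emph{not} a symmetry of $f_p$.
\begin{itemize}
\item Palindromicity of the Suzuki gate list only says that word reversal fixes $S_{c'}$, i.e.\ that $\log S_{c'}$ is odd in $\tau$. It does not relate $S_{c'}$ to $S_{\overline{c'}}$.
\item Already at second order, with $A=\prod_j e^{-iH_{\pi(j)}\tau/2}$ and $B=\prod_{j=n}^{1} e^{-iH_{\pi(j)}\tau/2}$, you have $S^{(2)}_{c'}=AB$ and $S^{(2)}_{\overline{c'}}=BA=A^{-1}(AB)A$. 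Conjugation by $A=e^{-iH\tau/2}+\OC(\tau^2)$ shifts the $\tau^3$ coefficient of the logarithm by $i\tau^3[Q,H]$, where $\tau^2 Q$ is the quadratic part of $\log A$.
\item Concretely, the symmetric BCH formula gives $f_2(01)=-1/6$ but $f_2(10)=1/12$. Likewise Fact~\ref{fact:4values} has $f_4(0101)=-0.004509$ but $f_4(1010)=0.002650$.
\end{itemize}
Reducing cases with this map would assign wrong values to half of the affected patterns. The membership conclusion might survive by accident, but those patterns would never actually have been verified. Your $p=2$ test need not expose this: taking $00$ and $01$ as representatives still reproduces $\{-1/6,1/12\}$. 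Keep only the reflection $c'\mapsto\overline{\mathrm{rev}(c')}$. It is valid, since the chain maps to $+$ itself, and it leaves $10$ orbits. Alternatively, just compute all $16$.

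\textbf{Two smaller corrections.}
\begin{itemize}
\item The phase in Step 3 is off by a factor of $i$. By the proof of \cref{lem:everythingisNC}, $\tilde{\mathcal{A}}=i\tau^{p+1}O_{p+1}$, where $J=\tau H+\tau^{p+1}O_{p+1}+\cdots$. So the multilinear part of $J$ is $-i$ times the nondegenerate block: $\tau^5\beta$ times the chain at $p=4$, and $-\tau^3\beta$ times it at $p=2$. Dividing by $-(-i)^5=i$, as you wrote, would give an imaginary number. The missing factor is the same $-i$ for every $p$, so a $p=2$ calibration does transfer correctly.
\item Your numerical cross-check via \cref{eq:fhp} does not isolate $\beta$. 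There $f_{H,p}$ is the norm of the whole error block, which contains degenerate nested commutators involving $H_1$ (for example those supported only on $H_1,H_2$). A norm also loses the sign. To cross-check numerically, isolate the part containing every one of $H_1,\dots,H_5$ by inclusion--exclusion over switching terms off. Then compare that matrix, not its norm, with the chain.
\end{itemize}
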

We also find some rules to quickly decide the value of $f_4(c')$ for any given $c'$.
\begin{fact}
\label{fact:4values}
    $$f_4(c') =     \begin{cases}
-0.0009297 & \text{if $c'$ does not contain any $01$-substring}  \\
-0.0003951 & \text{if ($c'_1 c'_2 = 01$ and $c'_3 c'_4\neq 01$ or ($c'_1 c'_2 \neq 01$ and $c'_3 c'_4 = 01$)}\\
0.002650 & \text{if $c'_2 c'_3=01$}\\
-0.004509 & \text{if $c' = 0101$}.\\
\end{cases} $$
In other words, we can decide $f_4(c')$ by inspecting the number and position of the $01$-substring in $c'$.
\end{fact}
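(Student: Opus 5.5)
The plan is to prove \cref{fact:4values} by an exact finite computation, organized so that the $16$ patterns $c'\in\{0,1\}^4$ split into a few classes by symmetry. Each class is then checked symbolically rather than in floating point. By \cref{lem:consec_label} and \cref{lemma:substring_c}, $f_4(c')$ depends only on the relative order of $H_1,\dots,H_5$ on a five-term chain. It is the coefficient of $[H_1,[H_2,[H_3,[H_4,H_5]]]]$ in $\log S^{(4)}_{c'}(\tau)$ at order $\tau^5$, with the phase convention fixed above.

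\textbf{Step 1: reduce to a multilinear coefficient.} Replace each $H_j$ by $\epsilon_j X_j$, with the $X_j$ free noncommuting symbols except that $[X_i,X_j]=0$ for $|i-j|\geq 2$. Only nondegenerate commutators are linear in every $\epsilon_j$. So $\beta$ is read off from the part of $\log S^{(4)}$ that is multilinear in $\epsilon_1\cdots\epsilon_5$. By the preceding lemma, that part is a scalar multiple of the chain commutator.

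For a product of exponentials $\prod_\ell e^{-i\delta_\ell H_{j_\ell}}$, the multilinear part of the logarithm is given by the standard formula. One sums, over all ways of choosing one gate occurrence $\ell_1,\dots,\ell_5$ for each label $1,\dots,5$, the weight $\prod_m \delta_{\ell_m}$ times the Dynkin/Solomon coefficient of the induced word order. A single word, after projection to the chain commutator, contributes $\pm 1$ or $0$ according to its ordering, and an explicit rule can be derived from the Dynkin idempotent. The upshot is that $f_4(c')$ is a polynomial in $\mu_4$ with rational coefficients. Its dependence on $c'$ enters only through which occurrences of $H_k$ precede those of $H_{k\pm1}$ inside each second-order stage.

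\textbf{Step 2: interpret the rule and use symmetries.} Note that $c'_{i}c'_{i+1}=01$ means $H_{i+1}$ precedes both neighbours in the forward sweep, so $H_{i+1}$ is a local source. The claimed rule therefore says that $f_4$ depends only on which of the interior terms $H_2,H_3,H_4$ are sources. In particular it distinguishes whether the source is central ($H_3$) or off-center ($H_2$ or $H_4$), and whether both $H_2$ and $H_4$ are sources.

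I will first establish two invariances of $f_4$. The first is under spatial reflection $j\mapsto 6-j$ combined with reversing $c'$ and complementing its bits. The second is under reversing the sweep direction. This uses the palindromic structure of $S^{(2)}$, which maps the forward sweep to the backward sweep and leaves $S^{(4)}$ invariant up to inversion $\tau\mapsto-\tau$, and the order-$5$ coefficient is odd. Next I will show that two patterns with the same set of interior sources give the same contribution to the multilinear sum of Step 1. The mechanism is that moving a non-source endpoint term past a neighbour that is itself not a source only permutes the $\delta$-weighted word sum within the symmetric palindromic stage. If this bijective argument does not close cleanly, the fallback is simply to compute all $16$ cases.

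\textbf{Step 3: exact evaluation.} For one representative of each of the four classes, evaluate the polynomial in $\mu_4=(4-4^{1/3})^{-1}$ exactly in the field $\mathbb{Q}(2^{1/3})$, for instance by symbolic expansion of the Suzuki recursion truncated at total degree $5$ in the free Lie algebra on five generators. Then confirm the four decimal values to four significant digits with rigorous interval arithmetic. For use in the theorem, what matters is the magnitudes: $|f_4|$ is smallest on the class with no $01$-substring, i.e. the sequential patterns. The minimum-mean-cycle argument then gives the all-$0$ or all-$1$ cycle, since that cycle has every edge at the minimum weight.

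The main obstacle I expect is Step 2: proving rigorously that the coefficient depends only on the source pattern, rather than verifying it case by case. If the combinatorial bijection resists, I will simply carry out the exact Step 3 computation for all $16$ patterns, which is finite and fully rigorous.
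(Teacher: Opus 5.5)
Your core plan, evaluating the universal coefficient $f_4(c')$ exactly for each pattern, is the same brute-force computation the paper relies on: it computes $f_4$ numerically for all $16$ patterns and reads off the rule. Carried out for all patterns, your version would be a fully rigorous proof. However, the second invariance in Step~2 is false, and it contradicts the statement you are proving. Reversing the sweep sends $c'$ to its bitwise complement $\bar{c}'$, yet the statement gives $f_4(0101)=-0.004509$ but $f_4(1010)=0.002650$, and $f_4(0100)=-0.0003951$ but $f_4(1011)=0.002650$.

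The palindromic structure only yields $S^{(2)}_{c}(-s)=S^{(2)}_{c}(s)^{-1}$ for the \emph{same} pattern; it never relates $c$ to $\bar{c}$. Write $S^{(2)}_c(s)=F(s)B(s)$, with $F$ the forward sweep and $B$ its mirror. Then the reversed formula is $S^{(2)}_{\bar c}(s)=B(s)F(s)=F(s)^{-1}S^{(2)}_c(s)F(s)$, a conjugation by an $s$-dependent unitary. If $Y_2$ is the $s^2$ coefficient of $\log F(s)$ (a sum of neighbouring commutators), the conjugation shifts the $s^3$ coefficient of the logarithm by $[-iH,Y_2]$. Already at second order this gives $f_2(01)=-\tfrac16$ but $f_2(10)=\tfrac1{12}$. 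In $S^{(4)}$ the conjugators $F(\mu_4\tau)$ and $F((1-4\mu_4)\tau)$ of different stages do not telescope either. The asymmetry is structural: a source ($01$) keeps its two half-gates on the outside of the palindrome, whereas a sink ($10$) commutes to the centre and merges into one full gate. That is precisely why the rule is stated in terms of $01$ alone.

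Consequently, the only symmetry you have is the spatial reflection $c'\mapsto\overline{\mathrm{rev}(c')}$. It is valid because $[H_5,[H_4,[H_3,[H_2,H_1]]]]=+[H_1,[H_2,[H_3,[H_4,H_5]]]]$, but it splits $\{0,1\}^4$ into $10$ orbits, not $4$. Your ``depends only on the set of sources'' lemma is currently only a heuristic. Unless you prove it, evaluating one representative per class establishes nothing about the other members, so you must run the exact computation on all $10$ orbit representatives. That fallback is sound, and it is the paper's argument made exact.

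Your closing remark is also incorrect. By the values in the statement, $|f_4|$ is smallest ($0.0003951$) on the patterns with a single end $01$, not on the class without $01$ ($0.0009297$). So the all-$0$ cycle does not have every edge at minimal weight. Sequential optimality needs a cycle-level count instead. Take a periodic pattern with $P$ windows per period, $m$ occurrences of $01$, and $t$ windows equal to $0101$. Its total weight is $0.0009297\,P+0.0006511\,m+0.0046485\,t$, because each $01$ forces one middle window at $0.002650$ against at most two cheap end windows. This is minimized only at $m=0$.
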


Running Karp's algorithm using these data for unbounded depth and depth $d$ gives the following results.
\begin{fact}
\label{fact:seq}
    For $p\in\{2,4\}$, a permutation minimizing the mean per-term bound on the nondegenerate leading-order error in 1D nearest-neighbor systems is the sequential permutation, i.e., $c=11\dots 1$ or $c=00\dots 0$.
\end{fact}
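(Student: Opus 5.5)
The approach is to reduce the statement to a combinatorial minimization over ordering patterns, and then settle that minimization using \cref{fact:order2,fact:order4,fact:4values}. Translation invariance gives $\|[H_j,\dots,[H_{j+p-1},H_{j+p}]]\|=E_{H,p}$ for every $j$. Hence the nondegenerate bound equals $\tau^{p+1}E_{H,p}\sum_j |f_p(c_j,\dots,c_{j+p-1})|$, and the $H$-dependence factors out completely. By the minimum-mean-cycle lemma of \cref{sec:optimization}, the asymptotic mean per-term bound is then the minimum cycle mean of the de Bruijn graph with edge weights $|f_p(c')|$. It therefore suffices to show that the self-loops at $0^{p-1}$ and $1^{p-1}$ attain the minimum cycle mean. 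As a numerical cross-check, I would also run Karp's algorithm on these $2^{p-1}$-vertex graphs, which is what \cref{fact:seq} records. I would, however, back this with a short hand argument, so that the result does not rest on floating-point optimization beyond the computed values of $f_p$.

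For $p=2$, I would first pin down which patterns give which value in \cref{fact:order2}. Expanding the BCH formula of the Strang splitting for $H_1,H_2,H_3$ under each of the four relative orders, I expect $f_2(00)$ and $f_2(11)$ to have absolute value $1/12$, and the patterns $01,10$ to carry $-1/6$. Since $1/12$ is the smaller absolute value, a pattern consisting only of sequential edges uses minimum-weight edges throughout. The sequential walk is then trivially optimal, because every edge of any walk has weight at least $1/12$.

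For $p=4$ the argument is not trivial. By \cref{fact:4values}, sequential windows cost $a=0.0009297$, but windows with a single $01$ at an end position cost only $0.0003951<a$. So I would use an amortized charging argument over a long periodic $c$. Write the excess of a window as $|f_4(c')|-a$. Each occurrence of $01$ at positions $(i,i+1)$ of $c$ appears at offset $2$ (i.e.\ as $c'_2c'_3$) in exactly one bulk window. Such a window cannot contain a second $01$, because $c'_2=0$ and $c'_3=1$ forbid $c'_1c'_2=01$ and $c'_3c'_4=01$. Hence these offset-2 windows are distinct for distinct occurrences, and each has excess $0.002650-a\approx0.00172$. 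The same occurrence appears at offset $1$ in one window and at offset $3$ in one window. Each of these can have excess at worst $0.0003951-a\approx-0.000535$; if both offsets are occupied the window is $0101$, with positive excess $0.004509-a$. Charging the two end-windows to the occurrence, every window with negative excess is charged to some occurrence. Each occurrence then carries net excess at least $0.00172-2(0.000535)>0$, and windows with no $01$ carry zero excess. Therefore any cycle containing a $01$ has mean strictly above $a$. Windows without $01$ are exactly of the form $1^r0^{4-r}$; a cycle avoiding $01$ entirely must be constant, i.e.\ the pattern $0\cdots0$ or $1\cdots1$. This identifies sequential orderings as the minimizers.

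The main obstacle is rigor in the inputs rather than in the combinatorics: the proof rests on the exact values in \cref{fact:order2,fact:order4,fact:4values}. I would therefore compute $f_p(c')$ symbolically, via the BCH expansion of the recursive Suzuki formula restricted to the coefficient of $[H_j,[H_{j+1},\dots,H_{j+p}]]$, or else with interval arithmetic. This ensures that the margin $0.00172-0.00107>0$ is certified. The charging argument itself only requires a careful check that the offset bookkeeping is injective and that boundary windows contribute $\OC(1)$, which vanishes in the mean.
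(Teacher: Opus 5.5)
Your proposal is correct. It shares the paper's reduction, but the decisive step is genuinely different. The paper tabulates $f_p(c')$ by brute force (\cref{fact:order2,fact:order4,fact:4values}) and then simply runs Karp's algorithm on the de Bruijn graph with weights $|f_p|$. Optimality of the self-loops at $0^{p-1}$ and $1^{p-1}$ is thus certified only by the algorithm's output. You keep the same reduction: translation invariance factors out $E_{H,p}$, and the minimum-mean-cycle lemma reduces the question to a cycle mean. You then replace the Karp step by an argument a reader can verify.

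Write $a=0.0009297$, $b=0.0003951$, $d=0.002650$, $e=0.004509$ for the absolute values in \cref{fact:4values}. On a periodic string with $K\geq 1$ occurrences of $01$ per period, your charging gives the following. Each occurrence owns its offset-2 window, of excess $d-a$. It also owns at most two single-end windows, each of excess $b-a$. Hence the per-period excess is at least $K(d+2b-3a)\approx K\cdot 6.5\times10^{-4}>0$. The period-three patterns $(001)^\infty$ and $(011)^\infty$ have mean exactly $(d+2b)/3$, so this margin is tight.

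Your route buys several things. It explains why sequential orderings win, and it shows the minimizing cycles are unique, since only the constant patterns reach mean $a$. It also pins down how accurately the $f_4$ values must be known, namely enough to certify $d+2b>3a$. The paper's computational route is mechanical and carries over without new ideas to the depth-restricted \cref{fact:ps} and to other orders.

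Two small corrections.

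First, your expectation for $p=2$ is off. From the symmetric BCH expansion, the nondegenerate coefficient is $-1/6$ exactly when $H_{j+1}$ comes first in $\pi$ among $H_j,H_{j+1},H_{j+2}$, i.e.\ only for $c'=01$. The patterns $00$, $10$, $11$ all give $1/12$. This does not affect your $p=2$ argument, which only needs $|f_2(00)|=|f_2(11)|=1/12=\min|f_2|$. It is also consistent with the ``$01$-substring'' structure of \cref{fact:4values}.

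Second, in the $p=4$ bookkeeping a $0101$ window contains two occurrences. Do not count its excess once for each occurrence: assign half to each occurrence, or drop it. The same care applies for period $2$, where an occurrence's offset-1 and offset-3 windows coincide. Either way that window has positive excess $e-a$, so the per-occurrence lower bound $d+2b-3a$ is unaffected.
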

\begin{fact}
\label{fact:ps}
    For $p\in\{2,4\}$ and $d\geq2$, a depth-$d$ permutation minimizing this mean per-term bound is the $2(d-1)$-TI wave-like permutation, i.e., $c=1^{d-1} 0^{d-1} 1^{d-1}\dots$ or $c=0^{d-1} 1^{d-1} 0^{d-1}\dots$.
\end{fact}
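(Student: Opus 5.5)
The plan is to reduce the depth-restricted minimum-mean-cycle problem to a purely combinatorial statement about run lengths of $c$. I would first read off from \cref{fact:order2,fact:4values} that the weight $|f_p(c_j,\dots,c_{j+p-1})|$ depends only on where the substring $01$ occurs inside the window. For $p=2$ I expect, checking on the four $2$-bit strings, that $|f_2(01)|=1/6$ and $|f_2(c')|=1/12$ otherwise. The total cost is then
\begin{equation*}
\sum_j |f_2(c_j,c_{j+1})| = \tfrac{1}{12}(n-2) + \tfrac{1}{12}\,N_{01}(c),
\end{equation*}
where $N_{01}(c)$ counts occurrences of $01$ in $c$.

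The depth constraint of the preceding lemma says every maximal run of equal bits in $c$ has length at most $d-1$. Every $01$ marks the boundary between a $0$-run and the following $1$-run, so $N_{01}$ is essentially the number of $(0\text{-run},1\text{-run})$ pairs. Hence
\begin{equation*}
N_{01}\geq \frac{n-1}{2(d-1)}-\OC(1),
\end{equation*}
with equality asymptotically exactly when all runs have length $d-1$. This is the $2(d-1)$-TI wave-like pattern, and its two phases give $c=1^{d-1}0^{d-1}\cdots$ and $c=0^{d-1}1^{d-1}\cdots$. Dividing by $n$ and comparing with the cycle mean of that pattern proves the $p=2$ case via the minimum-cycle-mean lemma of \cref{sec:optimization}.

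For $p=4$ I would do the same bookkeeping with \cref{fact:4values}, writing $a=0.0009297$ for the base value. A window with no $01$ costs $a$. An isolated occurrence of $01$ at position $i$ of $c$ is seen by three windows: at offset $1$–$2$, offset $2$–$3$, and offset $3$–$4$. It therefore changes the total by
\begin{equation*}
e=2(0.0003951-a)+(0.002650-a)\approx 6.5\times 10^{-4}>0 .
\end{equation*}
The only interaction between occurrences is the window $0101$, which needs two $01$'s at distance $2$, i.e. two consecutive runs of length $1$. I would compute the correction $\delta$ that such a window adds relative to the additive count, and show that $e\,N_{01}+\delta\,N_{0101}$ is still minimized by minimizing $N_{01}$. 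For $d\geq 3$, runs of length $1$ only increase $N_{01}$ beyond the run-length lower bound, and I would check that $e+\delta$ per extra occurrence remains positive. For $d=2$ every run has length $1$, so the constraint forces $c=0101\cdots$, which is the wave pattern with $d-1=1$ (brickwall), and there is nothing to optimize. The $p=4$ cost thus reduces to the same run-length counting as for $p=2$, giving the same wave-like minimizers. The unrestricted case $d\to\infty$ recovers \cref{fact:seq}.

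The main obstacle is making the $p=4$ decomposition into base, per-occurrence and pairwise terms exact, including windows near run boundaries and the sign of $\delta$. Since all values are explicit, I would first try to verify it by a finite potential-function check on the de Bruijn graph. Concretely, I would look for vertex potentials $\phi$ such that every edge weight satisfies $w(s,t)\geq \lambda_d+\phi(t)-\phi(s)$, where $\lambda_d$ is the wave cycle mean. This certifies $\lambda_d$ as the minimum cycle mean. Because the weights depend only on $4$-bit windows and the run-length constraint, the potentials should be expressible uniformly in $d$ through the current run length. As a fallback, I would run Karp's algorithm for small $d$ and argue that the run-length structure makes the pattern stabilize for larger $d$.
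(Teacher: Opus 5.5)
Your argument is correct, and it takes a genuinely different route from the paper.

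\textbf{The paper's route.} The paper's justification is purely computational. It tabulates the $2^p$ values of $f_p$ (Facts~\ref{fact:order2}--\ref{fact:4values}), places $|f_p|$ on the depth-restricted de Bruijn graph with infinite weight on forbidden edges, and runs Karp's algorithm, reading off the wave cycle. This is mechanical and carries over unchanged to the full weights $f_{H,p}$, where no clean structure is available. However, a computer check of this kind certifies the claim only for the finitely many $d$ actually run.

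\textbf{Your route.} You instead use the structure of Fact~\ref{fact:4values} to write the $p=4$ cost exactly as $(n-4)a+eN_{01}+\delta N_{0101}+\mathcal{O}(1)$, and then reduce to a run-length count. This covers every $d\geq 2$ at once and gives uniqueness up to translation. It also explains why the wave wins: it minimizes the density of $01$ boundaries under the run constraint.

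\textbf{The step you flag as the main obstacle is immediate.} A $4$-bit window can contain $01$ only at offsets $1$, $2$, $3$, and only offsets $1$ and $3$ can co-occur. So Fact~\ref{fact:4values} is an exhaustive case split and your decomposition is exact, with $\delta=0.004509-(2\cdot 0.0003951-0.0009297)\approx 4.65\times10^{-3}>0$. Since $e,\delta>0$ and the wave has $N_{0101}=0$ for $d\geq 3$, it minimizes both terms simultaneously. No trade-off analysis or potential-function certificate is needed.

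\textbf{The $p=2$ assignment.} The one input you must genuinely compute is which $c'$ carries which value, since Fact~\ref{fact:order2} does not state it. It comes out as you expect. Apply $\log(e^{X/2}e^{Y}e^{X/2})=X+Y-\tfrac{1}{24}[X,[X,Y]]+\tfrac{1}{12}[Y,[Y,X]]+\cdots$ from the outermost exponent inward. If $H_{j+1}$ is outermost among the three (i.e.\ $c'=01$), both inner orderings give $-[H_j,[H_{j+1},H_{j+2}]]$, so the nondegenerate coefficient has magnitude $\tfrac16$. Otherwise one of the two inner orderings vanishes, and the magnitude is $\tfrac{1}{12}$.
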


The above facts constitute a computer-based proof of the optimality of sequential (for unbounded depth) and wave-like parallel-sequential (for bounded depth) permutations for the mean per-term bound on nondegenerate leading-order error contributions at orders two and four (though not for the complete finite-chain error).

\end{document}